\renewcommand\bf\bfseries
\addspace\printfield{pages}\addspace
\newcommand{\leqnomode}{\tagsleft@true\let\veqno\@@leqno}
\newcommand{\reqnomode}{\tagsleft@false\let\veqno\@@eqno}
\numberwithin{equation}{section}
\newcommand\myshade{85}
\colorlet{mylinkcolor}{violet}
\colorlet{mycitecolor}{YellowOrange}
\colorlet{myurlcolor}{Aquamarine}
\definecolor{ct_black}{HTML}{000000}
\definecolor{ct_orange}{HTML}{ED872D}
\definecolor{ct_purple}{HTML}{7A68A6}
\definecolor{ct_blue}{HTML}{348ABD}
\definecolor{ct_turquoise}{HTML}{188487}
\definecolor{ct_red}{HTML}{E32636}
\definecolor{ct_pink}{HTML}{CF4457}
\definecolor{ct_green}{HTML}{467821}
\definecolor{ct2_green}{HTML}{9FF781}
\definecolor{ct2_green_dark}{HTML}{088A08}
\theoremstyle{plain}
\newtheorem{thm}{\protect\theoremname}[section]
\theoremstyle{plain}
\newtheorem{lem}[thm]{\protect\lemmaname}
\theoremstyle{plain}
\newtheorem{cor}[thm]{\protect\corollaryname}
\theoremstyle{plain}
\theoremstyle{plain}
\newtheorem{assumption}[thm]{\protect\assumptionname}
\theoremstyle{remark}
\newtheorem{rem}[thm]{\protect\remarkname}
\theoremstyle{definition}
\newtheorem{defn}[thm]{\protect\definitionname}
\theoremstyle{plain}
\newtheorem{example}[thm]{\protect\examplename}
\providecommand{\assumptionname}{Assumption}
\providecommand{\claimname}{Claim}
\providecommand{\corollaryname}{Corollary}
\providecommand{\definitionname}{Definition}
\providecommand{\lemmaname}{Lemma}
\providecommand{\propositionname}{Proposition}
\providecommand{\remarkname}{Remark}
\providecommand{\theoremname}{Theorem}
\providecommand{\examplename}{Example}
\crefname{section}{Section}{Sections}
\crefname{appendix}{Appendix}{Appendices}
\crefname{figure}{Figure}{Figures}
\crefname{assumption}{Assumption}{Assumptions}
\crefname{thm}{Theorem}{Theorems}
\crefname{lem}{Lemma}{Lemmas}
\crefname{table}{Table}{Tables}
\newtheorem*{lem*}{\protect\lemmaname}
\newcommand{\ee}{\operatorname{e}}
\newcommand{\ii}{\operatorname{i}}
\newcommand{\Mat}{\operatorname{Mat}}
\newcommand{\ZZ}{\mathbb{Z}}
\newcommand{\bS}{\mathbb{S}}
\newcommand{\NN}{\mathbb{N}}
\newcommand{\RR}{\mathbb{R}}
\newcommand{\CC}{\mathbb{C}}
\newcommand{\FF}{\mathbb{F}}
\newcommand{\calA}{\mathcal{A}}
\newcommand{\calB}{\mathcal{B}}
\newcommand{\calC}{\mathcal{C}}
\newcommand{\calF}{\mathcal{F}}
\newcommand{\calG}{\mathcal{G}}
\newcommand{\calU}{\mathcal{U}}
\newcommand{\calSU}{\mathcal{S}}
\newcommand{\calH}{\mathcal{H}}
\newcommand{\calK}{\mathcal{K}}
\newcommand{\calL}{\mathcal{L}}
\newcommand{\calI}{\mathcal{I}}
\newcommand{\calP}{\mathcal{P}}
\newcommand{\calS}{\mathcal{S}}
\newcommand{\bbLambda}{\mathbb{\Lambda}}
\newcommand{\bbDelta}{\mathbb{\Delta}}
\newcommand{\ti}[1]{\widetilde{#1}} 
\newcommand{\LamNT}{\Lambda\mathrm{nt}}
\newcommand{\LamNTSAU}{\calSU^{\LamNT}}
\newcommand\norm[1]{\left\lVert#1\right\rVert}
\newcommand{\ip}[2]{\langle #1, #2 \rangle}
\DeclareMathOperator*{\slim}{s-lim}
\newcommand{\dif}{\operatorname{d}}
\newcommand{\szpan}{\operatorname{span}}
\newcommand{\ve}{\varepsilon}
\newcommand{\vf}{\varphi}
\newcommand{\Id}{\mathds{1}}
\newcommand{\HH}{\mathbb{H}}
\newcommand{\Closed}[1]{\mathrm{Closed}(#1)}
\newcommand{\quotes}[1]{``#1''}
\newcommand{\sgn}{\operatorname{sgn}}
\newcommand{\findex}{\operatorname{ind}}
\newcommand{\im}{\operatorname{im}}
\newcommand{\eq}[1]{\begin{align*}#1\end{align*}}
\newcommand{\eql}[1]{\begin{align}#1\end{align}}
\newcommand{\polar}{\operatorname{pol}}
\newcommand{\calT}{\mathcal{T}}
\newcommand{\DD}{\mathbb{D}}
\newcommand{\Schur}{Z}
\title{Topological Classification of Insulators:\\ I. Non-interacting Spectrally-Gapped One-Dimensional Systems}
\author{\href{mailto:jc1220@math.princeton.edu}{Jui-Hui Chung}\\
	{\footnotesize Department of Applied Mathematics, Princeton University }\\
	\href{mailto:jacobshapiro@princeton.edu}{Jacob Shapiro}\\
	{\footnotesize Department of Mathematics, Princeton University}
}
\begin{document}
	\reqnomode
	
	\maketitle
	\begin{abstract}
		We study non-interacting electrons in disordered one-dimensional materials which exhibit a spectral gap, in each of the ten Altland-Zirnbauer symmetry classes. We define an appropriate topology on the space of Hamiltonians so that the so-called strong topological invariants become complete invariants yielding the one-dimensional column of the Kitaev periodic table, but now derived \emph{without} recourse to K-theory. We thus confirm the conjecture regarding a one-to-one correspondence between topological phases of gapped non-interacting 1D systems and the respective Abelian groups $\Set{0},\ZZ,2\ZZ,\ZZ_2$ in the spectral gap regime. The main tool we develop is an equivariant theory of homotopies of \emph{local} unitaries and orthogonal projections. Moreover, we discuss an extension of the unitary theory to partial isometries, to provide a perspective towards the understanding of strongly-disordered, mobility-gapped materials.
	\end{abstract}
	\tableofcontents
	\section{Introduction}
	Topological insulators \cite{Hasan_Kane_2010} are exotic materials which insulate in their bulk, but may be excellent conductors along their boundary. The quintessential example is Galium-Arsenic in two dimensions, at very low temperatures and strong perpendicular magnetic fields, which exhibits the celebrated integer quantum Hall effect (IQHE) \cite{vKDP1980_PhysRevLett.45.494}. Beyond the aforementioned typical bulk-boundary behavior \cite{Graf07}, another defining feature of these materials is that they exhibit observables which are quantized and experimentally stable--a manifestation of macroscopic quantum mechanical effects. Mathematically this phenomenon suggests a global, topological description and indeed Nobel prizes have been awarded \cite{Nobel2016} for the association of the integer quantum Hall effect with the mathematical theory of algebraic topology, see e.g. \cite{TKNN_1982_PhysRevLett.49.405,ASS_1983_PhysRevLett.51.51}. A decisive step was taken by Kitaev \cite{Kitaev2009} who devised a periodic \cref{table:Kitaev} of insulators organized by the Altland-Zirnbauer symmetry classes \cite{AltlandZirnbauer1997} and patterned after K-theoretic Bott periodicity. The classification problem which is in present focus here enjoyed much attention recently in the mathematics literature, from various perspectives, see e.g. \cite{FreedMoore2013,DeNittisGomi2015,Kubota2016,Thiang2016,BourneCareyRennie2016,PSB_2016,GrossmannSchulz-Baldes2016,KatsuraKoma2018,Kellendonk2019,AlldridgeMaxZirnbauer2020,BourneSchulz-Baldes2020,BourneOgata2021,AvronTurner2022,GontierMonacoSolal2022}.
	
	\begin{table}[b]
		\begin{center}
			\begin{tabular}{c|ccc|cccccccc}
				\multicolumn{4}{c|}{Symmetry } & \multicolumn{8}{c}{dimension} \\
				\multicolumn{1}{c}{AZ} &$\hspace{1.5mm}\Theta\hspace{1.5mm} $ &
				$\hspace{1.5mm} \Xi\hspace{1.5mm} $ &
				$\hspace{1.5mm} \Pi\hspace{1.5mm} $ &
				$1$   &  $2$ &  $3$ &  $4$ &  $5$ & $6$ & $7$& $8$ \\
				\hline
				A & $0$ & $0$ & $0$  &$0$& $\mathbb{Z}$ &$0$& $\mathbb{Z}$ &$0$& $\mathbb{Z}$ &$0$& $\mathbb{Z}$\\
				AIII & $0$ & $0$ & $1$ & $\mathbb{Z}$ &$0$& $\mathbb{Z}$ &$0$& $\mathbb{Z}$ &$0$& $\mathbb{Z}$& $0$\\
				\hline
				AI & $1$ & $0$ & $0$  &$0$&$0$&$0$&$\mathbb{Z}$&$0$&$\mathbb{Z}_2$&$\mathbb{Z}_2$& $\mathbb{Z}$ \\
				BDI & $1$ &$1$ &$1$ & $\mathbb{Z}$ &$0$&$0$&$0$&$\mathbb{Z}$&$0$&$\mathbb{Z}_2$& $\mathbb{Z}_2$\\
				D & $0$ &$1$ &$0$ & $\mathbb{Z}_2$& $\mathbb{Z}$ &$0$&$0$&$0$&$\mathbb{Z}$&$0$&$\mathbb{Z}_2$\\
				DIII&$-1$ &$1$ &$1$ &$\mathbb{Z}_2$& $\mathbb{Z}_2$& $\mathbb{Z}$ &$0$&$0$&$0$&$\mathbb{Z}$&$0$\\
				AII & $-1$ & $0$ & $0$ &$0$&$\mathbb{Z}_2$& $\mathbb{Z}_2$& $\mathbb{Z}$ &$0$&$0$& $0$&$\mathbb{Z}$\\
				CII & $-1$ &$-1$ & $1$&$\mathbb{Z}$ & $0$&$\mathbb{Z}_2$& $\mathbb{Z}_2$& $\mathbb{Z}$ &$0$&$0$&$0$ \\
				C & $0$ & $-1$& $0$ & $0$ &$\mathbb{Z}$ &$0$&$\mathbb{Z}_2$& $\mathbb{Z}_2$& $\mathbb{Z}$ &$0$& $0$\\
				CI & $1$ & $-1$ & $1$& $0$ & $0$&$\mathbb{Z}$&$0$&$\mathbb{Z}_2$& $\mathbb{Z}_2$& $\mathbb{Z}$& $0$ \\
			\end{tabular}
		\end{center}
		\caption{The Kitaev periodic table. The entries stand for the respective K-theory groups in a given dimension and symmetry class. The present paper focuses on the one-dimensional column. See \cref{sec:bulk insulators classification} for explanation of $\Theta,\Xi,\Pi$.}
		\label{table:Kitaev}
	\end{table}

	A first presentation of the association of quantum mechanics of insulators with algebraic topology would assume periodicity of the materials involved, which leads very naturally to the theory of equivariant vector bundles and their classification via K-theory, culminating in, e.g., \cite{FreedMoore2013}. However, while vector bundle theory is mathematically classical, a periodic model cannot describe strong-disorder, an important feature of topological insulators (see below). 
	This has been recognized early on by Bellissard and collaborators \cite{Bellissard_1994JMP....35.5373B} who have laid important ground work in the 1990s to build bridges from physics into K-theory of C-star algebras and use Connes' tools from non-commutative geometry to study what they refer to as the non-commutative Brillouin zone. And yet using K-theory bears a price: it allows homotopies to explore additional internal degrees of freedom, and it only studies \emph{relative} phases. These two points mean the classification is more fuzzy than one would hope for (this point receives some attention in \cite{DeNittisGomi2015}). For this reason one might argue that K-theoretic classifications do not offer a one-to-one correspondence between topological phases of gapped systems and the respective Abelian K-theory groups. More severely, there does not seem to be a way to extend it to the strongly-disordered mobility gap regime--the description remains in the disordered spectral gap regime. Moreover, K-theory of C-star algebras with real or quaternionic structures is difficult to handle since (as far as we are aware) its dual, which is necessary to study index pairings, is not defined. These latter two points are somewhat addressed by Kasparov's KK-theory, which is however vastly more complicated and (as far as we are aware) still cannot address the mobility gap problem.
	
	Let us expand on the mobility gap regime briefly. The physical situation of materials being insulators is encoded mathematically by operators that have a certain gap. In the simplest scenario this is a spectral gap about the Fermi energy. But it turns out that when strong disorder is present (i.e., under Anderson localization) this spectral gap closes and the Fermi energy is immersed in an interval of localized states which cannot contribute to electric conductance, a situation referred to as the mobility gap regime \cite{Aizenman_Graf_1998}. These localized states are however essential in order to explain many important features of topological insulators, e.g., why plateaus emerge in the integer quantum Hall effect; see \cite{EGS_2005,Graf_Shapiro_2018_1D_Chiral_BEC,Shapiro2019,Shapiro20,BSS23} for further discussion of the mobility gap regime. In the spectral gap regime, the Fermi projection is a continuous function of the Hamiltonian and thus belongs to the C-star algebra generated by it. This makes the spectral gap regime amenable to analysis by K-theory of C-star algebras. On the other hand, in the mobility gap regime, the Fermi projection is merely a measurable function of the Hamiltonian.
	
	It is mainly for the study of the mobility gap problem that it is important to be able to build alternative perspectives to the classification problem that do not rely on algebraic topology of classical manifolds (as in the periodic case) or on K-theory of C-star algebras (as in the disordered but spectrally gapped case), and this is the main point of the present paper: we present a first K-theoretic-free classification of disordered materials to our knowledge. Moreover, in our approach the question of ``which topology to define on the space of `insulators'\thinspace'' becomes explicit and is brought to the foreground, since without it one cannot even start the analysis.
	
	It remains unclear just what physical (better yet, experimentally relevant) role the choice of this topology bears, and it is also interesting to ask whether this choice is necessarily unique (we presume it is not). Be that as it may, since topological insulators are presumed useful for quantum computing \cite{KempeKitaevRegev2006}, where it is precisely the topological stability properties that lend themselves to be of great utility, it seems that exploring the foundations and boundaries of these stability properties could maybe help answer edge cases of quantum engineering problems.
	
	In this paper we build the first step of this research program, which is the most straight-forward, namely, understanding non-interacting one-dimensional spectrally-gapped disordered systems via homotopies and without K-theory. This has the appeal that it is simpler--though this is a matter of taste--than the existing K-theoretic classifications, but also, that it allows us to start working on the next steps in the aforementioned program:
	\begin{enumerate}
		\item Higher dimensions in the spectral-gap, non-interacting case, and a more detailed study of higher dimensional locality (see \cref{sec:higher odd d}).
		\item The strongly-disordered mobility gap regime (see \cref{sec:mobility gap}).
		\item The interacting case (and within it the fractional quantum Hall effect).
		\item Understanding interactions in the strongly-disordered regime, and hence also many-body localization (MBL).
	\end{enumerate}
	It is mainly the second item which we feel is amenable to the methods developed here.
	
	Let us briefly describe the mathematical novelty of this paper, to be presented in \cref{sec:local unitaries classification,sec:local SA unitaries classification}. Quantum mechanical Hamiltonians, beyond being self-adjoint, must obey a certain kind of \emph{locality constraint} which is central in the present paper. Indeed it is that constraint which elevates the analysis from pure functional analysis into physics. This constraint roughly corresponds to the fact that there is no action at a distance. Geometrically this can be understood as a non-commutative analog of a regularity constraint on symbols, since, if our systems were translation invariant it would correspond to continuity of the symbol via a Riemann-Lebesgue lemma. Hence we are concerned with spaces of \emph{local} operators. Under the various symmetry constraints these operators break down into two main classes depending on the presence or absence of a so-called \emph{chiral symmetry}: unitaries or self-adjoint projections. These two broad categories are then broken into five additional ones: complex, real, quaternionic, and so-called $\star$-real or $\star$-quaternionic (see \cref{sec:locality and symmetry}). Hence all together we find ten possible classes. Let us consider, then, the simplest case: that of complex unitaries. Without the locality constraint, it is a result that goes back to Kuiper \cite{Kuiper1965} (see \cref{thm:Kuiper} below) that the set of unitaries on a separable Hilbert space is path-connected. Indeed, a path from any unitary $U$ to $\Id$ is given by \eql{\label{eq:Kuiper's path} [0,1]\ni t\mapsto \exp\left(\ii (1-t)\left(-\ii \log\left(U\right)\right)\right) } where $-\ii\log\left(U\right)$ is a self-adjoint operator to be understood via the bounded measurable functional calculus of normal operators. In contrast to Kuiper's situation, the space of \emph{local} unitaries turns out to be very much disconnected: the components are indexed by a non-commutative analog of the winding number, which under the assumption of translation invariance indeed collapses to the classical winding number (this is the Krein-Widom-Devinatz theorem \cite[pp. 185]{Douglas1998}). The winding number requires the continuity of the map to be meaningful, which is analogous to the present locality constraint. The main issue to be dealt with is, then: given two local unitaries $U,V$ of the same index, construct a continuous \emph{local} path between them, or equivalently, given a local unitary of zero index, connect it locally to $\Id$. It is a theorem that if a local unitary has non-zero index then its spectrum is the whole $\mathbb{S}^1$ \cite{AschBourgetJoye2020}. Naively one might expect that unitaries of zero index always have a spectral gap on $\mathbb{S}^1$ and hence the above logarithm may actually be interpreted via the \emph{holomorphic} functional calculus, in which case it preserves locality (this is the Combes-Thomas estimate for unitaries, see e.g. \cite{hamza2009dynamical,Shapiro2019}). This is unfortunately false: take as a counter-example any continuous map $\bS^1\to\bS^1$ which has zero winding number but whose range is $\bS^1$. Its Fourier series will correspond to a local unitary of zero index which has $\sigma(U)=\bS^1$. The solution is then to factorize $U=AB$ where $A,B$ are two local unitaries, one of which has a gap and the other diagonal in a left-right decomposition of the Hilbert space, and is hence amenable to a (not necessarily local) usual Kuiper path on each side of space separately. The homotopies of local complex unitaries were first studied in \cite{CareyHurstOBrien1982}, although there a different proof was presented. The non-complex local unitary homotopies are, to our knowledge, new. For self-adjoint projections the local homotopies are somewhat different; to this end we make equivariant extensions of the work of \cite{AndruchowChiumientoLucero2015}. It turns out that in the complex case, all self-adjoint local projections of a certain non-trivial class are path-connected.
	
	In two of the symmetry classes, the index is $\ZZ_2$-valued, corresponding to the Atiyah-Singer skew-adjoint Fredholm index \cite{AtiyahSinger1969}, see \cref{sec:Atiyah-Singer Z2 index theory} for an introduction. For these symmetry classes, the analysis becomes more complicated due to the absence of a logarithmic law for the $\ZZ_2$ index, leading us to connect directly two arbitrary operators of odd index. The application of Atiyah and Singer's skew-adjoint Fredholm index in the context of topological insulators was pioneered in \cite{Schulz-Baldes2015} but then studied also in \cite{KatsuraKoma2016,FSSWY22,BSS23}.
	
	In regards to existing literature, almost exclusively, classification results of topological insulators rely on K-theory and it is in this sense that they do not provide a complete homotopy classification. Of the ones listed in the first paragraph above, we mention the paper by Thiang \cite{Thiang2016} who provides a K-theoretic classification of disordered spectrally gapped systems in all dimensions. On a more pedestrian note, if one assumes translation invariance, the classification problem is of course classical and reduces to studying homotopies of continuous maps $\mathbb{T}^d\to\mathrm{Gr}_k(\CC^N)$ under various symmetry constraints where $\mathrm{Gr}_k(\CC^N)$ is the Grassmannian: the space of $k$-dimensional subspaces within $\CC^N$. This classification is in fact known to ``contradict'' \cref{table:Kitaev} due to: (1) low $N$ problems, and (2) the existence of \emph{weak topological invariants}. These are, roughly speaking, indices which do not explore all $d$ dimensions of real space and are not stable under strong disorder. Recently Avron and Turner \cite{AvronTurner2022} presented a full classification of these translation invariant systems in the special case $d=k=1=N/2$.
	
	This paper is organized as follows. In \cref{sec:locality and symmetry} we present the abstract mathematical setting of odd-dimensional locality, symmetries and the associated indices. This section is mainly intended to set up the terminology and notation for \cref{sec:local unitaries classification,sec:local SA unitaries classification} in which we calculate $\pi_0$ of various symmetry-constrained local unitaries and self-adjoint projections. We make use of this theory in \cref{sec:bulk insulators classification} by connecting it to the problem of classifying bulk one-dimensional spectrally-gapped insulators. Within this section, we single out \cref{subsec:exp locality 1d chiral classification} where operators with the more common form of \emph{exponential locality} are studied using an entirely separate scheme. After making some brief remarks about edge systems in \cref{sec:1d edge classification}, we present a negative result about the classification in the mobility gap result in \cref{sec:mobility gap} and conclude in \cref{sec:higher odd d} with a few words about the classification problem in higher dimensions. We shall argue there that even though in some sense one may wish to draw conclusions from our work on the classification problem in all odd dimensions, the notion of locality we employ here and which makes sense in one-dimension, is rather unsatisfactory in higher dimensions, which warrants that not only the even-dimensional but also the odd-dimensional problem be revisited in future work.
	
	\paragraph{Notations and conventions}\begin{itemize}
		\item $\pi_0$ is the path-components functor acting on the category of topological spaces.
		\item We use $|A|^2\equiv A^\ast A$ and $\polar(A)$ for the polar part in the polar decomposition $A=\polar(A)|A|$, made unique by the convention that $\ker U = \ker A$.
		\item $\calH$ is a separable Hilbert space, $\calB$ is its Banach algebra of bounded linear operators, and $\calU,\calG,\calF,\calK$ are the subspaces of unitary, invertible, Fredholm, and compact operators respectively. We shall also use the space $\calP$ of self-adjoint (orthogonal) projections and (the equivalent) $\calSU$, the space of self-adjoint unitary operators. Sometimes we also use $\calF^{\mathrm{sa}}$ for the space of self-adjoint Fredholm operators.
		\item For us $A$ is idempotent iff $A^2=A$ and $A$ is a self-adjoint (orthogonal) projection iff $A^2=A=A^\ast$. We generally try to avoid the term \quotes{projection} by itself since some authors use it for idempotent and others for \quotes{self-adjoint projection}.
		\item $\calL$ is the C-star algebra of local operators, those operator having a compact commutator with a fixed projection $\Lambda$.
		\item $C$ is a real structure (anti-unitary) which squares to $+\Id$ and $J$ is a quaternionic structure (anti-unitary) which squares to $-\Id$.
		\item By the word ``essentially'' we generally mean that an algebraic condition holds up to compacts, i.e., in the Calkin algebra. With this, we have essentially unitary operators ($\Id-|A|^2,\Id-|A^\ast|^2\in\calK$), essentially projections ($A-A^\ast,A-A^2\in\calK$), etc.
		\item We use $\bS^1$ for the unit circle and $\DD$ for the open unit disc, both understood as subsets of $\CC$.
		
		
	\end{itemize}
	
	\section{Abstract locality, indices and symmetry constraints}\label{sec:locality and symmetry}
	In this section, $\calH$ is some fixed abstract separable Hilbert space.
	
	\begin{defn}[non-trivial projections]\label{def:non trivial projections} We call a self-adjoint projection $P\in\calP$ \emph{non-trivial} iff its range and kernel are both infinite dimensional.
	\end{defn}
	Essential projections are classical objects which go back to \cite{Calkin1941}: $A\in\calB$ is called essentially a projection iff $A-A^\ast,A-A^2\in\calK$. Actually this implies, via \cref{lem:essential projection is compact away from genuine projection}, that there exists some $P\in\calP$ such that $A-P\in\calK$. Less common is the notion of essentially non-trivial projections:
	\begin{defn}[essentially a non-trivial projection]
		We call a bounded linear operator $A\in\calB$ ``essentially a non-trivial projection'' iff $A$ is essentially a projection and $\Set{0,1}\subseteq\sigma_{\mathrm{ess}}(A)$. 
	\end{defn}
	It will be useful to have another criterion for essentially non-trivial projections:
	\begin{lem}\label{lem:stability of essentially non trivial projections}
		$A\in\calB$ is essentially a non-trivial projection iff there exists a non-trivial projection $P$ such that $A-P\in\calK$. Furthermore, if $B\in\calB$ is essentially a projection and $A-B$ is sufficiently small in norm or compact, then $B$ is also essentially a non-trivial projection.
	\end{lem}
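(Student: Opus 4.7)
The plan is to invoke \cref{lem:essential projection is compact away from genuine projection} to reduce $A$ to a genuine self-adjoint projection $P\in\calP$ with $A-P\in\calK$, and then translate the spectral condition $\{0,1\}\subseteq\sigma_{\mathrm{ess}}(A)$ into a rank/co-rank statement about $P$. Concretely, compact perturbations preserve the essential spectrum, so $\sigma_{\mathrm{ess}}(A)=\sigma_{\mathrm{ess}}(P)$; and since $\sigma(P)\subseteq\{0,1\}$ is finite, the essential spectrum of $P$ consists precisely of those eigenvalues of infinite multiplicity, namely $1\in\sigma_{\mathrm{ess}}(P)$ iff $\im P$ is infinite-dimensional and $0\in\sigma_{\mathrm{ess}}(P)$ iff $\ker P$ is infinite-dimensional. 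Thus $\{0,1\}\subseteq\sigma_{\mathrm{ess}}(A)$ iff $P$ is non-trivial in the sense of \cref{def:non trivial projections}. The converse direction of the \emph{iff} is the same observation read backwards: given a non-trivial $P$ with $A-P\in\calK$, $A$ is automatically essentially a projection and $\sigma_{\mathrm{ess}}(A)=\sigma_{\mathrm{ess}}(P)=\{0,1\}$.

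The compact case of the stability statement is then immediate: if $A-B\in\calK$ and $A-P\in\calK$ for the non-trivial $P$ produced above, then $B-P\in\calK$ and the first part applies to $B$. For the norm-small case I would pass to the Calkin algebra $\Calkin{\calH}=\calB/\calK$ under the quotient map $\pi$, where the essential spectrum of an operator equals the spectrum of its image. Both $\pi(A)$ and $\pi(B)$ are projections in $\Calkin{\calH}$ (because $A,B$ are essentially projections), and the hypothesis $\{0,1\}\subseteq\sigma_{\mathrm{ess}}(A)$ is precisely $\pi(A)\notin\{0,\Id\}$, whence $\|\pi(A)\|=1$ and $\|\Id-\pi(A)\|=1$, since any nonzero proper projection in a unital C-star algebra has norm one. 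If $\|A-B\|<1$, then $\|\pi(A)-\pi(B)\|<1$, and the reverse triangle inequality gives both $\|\pi(B)\|>0$ and $\|\Id-\pi(B)\|>0$, so $\pi(B)\notin\{0,\Id\}$, i.e.\ $\{0,1\}\subseteq\sigma_{\mathrm{ess}}(B)$.

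There is no genuine obstacle here; the only subtle point is fixing the sharp threshold ``$<1$'' in the norm-small case, which is ultimately a reflection of the standard fact that any nonzero proper projection $p$ in a unital C-star algebra satisfies $\|p\|=\|\Id-p\|=1$.
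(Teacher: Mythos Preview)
Your argument is correct. The \emph{iff} and the compact-perturbation stability are handled exactly as in the paper: invoke \cref{lem:essential projection is compact away from genuine projection} and use invariance of $\sigma_{\mathrm{ess}}$ under compact perturbations.

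For the norm-small stability, your route differs from the paper's. The paper observes that for an essential projection $A$, the condition $\{0,1\}\subseteq\sigma_{\mathrm{ess}}(A)$ is equivalent to both $A$ and $\Id-A$ being non-Fredholm, and then invokes openness of the Fredholm set: if $B$ were Fredholm with $\norm{A-B}$ small, $A$ would be too. You instead pass to the Calkin algebra and use that a nonzero proper projection $p$ in a unital C*-algebra has $\norm{p}=\norm{\Id-p}=1$, together with the reverse triangle inequality. The two arguments are closely related (Fredholmness of $A$ is precisely invertibility of $\pi(A)$), but yours has the minor advantage of producing the explicit threshold $\norm{A-B}<1$, whereas the paper leaves ``sufficiently small'' unspecified.
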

	\begin{proof}
		If there exists a non-trivial projection $P$ such that $A-P\in\calK$, write $A-P=K$, then $A-A^*=K-K^*\in\calK$ and $A-A^2=P+K-(P+K)^2=K-PK-KP-K^2\in\calK$ and hence $A$ is essentially a projection. In particular, $\Set{0,1}\subseteq \sigma_{\mathrm{ess}}(P)=\sigma_{\mathrm{ess}}(P+K)=\sigma_{\mathrm{ess}}(A)$. Thus $A$ is essentially a non-trivial projection.
		
		Conversely, if $A$ is essentially a projection, using \cref{lem:essential projection is compact away from genuine projection}, there exists a self-adjoint projection $P$ such that $A-P\in\calK$. In particular, $\Set{0,1}\subseteq \sigma_{\mathrm{ess}}(A)=\sigma_{\mathrm{ess}}(P+K)=\sigma_{\mathrm{ess}}(P)$, so the kernel and image of $P$ are infinite and thus $P$ is non-trivial.
		
		Now, if $A-B\in\calK$ the statement is trivial. Moreover, we note that if $A$ is essentially a projection, then, it is non-trivial iff $A,\Id-A$ are not Fredholm (by the Fredholm definition of the essential spectrum). Hence, if $\norm{A-B}$ is sufficiently small, it can't be that $B$ is Fredholm whereas $A$ is not, since the Fredholms are open, and same with $(\Id-A)-(\Id-B)$. Thus $\Set{0,1}\in \sigma_{\mathrm{ess}}(B)$.
	\end{proof}
	
	\begin{rem}\label{rem:connection between projections and self-adjoint unitaries}Given a projection $P\in\calP$, $\Id-2P$ is a self-adjoint unitary, so that the space $\calP$ is identified with \eq{\calSU\equiv\Set{U\in\calU|U=U^\ast}\,,} the space of self-adjoint unitaries, and all the notions discussed above of non-triviality of projections carry over to self-adjoint unitaries. We shall refer to both spaces interchangeably, and the classification of local self-adjoint projections or local self-adjoint unitaries is the same.
	\end{rem}

	\begin{defn}[$\Lambda$-local operators]\label{def:Lambda-local operators} For a fixed non-trivial self-adjoint projection $\Lambda$, an operator $A\in\calB$ is termed \emph{$\Lambda$-local} iff it essentially commutes with $\Lambda$, i.e., \eql{[\Lambda,A]\equiv\Lambda A-A\Lambda\in\calK\,.} The space of all local operators is denoted by $\calL_\Lambda$. Clearly if a projection is trivial, the condition is vacuous, and hence the restriction. Sometimes we use the phrase \emph{hyper-local} if $[A,\Lambda]=0$.
	\end{defn} 
	
	Unless otherwise specified (mainly relevant in \cref{sec:mobility gap}) we shall always use the subspace topology induced by the operator norm topology on $\calB$ unless otherwise specified. With respect to this topology, we use $\pi_0$ as the path-components functor. 
	
	For most of what follows, we shall not have occasion to consider different $\Lambda$'s for locality, and so, let us fix once and for all one self-adjoint projection $\Lambda$ and omit this choice entirely from the notation. If a space $\calA$ carries the superscript $\calL$ we mean by it the intersection: \eql{\calA^\calL \equiv \calA\cap\calL} and the prefix $\calS$ means the subset of self-adjoint operators within $\calA$.
	
	\begin{lem}
		$\calL$ is a C-star algebra with respect to the operator norm and adjoint inherited from $\calB$.
	\end{lem}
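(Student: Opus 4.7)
The plan is to verify the three standard conditions that, together with inheritance of the $C^*$-identity from $\calB$, make $\calL$ a $C^*$-algebra: namely that $\calL$ is (i) a $*$-subalgebra of $\calB$, and (ii) closed in the operator norm topology. Throughout, the key fact I would repeatedly invoke is that the compact operators $\calK$ form a closed, two-sided, $*$-closed ideal of $\calB$.

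First I would check that $\calL$ is a linear subspace: for $A,B\in\calL$ and $\alpha\in\CC$, bilinearity of the commutator gives $[\Lambda,\alpha A+B]=\alpha[\Lambda,A]+[\Lambda,B]$, which is a linear combination of elements of $\calK$ and hence in $\calK$. Next, closure under multiplication follows from the Leibniz identity
\begin{equation*}
[\Lambda,AB] \;=\; [\Lambda,A]\,B \;+\; A\,[\Lambda,B],
\end{equation*}
where each term lies in $\calK$ because $\calK$ is a two-sided ideal in $\calB$. For closure under adjoint, I would note that $\Lambda^\ast=\Lambda$, so $[\Lambda,A^\ast]=\Lambda A^\ast - A^\ast\Lambda = -(\Lambda A - A\Lambda)^\ast = -[\Lambda,A]^\ast\in\calK$, using that $\calK$ is $*$-closed.

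For norm closure, suppose $(A_n)\subseteq\calL$ with $A_n\to A$ in operator norm. Then $[\Lambda,A_n]\to[\Lambda,A]$ in norm, because left and right multiplication by the bounded operator $\Lambda$ is norm-continuous. Since $\calK$ is norm-closed in $\calB$ and each $[\Lambda,A_n]\in\calK$, the limit $[\Lambda,A]$ lies in $\calK$, i.e.\ $A\in\calL$.

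Finally, since $\calL$ inherits the operator norm and adjoint from $\calB$, the $C^*$-identity $\norm{A^\ast A}=\norm{A}^2$ is automatic. There is no real obstacle here; the entire argument is a routine exploitation of the ideal and $*$-closure properties of $\calK$, and the main thing to be careful about is simply to apply the Leibniz rule for the commutator correctly so that one term uses $\calK$ as a left ideal and the other as a right ideal.
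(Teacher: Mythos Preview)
Your proof is correct and follows the same approach as the paper. The paper's own proof is extremely terse, remarking only that ``the only thing to verify is the compact commutator condition is closed'' and then invoking that norm limits of compacts are compact; you have simply spelled out the routine $*$-subalgebra verifications (linearity, Leibniz rule for products, $*$-closure) that the paper leaves implicit.
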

	\begin{proof}
		The only thing to verify is the compact commutator condition is closed. However, the norm limit of compact operators is compact, and hence the statement follows.
	\end{proof}
	\begin{rem}
		To the extent that commutators may be considered as non-commutative derivatives, locality may be thought of as a certain regularity condition analogous to differentiability. This is essentially Bellissard et al's non-commutative Sobolev spaces \cite{Bellissard_1994JMP....35.5373B}.
	\end{rem}

	\begin{lem}\label{lem:continuous functional calculus of normals is local}The continuous functional calculus on normal operators maps $\calL$ to $\calL$.
	\end{lem}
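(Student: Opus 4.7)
The plan is to leverage the fact, just established, that $\calL$ is a norm-closed $\ast$-subalgebra of $\calB$, together with the standard construction of the continuous functional calculus as a norm limit of $\ast$-polynomials.

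First I would verify the polynomial case. Fix a normal $A\in\calL$. Since $\calL$ is a $\ast$-algebra, $A^\ast\in\calL$ (indeed $[\Lambda,A^\ast]=-[\Lambda,A]^\ast\in\calK$), and since $\calL$ is closed under products and sums, every $\ast$-polynomial $p(A,A^\ast)$ lies in $\calL$. Concretely, the commutator $[\Lambda,p(A,A^\ast)]$ expands into a finite sum of terms of the form $X[\Lambda,A]Y$ and $X[\Lambda,A^\ast]Y$ with $X,Y\in\calB$, each of which is compact because $\calK$ is a two-sided ideal in $\calB$.

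Next I would invoke Stone--Weierstrass on $\sigma(A)\subseteq\CC$: the algebra of polynomials in $z$ and $\bar z$ is uniformly dense in $C(\sigma(A))$. Combined with the fact that the continuous functional calculus $C(\sigma(A))\to\calB$, $f\mapsto f(A)$, is a $\ast$-homomorphism that is isometric (for normal $A$), any $f\in C(\sigma(A))$ is a norm limit of $\ast$-polynomials $p_n(A,A^\ast)\in\calL$. Since $\calL$ is norm-closed in $\calB$, we conclude $f(A)\in\calL$.

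There is no real obstacle here; the only thing one must be careful about is that the Stone--Weierstrass approximation is performed on the spectrum of the \emph{normal} operator $A$ (so that the pair $(z,\bar z)$ suffices and the functional calculus is indeed a $\ast$-homomorphism), and that normality is used to ensure $\norm{p_n(A,A^\ast)-f(A)}=\sup_{z\in\sigma(A)}|p_n(z,\bar z)-f(z)|$, so approximation in $C(\sigma(A))$ transfers to norm approximation in $\calB$. The closedness of $\calL$ then finishes the argument.
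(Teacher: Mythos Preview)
Your proof is correct and follows essentially the same route as the paper: approximate $f$ uniformly on $\sigma(A)$ by $\ast$-polynomials via Stone--Weierstrass, note that $\ast$-polynomials in a local normal operator stay in the $\ast$-algebra $\calL$, and conclude by norm-closedness of $\calL$. If anything, your version is slightly more careful in spelling out that the approximating polynomials must be in $z$ and $\bar z$ (the paper just writes ``polynomials''), and in explaining why normality is what makes the functional calculus isometric so that uniform approximation transfers to operator-norm approximation.
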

	\begin{proof}
		Let $A\in\calL$ be normal and $f:\CC\to\CC$ continuous. Since $A$ is bounded, its spectrum is restricted to some compact set $S\subseteq\CC$ and hence we may assume WLOG that $f$ has support $S$. Let now $\Set{p_k:\CC\to\CC}_k$ be a sequence polynomials converging \emph{uniformly} to $f$ on $S$. Then $p_k(A)\to f(A)$ in operator norm, and hence, since each $[p_k(A),\Lambda]$ is compact (recall $\calL$ is a C-star algebra) its norm limit is too. 
	\end{proof}
	We note in passing that for holomorphic functions (which may be desired when dealing with non-normal operators) this can be deduced by a Combes-Thomas type argument: the resolvent of a local operator is clearly local by $[\Lambda,(A-z\Id)^{-1}]=-(A-z\Id)^{-1}[\Lambda,A](A-z\Id)^{-1}$.

	We now define the so-called \quotes{super} operator $\bbLambda:\calB\to\calB$ given by \eql{\calB\ni A \mapsto \bbLambda A \equiv \Lambda A \Lambda + \Id-\Lambda\,.} With it we may define an index for local unitaries as follows.
	\begin{lem}[Fredholm property of local unitaries] The image of $\calU^\calL$ under $\bbLambda$ is Fredholm, i.e., \eq{\bbLambda(\calU^\calL)\subseteq\calF\,.} 
	\end{lem}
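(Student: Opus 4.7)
\bigskip

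The plan is to show that $\bbLambda U$ is invertible modulo compacts by exhibiting an explicit parametrix, and then apply Atkinson's theorem. The natural candidate for a parametrix of $\bbLambda U = \Lambda U \Lambda + \Id - \Lambda$ is $\bbLambda U^{\ast} = \Lambda U^{\ast} \Lambda + \Id - \Lambda$, obtained by replacing $U$ with its adjoint. The key mechanism is that locality of $U$ (the compactness of $[\Lambda,U]$) is precisely what allows one to ``move a $\Lambda$ across a $U$'' modulo compacts, which is exactly what is needed to make $U^{\ast} U = \Id$ do its work inside the compressed product.

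Concretely, I would first expand the product, using $\Lambda^2 = \Lambda$ and $(\Id-\Lambda)\Lambda = \Lambda(\Id-\Lambda) = 0$, to obtain
\begin{align*}
\bbLambda U^{\ast}\,\bbLambda U \;=\; \Lambda U^{\ast} \Lambda U \Lambda \;+\; (\Id - \Lambda).
\end{align*}
Then I would use $[\Lambda, U] \in \calK$ (and hence also $[\Lambda, U^{\ast}] \in \calK$, since $\calK$ is $\ast$-closed) to write $\Lambda U = U\Lambda + K$ with $K \in \calK$. Substituting,
\begin{align*}
\Lambda U^{\ast} \Lambda U \Lambda \;=\; \Lambda U^{\ast} U \Lambda^{2} + \Lambda U^{\ast} K \Lambda \;=\; \Lambda + \Lambda U^{\ast} K \Lambda,
\end{align*}
and the second summand is compact since $\calK$ is an ideal in $\calB$. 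Therefore $\bbLambda U^{\ast}\,\bbLambda U = \Id + \calK$. The same calculation with the roles of $U$ and $U^{\ast}$ swapped, using $[\Lambda, U^{\ast}] \in \calK$, yields $\bbLambda U\,\bbLambda U^{\ast} = \Id + \calK$.

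Having two-sided invertibility in the Calkin algebra, Atkinson's theorem gives $\bbLambda U \in \calF$. There is no serious obstacle here: the whole argument is a short algebraic manipulation whose only nontrivial input is the defining locality hypothesis $[\Lambda, U] \in \calK$, together with unitarity $U^{\ast} U = U U^{\ast} = \Id$ and the ideal property of $\calK$. In fact the same calculation shows more: $\bbLambda$ restricted to essentially unitary local operators lands in $\calF$, which may be useful later when partial isometries are considered.
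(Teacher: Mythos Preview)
Your proof is correct and follows essentially the same approach as the paper: exhibit $\bbLambda U^{\ast}$ as a parametrix for $\bbLambda U$ by computing that their product differs from $\Id$ by a compact built from $[\Lambda,U]$, then invoke Atkinson. The only cosmetic difference is that the paper writes $\Id-(\bbLambda U^{\ast})(\bbLambda U)=\Lambda U^{\ast}\Lambda^{\perp}[U,\Lambda]$ directly, whereas you substitute $\Lambda U=U\Lambda+K$; the content is identical.
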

	\begin{proof}
		Let $U\in\calU^\calL$. Then using Atkinson's theorem \cite{Booss_Topology_and_Analysis} it suffices to exhibit $(\bbLambda U)^\ast$ as the parametrix of $\bbLambda U$, and to that end, we note that \eq{ \Id-(\bbLambda U)^\ast(\bbLambda U)=\Id-(\bbLambda U^\ast)(\bbLambda U) = \Lambda(\Id-U^\ast\Lambda U)\Lambda 
			= \Lambda U^\ast \Lambda^\perp U\Lambda 
			= \Lambda U^\ast \Lambda^\perp [U,\Lambda]\,.
		} Now, since $U\in\calL$ this last commutator is compact, and so by the ideal property of $\calK$ the entire expression.
	\end{proof}
	
	It thus makes sense to define $ \findex_\Lambda : \calU^\calL \to \ZZ$ via \eql{\findex_\Lambda(U) := \findex(\bbLambda U)\equiv\dim\ker \bbLambda U - \dim\ker \bbLambda U^\ast\label{eq:index of local unitary}\,.} This index reduces to the winding number, if the unitary happens to be a Toeplitz operator on $\ell^2(\ZZ)$ (this statement is the aforementioned Krein-Widom-Devinatz theorem \cite[pp. 185]{Douglas1998}, sometimes also referred to as the Krein-Gohberg theorem).
	It is comforting to know that this index inherits the logarithm law from the Fredholm index:
	\begin{lem}[logarithmic law]\label{lem:log law for index_Lam} If $U,V\in\calU^\calL$ then \eql{\findex_\Lambda U + \findex_\Lambda V = \findex_\Lambda (UV)}
	\end{lem}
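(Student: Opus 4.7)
The plan is to reduce the claim to the classical logarithmic law for the Fredholm index. Concretely, I would like to show two things: first, that $\bbLambda(UV)$ differs from the product $(\bbLambda U)(\bbLambda V)$ only by a compact operator, so that their Fredholm indices agree; second, that the Fredholm index of $(\bbLambda U)(\bbLambda V)$ equals the sum of the indices of $\bbLambda U$ and $\bbLambda V$ by the standard multiplicativity of $\findex$ on $\calF$.

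For the first step I would expand, using $\Lambda^\perp \equiv \Id - \Lambda$ and the orthogonality $\Lambda \Lambda^\perp = 0$,
\begin{align*}
(\bbLambda U)(\bbLambda V) &= (\Lambda U \Lambda + \Lambda^\perp)(\Lambda V \Lambda + \Lambda^\perp) = \Lambda U \Lambda V \Lambda + \Lambda^\perp,
\end{align*}
while $\bbLambda(UV) = \Lambda U V \Lambda + \Lambda^\perp$, giving
\begin{align*}
\bbLambda(UV) - (\bbLambda U)(\bbLambda V) = \Lambda U (\Id - \Lambda) V \Lambda = \Lambda U \Lambda^\perp V \Lambda.
\end{align*}
Using $[\Lambda,U]\in\calK$ (the defining locality assumption on $U$), I rewrite $\Lambda U \Lambda^\perp = \Lambda U - \Lambda U \Lambda = \Lambda U - (U\Lambda - [U,\Lambda])\Lambda = \Lambda U \Lambda^\perp$ directly, or more cleanly note that $\Lambda U \Lambda^\perp = -[\Lambda,U]\Lambda^\perp \in \calK$ by the ideal property of $\calK$. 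Hence the difference $\bbLambda(UV) - (\bbLambda U)(\bbLambda V)$ is compact.

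For the second step I would invoke the standard Fredholm facts: the Fredholm index is invariant under compact perturbations, so
\begin{align*}
\findex\bigl(\bbLambda(UV)\bigr) = \findex\bigl((\bbLambda U)(\bbLambda V)\bigr),
\end{align*}
and it is additive under composition of Fredholms, so
\begin{align*}
\findex\bigl((\bbLambda U)(\bbLambda V)\bigr) = \findex(\bbLambda U) + \findex(\bbLambda V).
\end{align*}
Combining these and unfolding the definition \eqref{eq:index of local unitary} yields $\findex_\Lambda(UV) = \findex_\Lambda(U) + \findex_\Lambda(V)$, as desired.

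I do not expect a real obstacle here; the only place requiring a little care is confirming the compactness of $\Lambda U \Lambda^\perp$, which however is immediate from $U\in\calL$ and the ideal property. Everything else is a direct appeal to the preceding lemma establishing $\bbLambda(\calU^\calL)\subseteq\calF$ together with the classical multiplicativity and compact-stability of $\findex$ on Fredholm operators.
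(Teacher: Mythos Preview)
Your proof is correct and follows essentially the same route as the paper: compute $\bbLambda(UV)-(\bbLambda U)(\bbLambda V)=\Lambda U\Lambda^\perp V\Lambda$, observe this is compact via locality of $U$, and then invoke compact-stability and additivity of the Fredholm index. (One harmless sign slip: $\Lambda U\Lambda^\perp = [\Lambda,U]\Lambda^\perp$, not $-[\Lambda,U]\Lambda^\perp$, but this does not affect the compactness conclusion.)
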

	\begin{proof}
		Using the logarithmic law of $\findex:\calF\to\ZZ$ \cite{Booss_Topology_and_Analysis}, it remains to show $\bbLambda(UV)-(\bbLambda U)\bbLambda V$ is compact. This follows from \eq{\bbLambda(UV)-(\bbLambda U)\bbLambda V
			=\Lambda UV\Lambda-\Lambda U\Lambda V\Lambda
			=\Lambda U\Lambda^\perp V\Lambda
			=\underbrace{[\Lambda, U]}_{\in\calK}\Lambda^\perp V\Lambda\,.
		}
	\end{proof}

	We now turn to symmetry constraints. Let $C,J:\calH\to\calH$ be two fixed \emph{anti-unitary} operators on $\calH$ such that \eq{C^2=\Id\,,\qquad\,J^2=-\Id.}
	As such, $C$ and $J$ define \emph{real} and \emph{quaternionic} structures respectively on $\calH$: $C$ should be understood as complex conjugation and $J$ as the $j$th quaternionic basis vector, so that $\Id,\ii\Id,J$ and $\ii J$ build the quaternionic basis vectors \cite{Baez2012}. It is thus natural to consider the subspace of real and quaternionic bounded operators, those which respect that structure: \eql{\label{eq:real and quaternionic operators} \calB_\RR := \Set{A\in\calB | AC=CA} \,,\qquad \calB_\HH := \Set{A\in\calB | AJ=JA}\,. }
	We note that in the latter case, unitary operators $U$ obeying $[U,J]=0$ may also be understood as \emph{Hermitian}-symplectic operators (discussed e.g. in \cite[(3.7)]{Shapiro2021}) with respect to the symplectic bi-linear form given by $\ip{\cdot}{J\cdot}$, since then one has $U^\ast J U = J$ and hence the bi linear form $\ip{\cdot}{J\cdot}$ is preserved by such $U$.
	
	We shall also need the following somewhat more exotic symmetry constraints. For lack of better terminology, we call them $\star$-real and $\star$-quaternionic operators: \eql{\label{eq:star real and quaternionic operators} \calB_{\star\RR} := \Set{A\in\calB | AC=CA^\ast} \,,\qquad \calB_{\star\HH} := \Set{A\in\calB | AJ=JA^\ast}\,. }
	In \cite{FSSWY22} we used the terminology $J$-odd for the same constraint (only $\calB_{\star\HH}$ was used there), but in the current abstract mathematical setting it is more natural to use the real and quaternionic structures. We caution the reader that our naming is not standard, e.g., in \cite{garcia2006complex} the name \emph{$C$-symmetric} was used for $\calB_{\star\RR}$.
	
	The following purely imaginary classes are not independent of the ones presented so far, but we introduce them separately nonetheless for notational simplicity: \eql{\label{eq:pure imaginary real or quaternionic operators} \calB_{\ii\RR} := \Set{A\in\calB | AC=-CA} \,,\qquad \calB_{\ii\HH} := \Set{A\in\calB | AJ=-JA}\,. } They may be obtained as $\ii \calB_\RR,\ii\calB_\HH$ respectively.
	
	We shall see below in \cref{sec:bulk insulators classification} that these combinations build together all the necessary Altland-Zirnbauer symmetry classes (the ten fold way) which appear in \cref{table:Kitaev}.
	
	\begin{assumption}[real and quaternionic structures are hyper-local]\label{ass:real and quaternionic structures are hyperlocal}
		We shall assume that $C,J$ are chosen so that \eql{[J,\Lambda]=[C,\Lambda]=0\,.}
		This can probably be weakened to from zero to compact, but we do not need this generalization.
	\end{assumption}

	It is then clear that, for $\FF=\RR,\HH$, restricting $\findex_\Lambda$ to $\calU_{\star\FF}^\calL$, we get the constant zero map. Indeed, this is immediate from the fact $C,J$ are bijections and the logarithmic rule \cref{lem:log law for index_Lam}. The same is true within $\calSU^{\calL}_{\FF}$ for any $\FF$ by self-adjointness. Be that as it may, Atiyah and Singer recognized that another index, a $\ZZ_2$ index, may sometimes be defined (see \cref{sec:Atiyah-Singer Z2 index theory} below): \eql{\findex_{\Lambda,2}(U) := \findex_{2}\bbLambda U \equiv \left(\dim \ker \bbLambda U \mod 2\right) \in \ZZ_2} where $\findex_{2}:\calF\to\ZZ_2$ is the Atiyah-Singer $\ZZ_2$ Fredholm index. As discussed in \cref{thm:ASZ2 index,thm:ASZ2 index v2}, this index is norm continuous as a map with domain $\calU_{\star\HH}^\calL$ or $\calSU^{\calL}_{\ii\RR}$ respectively.
	
	\section{Equivariant classification of local unitaries}\label{sec:local unitaries classification}
	In this section we shall study $\pi_0(\calU_\FF^\calL)$ where $\FF$ is either $\CC$ (in which case this is just the space of local unitaries) or $\FF$ is one of the four symmetries discussed above: $\RR,\HH,\star\RR,\star\HH$. We group our theorems together based on method of proof. The results are summarized in \cref{table:equivariant local unitaries}.
	
	We start with the main classification statement:
	\begin{thm}[Classification of $\RR,\CC$ and $\HH$ local unitaries]\label{thm:classification of RCH local unitaries} For $\FF=\RR,\CC$ the map $\findex_\Lambda:U_\FF^\calL\to\ZZ$ is norm continuous and ascends to a bijection \eql{\findex_\Lambda:\pi_0(\calU_\FF^\calL)\xrightarrow{\sim}\ZZ}  and analogously for the quaternionic class, we have the bijection\eql{\findex_\Lambda:\pi_0(\calU_\HH^\calL)\xrightarrow{\sim}2\ZZ\,.}
	\end{thm}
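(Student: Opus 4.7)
The plan has three parts. \emph{First, well-definedness and image:} Norm-continuity of $\findex_\Lambda$ on $\calU^\calL_\FF$ follows from norm-continuity of $\bbLambda$, the inclusion $\bbLambda(\calU^\calL)\subseteq\calF$ proved above, and local constancy of the Fredholm index, so $\findex_\Lambda$ descends to $\pi_0$. Surjectivity onto $\ZZ$ is realized by the bilateral shift $S$ on $\ell^2(\ZZ)$ with $\Lambda$ the projection onto $\ZZ_{\geq 0}$: this $S$ is $\Lambda$-local, $\findex_\Lambda(S)=-1$, and its powers cover $\ZZ$; since $S$ has real matrix entries, the same example works for $\FF=\RR$. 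For $\FF=\HH$ the evenness is Kramers degeneracy: hyper-locality of $J$ (\cref{ass:real and quaternionic structures are hyperlocal}) together with $UJ=JU$ give $[\bbLambda U,J]=0$, so $\ker\bbLambda U$ and $\ker(\bbLambda U)^\ast$ are $J$-invariant and hence have even complex dimension (as $J^2=-\Id$), forcing $\findex_\Lambda\in 2\ZZ$. Doubling the shift on $\ell^2(\ZZ)\otimes\CC^2$ with the standard quaternionic $J$ realizes $\findex_\Lambda=-2$, which generates $2\ZZ$.

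\emph{Second, reduction of injectivity:} The logarithmic law \cref{lem:log law for index_Lam} gives $\findex_\Lambda(UV^\ast)=0$ whenever $\findex_\Lambda U=\findex_\Lambda V$, and $\calU^\calL_\FF$ is closed under products and adjoints for $\FF=\CC,\RR,\HH$ (inversion preserves these symmetries because $C,J$ are involutive up to sign), so a path $\gamma(t)$ from $UV^\ast$ to $\Id$ inside $\calU^\calL_\FF$ yields the path $t\mapsto\gamma(t)V$ from $U$ to $V$. It thus suffices to connect every zero-index $U\in\calU^\calL_\FF$ to $\Id$. The naive Kuiper path \cref{eq:Kuiper's path} fails whenever $\sigma(U)=\bS^1$, which is generic. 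My remedy is a factorization $U=AB$ inside $\calU^\calL_\FF$ with $A-\Id\in\calK$ and $B$ hyper-local. Locality of $U$ makes $\Lambda U\Lambda^\perp$ and $\Lambda^\perp U\Lambda$ compact, so the corner operators $\Lambda U\Lambda$ on $\Lambda\calH$ and $\Lambda^\perp U\Lambda^\perp$ on $\Lambda^\perp\calH$ are essentially unitary Fredholms of indices $\pm\findex_\Lambda U=0$. Polar-decomposing each corner and extending its polar part across the finite-dimensional kernel by any $\FF$-equivariant isomorphism to the cokernel yields genuine unitaries $W_\pm$ agreeing with the corners modulo compacts. Setting $B:=W_+\oplus W_-$ and $A:=UB^\ast$, $B$ is hyper-local, $U-B\in\calK$, and hence $A-\Id\in\calK$; hyper-locality of $C,J$ makes each corner $\FF$-symmetric, and polar decomposition preserves real and quaternionic structures, so $A,B\in\calU^\calL_\FF$.

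\emph{Third, the two-leg homotopy:} For $A$, being a compact perturbation of $\Id$ makes it diagonalizable with eigenvalues accumulating only at $1$; bounded Borel functional calculus with arguments chosen compatibly with the $C$- or $J$-action on the eigenspaces yields a compact self-adjoint $H$ with $A=\exp(\ii H)$ and $\ii H$ in the $\FF$-class, so $t\mapsto\exp(\ii(1-t)H)$ is unitary, compact-close to $\Id$ (hence local), and $\FF$-symmetric throughout. For $B$, each block $W_\pm$ is a unitary on an infinite-dimensional separable $\FF$-Hilbert space, and the classical Kuiper contractibility theorems for the complex unitary, real orthogonal, and quaternionic symplectic groups produce norm-continuous symmetry-respecting paths from $W_\pm$ to the identity on each block; their direct sum is hyper-local and $\FF$-symmetric throughout. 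Concatenating the two legs gives the required homotopy from $U$ to $\Id$. The most delicate point I anticipate is the $\FF$-compatible choice of eigenvalue arguments for $A$ in the real class: an odd-multiplicity $-1$ eigenspace cannot be connected to $\Id$ by any scalar-exponential path within $\calB_\RR$, so it must first be merged with the infinite-dimensional regular part of the spectrum via a preliminary real homotopy. In the quaternionic class eigenvalues come in Kramers pairs and this obstruction vanishes automatically.
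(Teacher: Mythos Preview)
Your proposal follows essentially the same architecture as the paper: continuity and surjectivity via the bilateral shift (and its quaternionic doubling), the reduction to index zero via the logarithmic law, the factorization $U=AB$ with $B$ hyper-local and $A-\Id\in\calK$ built from the polar parts of the $\Lambda$-corners (this is precisely the paper's \cref{lem:factorization for local unitaries}, invoking the same extension of zero-index essentially unitary operators to genuine $\FF$-unitaries), and the two-leg homotopy with block-wise Kuiper on $B$.

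The one place you diverge is the $A$-leg in the real and quaternionic cases. You propose writing $A=\exp(\ii H)$ with $H$ compact self-adjoint and $\{H,F\}=0$, whereas the paper uses the polar interpolation $t\mapsto\polar((1-t)A+t\Id)$ when $-1\notin\sigma(A)$ and a separate construction (\cref{lem:deform A to Id obeying RCH symmetry}) otherwise. Your diagnosis of the obstruction is correct: an odd-dimensional $(-1)$-eigenspace in the real case admits no self-adjoint $H$ with $\{H,C\}=0$ and $\ee^{\ii H}=-\Id$ on that space, since such an $H$ must have spectrum symmetric about zero in odd multiples of $\pi$, forcing even total dimension. However, your proposed remedy---``merge with the infinite-dimensional regular part via a preliminary real homotopy''---is where the real work lies, and it cannot be done while staying inside $\Id+\calK$. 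Indeed, the parity $\dim\ker(A+\Id)\bmod 2$ is a homotopy invariant within $\calU_\RR\cap(\Id+\calK)$ (this is the $\ZZ_2$ obstruction of the stable real orthogonal group restricted to compact perturbations of $\Id$), so no path of compact perturbations of the identity can remove a single $-1$ eigenvalue. The paper's \cref{lem:deform A to Id obeying RCH symmetry} resolves this by deliberately leaving $\Id+\calK$: it factors off the $(-1)$-eigenspace, then uses Kuiper on the infinite-dimensional complement to pass through $-\Id$ (which is local but certainly not a compact perturbation of $\Id$), and finally connects $-\Id$ to $\Id$ via $\Lambda$-diagonal Kuiper paths. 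Your sketch would need to incorporate exactly this kind of excursion; once you do, the argument is complete and coincides with the paper's.
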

	This theorem should be compared with Kuiper's theorem ($\pi_0(\calU)\cong\Set{0}$, see \cref{thm:Kuiper}) and the Atiyah-J\"anich theorem ($\findex:\pi_0(\calF)\cong\ZZ$) \cite{Booss_Topology_and_Analysis}. Strictly speaking, when $\FF=\CC$, it is not new: it may be deduced from the results of \cite{CareyHurstOBrien1982}, where the criterion of locality as a compact commutator is replaced by the commutator belonging to a more general ideal.
	We shall present a different proof, which also covers the cases $\FF=\RR,\HH$ (which as far as we are aware has not appeared previously). We also became aware that \cite{Geib2022} contains ideas of similar spirit.
	
	Next, we have the nullhomotopic result:
	\begin{thm}[Classification of $\star\RR$-local unitaries]\label{thm:star-real local unitaries} The space of $\star\RR$-local unitaries is null-homotopic: \eql{\pi_0(\calU_{\star\RR}^\calL)\cong\Set{0}\,.}
	\end{thm}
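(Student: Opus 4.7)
The plan is to exploit the vanishing of $\findex_\Lambda$ on $\calU_{\star\RR}^\calL$, and then adapt the factorization-plus-logarithm strategy of \cref{thm:classification of RCH local unitaries} while respecting the antilinear $\star$-real symmetry throughout. The index vanishing follows from a short direct computation: using $[C,\Lambda]=0$ together with $CUC=U^\ast$ one finds $C\,(\bbLambda U)\,C = \bbLambda U^\ast = (\bbLambda U)^\ast$, so $C$ restricts to an antilinear bijection $\ker\bbLambda U\to\ker(\bbLambda U)^\ast$ and therefore $\findex(\bbLambda U)=0$. Hence there is no index obstruction; the task is solely to produce an explicit contraction to $\Id$ that stays inside the $\star$-real subspace.

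The key tool is a logarithm homotopy, available whenever $U\in\calU_{\star\RR}^\calL$ has a spectral gap at $-1$. In that case set $H:=-\ii\log U$ using the principal branch, which is continuous on $\sigma(U)\subseteq\bS^1\setminus\{-1\}$; by \cref{lem:continuous functional calculus of normals is local}, $H$ is a bounded self-adjoint element of $\calL$ with $\exp(\ii H)=U$. Because $\overline{\log\bar z}=\log z$ on $\bS^1\setminus\{-1\}$ and $[C,\Lambda]=0$, one computes $CHC=H$, so $H$ is \emph{real} self-adjoint. The straight-line homotopy $t\mapsto\exp(\ii(1-t)H)$ then stays in $\calU_{\star\RR}^\calL$ (since $C\exp(\ii sH)C=\exp(-\ii sCHC)=\exp(-\ii sH)=\exp(\ii sH)^\ast$) and connects $U$ to $\Id$.

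To handle an arbitrary $U\in\calU_{\star\RR}^\calL$ whose spectrum may cover all of $\bS^1$, I would mirror the factorization used to prove \cref{thm:classification of RCH local unitaries}, but in a symmetric form. Concretely, factor $U=AB$ with $A,B\in\calU_{\star\RR}^\calL$, where $A$ has a spectral gap at $-1$ and $B$ is hyper-local ($[B,\Lambda]=0$). The gapped factor $A$ is contracted to $\Id$ by the logarithm homotopy above. For $B$, the hyper-locality of $C$ (\cref{ass:real and quaternionic structures are hyperlocal}) ensures the direct-sum decomposition $B=B_+\oplus B_-$ with respect to $\Lambda\calH\oplus\Lambda^\perp\calH$ leaves each summand $\star$-real on the corresponding infinite-dimensional separable Hilbert space; each $B_\pm$ is then contracted to $\Id$ via an equivariant Kuiper-type homotopy in the space of symmetric unitaries, and the direct sum of those homotopies is hyper-local, hence in $\calL$.

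The hardest part will be producing the \emph{symmetric} factorization $U=AB$: the factorization used in the complex case of \cref{thm:classification of RCH local unitaries} is not a priori $\star$-real on each factor, so one must either symmetrize it (for instance, via a Takagi-like representation $U=W\,C W^\ast C$ implicit in $\star$-realness to rearrange the factors) or redo the whole construction while explicitly tracking the antilinear symmetry at every intermediate step. A secondary technical point is verifying the equivariant Kuiper-type contractibility of the symmetric unitaries on each half of $\calH$, which is a natural-flavoured but not entirely trivial extension of Kuiper's theorem (\cref{thm:Kuiper}).
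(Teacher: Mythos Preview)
Your overall strategy is sound, and your logarithm homotopy for the gapped case is correct: if $U\in\calU_{\star\RR}^\calL$ has $-1\notin\sigma(U)$ then $H:=-\ii\log U$ (principal branch) is real self-adjoint and local, so $t\mapsto\exp(\ii(1-t)H)$ does the job. The genuine gap is precisely where you flagged it: a factorization $U=AB$ with \emph{both} $A,B\in\calU_{\star\RR}$ with respect to the \emph{same} $C$ is generically impossible. Indeed, if $CAC=A^\ast$ and $CBC=B^\ast$ then $C(AB)C=A^\ast B^\ast$, whereas $\star$-realness of $U=AB$ demands $C(AB)C=(AB)^\ast=B^\ast A^\ast$; these agree only when $[A,B]=0$, which you have no reason to expect. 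Neither a Takagi rewrite nor ``tracking the symmetry more carefully'' produces such commuting factors out of the Fredholm-theoretic construction you have in mind.

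The paper's resolution is to drop the requirement that $A$ be $\star$-real with respect to $C$ and instead observe that if $B\in\calU_{\star\RR}^\calL$ is hyper-local (obtained by extending the diagonal blocks of $U$ to $\star$-real unitaries, exactly as in \cref{lem:factorization for local unitaries} but using \cref{lem:extension of star fredholm partial isometries}), then $\widetilde C:=BC$ is again a real structure (since $(BC)^2=CB^\ast BC=C^2=\Id$), and $A:=UB^\ast$ is automatically $\star$-real with respect to $\widetilde C$ with $\Id-A\in\calK$. One then deforms $A\rightsquigarrow\Id$ along a $\widetilde C$-symmetric path $t\mapsto A_t$ (after a phase rotation to gap $-1$; your logarithm or the polar interpolation $\polar((1-t)A+t\Id)$ both work), and the key computation is that the \emph{product} path $t\mapsto A_tB$ is $\star$-real with respect to the original $C$: $A_tB\,C=BC\,A_t^\ast=CB^\ast A_t^\ast=C(A_tB)^\ast$. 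This reduces the problem to contracting the hyper-local $B$ blockwise, which is the non-local $\star\RR$-Kuiper statement you anticipated; the paper proves it (\cref{star-RH Kuiper}) via the measurable square root $h(U)=\exp(\ii\arg(U)/2)$, writing $U=h(U)\,C\,h(U)^\ast C^\ast$ and applying ordinary Kuiper to $h(U)$.
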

	
	Finally, there is the $\ZZ_2$ classification:
	\begin{thm}[Classification of $\star\HH$-local unitaries]\label{thm:star-quaternionic local unitaries} The space of $\star\HH$-local unitaries has two path-components. The map $\findex_{\Lambda,2}:\calU_{\star\HH}^\calL\to\ZZ_2$ is norm continuous and ascends to a bijection \eql{\findex_{\Lambda,2}:\pi_0(\calU_{\star\HH}^\calL)\xrightarrow{\sim}\ZZ_2\,.}
	\end{thm}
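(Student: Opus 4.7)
The plan is to follow the general scheme of \cref{thm:classification of RCH local unitaries}, modified to accommodate the two new difficulties of the $\star\HH$ setting: the index $\findex_{\Lambda,2}$ satisfies no logarithmic law, and a pointwise product of two $\star\HH$ unitaries is itself $\star\HH$ only when the factors commute (one computes $(AB)J=AJB^\ast=JA^\ast B^\ast$, whereas the $\star\HH$ property of $AB$ would demand $J(AB)^\ast=JB^\ast A^\ast$). Continuity of $\findex_{\Lambda,2}$ on $\calU_{\star\HH}^\calL$ is already in hand, so the map descends to $\pi_0(\calU_{\star\HH}^\calL)\to\ZZ_2$. Surjectivity is witnessed by $\Id$ (giving even index) together with an explicit $\star\HH$-local unitary built from a finite-rank quaternionic defect straddling the $\Lambda/\Lambda^\perp$ interface, for which a direct calculation shows $\dim\ker\bbLambda$ is odd.

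For injectivity the plan is, within each of the two classes, to connect any representative $U\in\calU_{\star\HH}^\calL$ by a local $\star\HH$-equivariant path to a fixed reference. The central idea is to deform $U$ within $\calU_{\star\HH}^\calL$ into a normal form $U=V\oplus W$ with respect to a splitting $\calH=\calH_V\oplus\calH_W$, where $\calH_V$ is a finite-dimensional, $J$-invariant, $\Lambda$-compatible subspace straddling the interface, $W$ is block-diagonal with respect to $\Lambda$, and $V$ is a finite-dimensional $\star\HH$ unitary on $\calH_V$. Because $\Lambda U\Lambda^\perp=-\Lambda[\Lambda,U]\in\calK$, the off-diagonal block is compact, and one extracts $\calH_V$ via a covariant polar decomposition of this block combined with \cref{lem:stability of essentially non trivial projections}; the hyper-locality $[J,\Lambda]=0$ from \cref{ass:real and quaternionic structures are hyperlocal} ensures that the decomposition can be chosen $J$-invariant. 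Once in normal form, $\bbLambda W$ has trivial kernel (its restriction to $\Lambda\calH$ is a unitary), so the entire $\ZZ_2$ index is carried by $V$. On each infinite-dimensional quaternionic summand of $W$, a quaternionic adaptation of the classical Kuiper trick provides a $\star\HH$-equivariant path to the identity; the remaining problem of connecting two finite-dimensional $\star\HH$ unitaries of equal $\ZZ_2$ index is a standard fact about path-components of $\calU_{\star\HH}(\CC^{2N})$, indexed by $\dim\ker\bbLambda\mod 2$.

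The main obstacle is the normal-form reduction itself: because $\star\HH$ is not preserved under non-commuting products, one cannot simply multiply $U$ by a correcting unitary to block-diagonalize it, but must engineer a continuous path $t\mapsto U_t\in\calU_{\star\HH}^\calL$ directly. Verifying that the covariant polar decomposition of $\Lambda U\Lambda^\perp$ isolates a $J$-invariant finite-dimensional defect while preserving simultaneously the compact-commutator condition, the $\star\HH$ symmetry, and the $\ZZ_2$ index is the most delicate step, and the place where the hyper-locality of $J$ and the quaternionic structure must be handled with the greatest care. The reliance on finite-dimensional representatives to carry the $\ZZ_2$ index (rather than a clean logarithmic splitting) is precisely what forces the entire proof to proceed by direct construction rather than by reduction to a null-homotopic product factor, and so this reduction step essentially replaces the logarithmic law invoked in \cref{thm:classification of RCH local unitaries}.
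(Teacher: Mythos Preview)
Your strategy has a genuine flaw in the odd-index case. You propose to deform any $U\in\calU_{\star\HH}^\calL$ to a block form $V\oplus W$ with $V$ a $\star\HH$ unitary on a finite-dimensional, $J$- and $\Lambda$-invariant subspace $\calH_V$, and then to invoke a ``standard fact'' that the path-components of $\calU_{\star\HH}(\CC^{2N})$ are indexed by $\dim\ker\bbLambda\bmod 2$. But that claim is false: for any finite-dimensional quaternionic space, the map $U\mapsto JU$ identifies $\calU_{\star\HH}(\CC^{2N})$ with the space of quaternionic structures on $\CC^{2N}$, which is the homogeneous space $U(2N)/Sp(N)$ and hence connected. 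Since $\findex_{\Lambda_V,2}$ is continuous on $\calU_{\star\HH}(\calH_V)$ (it is the Atiyah--Singer $\ZZ_2$ index of the $\star\HH$ operator $\bbLambda_V V$) and equals $0$ at $V=\Id$, it is identically zero. Consequently no finite-dimensional $\star\HH$ unitary can carry an odd $\ZZ_2$ index, and if your normal-form reduction succeeded it would force every $U$ to have $\findex_{\Lambda,2}U=0$, contradicting the theorem.

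The paper's approach avoids this trap. For the zero-index case it uses a multiplicative factorization $U=AB$ (\cref{lem:factorization for star RH local unitaries}) in which $B$ is $\Lambda$-diagonal and $\star\HH$, while $A$ is $\star$-quaternionic with respect to the \emph{twisted} structure $BJ$; this sidesteps the non-closure of $\star\HH$ under products. For the odd-index case it does \emph{not} attempt a finite-dimensional splitting: instead each $U$ is deformed to an operator $D$ of the form \cref{eq:B with off-diagonal entries}, whose diagonal blocks $B_L,B_R$ are partial isometries on the infinite-dimensional spaces $\ker\Lambda,\im\Lambda$ with one-dimensional kernels, completed to a unitary by a rank-two off-diagonal piece. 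The odd kernel dimension of these partial isometries is exactly what encodes the nontrivial $\ZZ_2$ index, and two such $D,\widetilde D$ are then connected via an explicit intertwiner built in \cref{lem:intertwines U and tiU in star H}. The essential point you are missing is that the $\ZZ_2$ obstruction lives in an odd-dimensional kernel of an \emph{infinite}-dimensional partial isometry, not in a finite-dimensional unitary summand.
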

	\begin{table}
		\begin{center}
			\begin{tabular}{|c|c|}
				\hline
				$\pi_0(\calU^\calL)$ & $\ZZ$\\\hline
				$\pi_0(\calU_\RR^\calL)$ & $\ZZ$\\\hline
				$\pi_0(\calU_\HH^\calL)$ & $2\ZZ$\\\hline
				$\pi_0(\calU_{\star\RR}^\calL)$ & $\Set{0}$\\\hline
				$\pi_0(\calU_{\star\HH}^\calL)$ & $\ZZ_2$\\
				\hline
			\end{tabular}
		\end{center}
		\caption{The classification of equivariant local unitaries.}
		\label{table:equivariant local unitaries}
	\end{table}
	
	The main technical tool to be used in \cref{thm:classification of RCH local unitaries,thm:star-quaternionic local unitaries,thm:star-real local unitaries} is a factorization principle, which we present and prove before tending to the proofs of the main theorems.
	\begin{lem}[factorization of local unitaries]\label{lem:factorization for local unitaries} For any ${\FF\in\Set{\CC,\RR,\HH}}$, let $U\in\calU_\FF^\calL$ such that \eq{\findex_\Lambda U = 0\,.} Then there exist two unitaries $A,B\in\calU_\FF^\calL$ such that $\Id-A\in\calK$, $[\Lambda,B]=0$ and such that \eql{ U = AB\,.} 
	\end{lem}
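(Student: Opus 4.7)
The plan is to realize $B$ as a block-diagonal unitary with respect to the orthogonal decomposition $\calH = \Lambda\calH \oplus \Lambda^\perp\calH$ and to set $A := U B^\ast$. Because $[\Lambda, U] \in \calK$, the off-diagonal corners $\Lambda U \Lambda^\perp$ and $\Lambda^\perp U \Lambda$ are already compact, so it is enough to choose $B = B_- \oplus B_+$ with each $B_\pm$ an equivariant unitary on the respective half that is a compact perturbation of the diagonal corner $T_- := \Lambda U \Lambda|_{\Lambda\calH}$, $T_+ := \Lambda^\perp U \Lambda^\perp|_{\Lambda^\perp\calH}$. Then $U - B \in \calK$, hence $\Id - A = (B-U)B^\ast \in \calK$, while $A$ is unitary and of the requisite symmetry by construction.

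To produce $B_-$, I first observe that $T_-$ is essentially unitary on $\Lambda\calH$: writing $\Lambda^\perp U \Lambda = \Lambda^\perp[U,\Lambda]$ and using $U^\ast U = \Id$ gives $\Id_{\Lambda\calH} - T_-^\ast T_- = \Lambda U^\ast \Lambda^\perp [U,\Lambda] \in \calK$, and similarly for $T_- T_-^\ast$. Thus $T_-$ is Fredholm, its kernel and cokernel are canonically identified with $\ker \bbLambda U$ and $\ker \bbLambda U^\ast$, and the assumption $\findex_\Lambda U = 0$ yields $\dim \ker T_- = \dim \ker T_-^\ast$. Taking the polar decomposition $T_- = V_- |T_-|$, the identity $|T_-|^2 - \Id_{\Lambda\calH} \in \calK$ together with the spectral theorem (inverting $|T_-|+\Id_{\Lambda\calH}$) gives $|T_-| - \Id_{\Lambda\calH} \in \calK$ and hence $T_- - V_- = V_-(|T_-| - \Id_{\Lambda\calH}) \in \calK$. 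Because $\ker T_-$ and $\ker T_-^\ast$ are finite-dimensional of equal dimension, I extend the partial isometry $V_-$ to a full unitary $B_-$ on $\Lambda\calH$ by choosing any unitary $\ker T_- \to \ker T_-^\ast$; the extension modifies $V_-$ by a finite-rank operator, so $B_- - T_- \in \calK$. The construction of $B_+$ is identical, its index being $-\findex_\Lambda U = 0$ by unitarity of $U$ on all of $\calH$.

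The main obstacle will be to perform the extension equivariantly when $\FF \in \{\RR, \HH\}$. By \cref{ass:real and quaternionic structures are hyperlocal}, the structure $C$ or $J$ commutes with $\Lambda$, and by hypothesis it commutes (in the sense appropriate to the class) with $U$, and hence with $T_\pm$. Uniqueness of the polar decomposition then propagates the symmetry to $|T_\pm|$ and to $V_\pm$, so the only remaining freedom is the chosen unitary $\ker T_\pm \to \ker T_\pm^\ast$, which must be arranged to intertwine the structure. For $\FF = \RR$ both spaces are finite-dimensional real Hilbert spaces of equal real dimension, and such a unitary is elementary. For $\FF = \HH$ both spaces are quaternionic Hilbert spaces of equal (necessarily even) complex dimension, so a $J$-equivariant unitary between them exists; this is the algebraic echo of $\findex_\Lambda$ landing in $2\ZZ$ on $\calU_\HH^\calL$, but for the present lemma all that is needed is equality of dimensions, which is supplied by $\findex_\Lambda U = 0$.

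Assembling the pieces, $B = B_- \oplus B_+$ is by construction a hyper-local, equivariant unitary, and $A := U B^\ast$ is an equivariant unitary with $\Id - A \in \calK$, giving the required factorization $U = AB$. The only substantive step beyond routine Fredholm and polar-decomposition manipulations is the finite-dimensional, symmetry-adapted extension of the kernel isomorphism, which is why the argument is restricted to $\FF \in \{\CC, \RR, \HH\}$ where both kernel and cokernel inherit a genuine linear structure over the corresponding field.
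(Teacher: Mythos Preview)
Your argument is correct and follows the same route as the paper: block-decompose $U$ with respect to $\Lambda$, observe the diagonal corners are essentially unitary Fredholm of index zero, extend their polar parts to genuine equivariant unitaries by an equivariant finite-rank correction on the kernels, set $B$ to be the resulting block-diagonal unitary and $A=UB^\ast$. The paper packages the extension step as a separate lemma (\cref{lem:extension of zero index Fredholm RCH partial isometries to unitaries}, using \cref{lem:basis in RCH symmetry} for the equivariant kernel isomorphism and \cref{lem:polar part preserve symmetry} for the polar part), whereas you inline it; otherwise the two proofs coincide.
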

	\begin{proof}
		Let $\FF\in\Set{\CC,\RR,\HH}$, and $F\in\Set{\Id,C,J}$ accordingly. Let us decompose $U$ in $\calH=(\im\Lambda)^\perp\oplus \im \Lambda $ as \eql{\label{eq:Lambda decomposition of U}U=\begin{bmatrix}U_{LL} & U_{LR} \\ U_{RL} & U_{RR}\end{bmatrix}\,.} 
		
		$U\in\calU^\calL$ implies both $U_{LR},U_{RL}\in\calK$, and $U_{ii}$ is \emph{essentially unitary}: $\Id -|U_{ii}|^2,\Id-|U_{ii}^\ast|^2\in\calK$ (see \cite{Murphy1990}). By \cref{ass:real and quaternionic structures are hyperlocal}, the subspaces $\im\Lambda,(\im\Lambda)^\perp$ are invariant under the action of $\FF$-structures, i.e., $F$ is diagonal in this grading and (hopefully without confusion) we do not give each block a separate symbol. Thus $UF=FU$ implies $U_{ii}F=FU_{ii}$ for $i\in\Set{L,R}$. Now $\findex_\Lambda U=0$ is equivalent to $\findex U_{RR}=0$.  Using $\findex A\oplus B=\findex A+\findex B$, the fact that $U$ is unitary, and that $\findex$ is invariant under compact perturbations, we have \eq{0=\findex U=\findex U_{LL}\oplus U_{RR}=\findex U_{LL} +\findex U_{RR}=\findex U_{LL}} so that $\findex U_{LL}=0$ as well. Hence $U_{ii}$ is an essential unitary Fredholm operator of zero index, so that applying \cref{lem:extension of zero index Fredholm RCH partial isometries to unitaries} below on $U_{ii}$ we obtain some $B_{ii}\in\calU_\FF$ which differs from $U_{ii}$ by a compact; we point out $B_{ii}$ is a unitary on one of the smaller spaces $\im\Lambda$,$\im\Lambda^\perp$.
		
		Let $B=B_{LL}\oplus B_{RR}$. Define $A=UB^*$ from which $A\in\calU_\FF^\calL$ follows. To see that $\Id-A\in\calK$, write $U_{ii}=B_{ii}+K_{ii}$ for some $K_{ii}\in\calK$. We have \eql{A =\begin{bmatrix}B_{LL}+K_{LL} & U_{LR} \\ U_{RL} & B_{RR}+K_{RR}\end{bmatrix}\begin{bmatrix}B_{LL}^* & 0 \\ 0 & B_{RR}^*\end{bmatrix} = \begin{bmatrix}\Id + K_{LL}B_{LL}^* & U_{LR}B^*_{RR} \\ U_{RL}B^*_{LL} & \Id + K_{RR}B^*_{RR}\end{bmatrix} \label{eq:A is compact perturbation from identity}\,.}
	\end{proof}

	For the $\star$ classes, we need an adjusted factorization statement:  
	\begin{lem}[factorization of local $\star$-unitaries]\label{lem:factorization for star RH local unitaries} For any ${\FF\in\Set{\star\RR,\star\HH}}$ and $F\in\Set{C,J}$ accordingly, let $U\in\calU_\FF^\calL$. If $\FF=\star\HH$, we furthermore assume \eq{\findex_{\Lambda,2} U = 0\,.} Then there exist two unitaries $A,B$ with \eq{ U = AB } such that $B\in\calU^\calL_\FF$ with $[\Lambda,B]=0$, and such that $A$ is local, has $\Id-A\in\calK$, and is furthermore $\star\FF$ with respect to $\widetilde{F}:=BF$ (i.e., $A(BF)=(BF)A^*$).
	\end{lem}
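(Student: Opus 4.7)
The plan is to mimic the proof of \cref{lem:factorization for local unitaries} in the $\star$-setting: decompose $U$ in the $\Lambda$-grading, extend each diagonal block to an $\FF$-equivariant unitary which is compactly close to it, and set $A := UB^*$. The new feature is that $B$ now itself carries the $\star$-symmetry (rather than a linear $\RR/\HH$-symmetry), and this symmetry gets absorbed into the shifted structure $\widetilde{F} := BF$ under which $A$ becomes $\FF$-equivariant.

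First, I would decompose $U$ in the $\Lambda$-grading as in \cref{eq:Lambda decomposition of U}. Since $F\in\{C,J\}$ commutes with $\Lambda$ by \cref{ass:real and quaternionic structures are hyperlocal}, it is block-diagonal $F=F_L\oplus F_R$, and each $F_i$ inherits $F_i^2=F^2$, hence defines a real or quaternionic structure on its subspace. The off-diagonal blocks $U_{LR},U_{RL}$ are compact, the diagonal blocks $U_{LL},U_{RR}$ are essentially unitary, and the relation $UF=FU^*$ descends blockwise to $U_{ii}F_i=F_iU_{ii}^*$, making each diagonal block an $\FF$-symmetric essentially unitary Fredholm operator.

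Next, I would extend each $U_{ii}$ to a genuine $\FF$-unitary $B_{ii}$ on its subspace with $U_{ii}-B_{ii}\in\calK$, via the $\FF$-equivariant analog of the extension lemma used in \cref{lem:factorization for local unitaries}. For $\FF=\star\RR$ no index obstruction is present. For $\FF=\star\HH$ the obstruction is the $\ZZ_2$ index: the hypothesis gives $\findex_{\Lambda,2}U=\findex_2 U_{RR}=0$, and additivity of $\findex_2$ under direct sum together with $\findex_2(U_{LL}\oplus U_{RR})=\findex_2 U=0$ (by compact stability and the fact that a unitary has trivial kernel) forces $\findex_2 U_{LL}=0$ as well. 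Setting $B:=B_{LL}\oplus B_{RR}$ then produces a hyper-local $\FF$-unitary satisfying $U-B\in\calK$.

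Finally, I would verify the properties of $A:=UB^*$. Unitarity and locality are immediate, and the block computation \cref{eq:A is compact perturbation from identity} shows $\Id-A\in\calK$. For the $\FF$-property of $A$ relative to $\widetilde F:=BF$, one computes
\[ A\widetilde F \;=\; UB^*\cdot BF \;=\; UF \;=\; FU^* \;=\; (FB^*)(BU^*) \;=\; (BF)(BU^*) \;=\; \widetilde F A^*, \]
where the third equality uses $U\in\calU_\FF^\calL$ and the fifth uses $BF=FB^*$, the corresponding $\FF$-property of $B$. A short calculation then confirms $\widetilde F^2=(BF)(BF)=B(FB)F=B(B^*F)F=F^2$ and $[\widetilde F,\Lambda]=0$, so $\widetilde F$ is a bona fide hyper-local real or quaternionic structure of the same type as $F$. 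The main technical obstacle is the $\FF$-equivariant block extension used in step three; in particular, in the $\star\HH$ case one must extract from the single hypothesis $\findex_{\Lambda,2}U=0$ the vanishing of \emph{both} diagonal $\ZZ_2$-indices, since the extension is performed blockwise and not merely on the block seen by $\bbLambda$.
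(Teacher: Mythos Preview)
Your proposal is correct and follows essentially the same route as the paper's proof: block-decompose $U$ in the $\Lambda$-grading, use the $\star$-equivariant extension lemma (the paper's \cref{lem:extension of star fredholm partial isometries}) on each diagonal block to build $B=B_{LL}\oplus B_{RR}$, and then verify the symmetry of $A:=UB^*$ relative to $\widetilde F:=BF$ via the same short computation. Your handling of the $\ZZ_2$-index obstruction in the $\star\HH$ case (passing from $\findex_2 U_{RR}=0$ to $\findex_2 U_{LL}=0$ via compact stability and additivity of $\findex_2$ under direct sums) is exactly the paper's argument invoking \cref{cor:compact perturbation preserve Z2 index}.
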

	\begin{proof}
		Let us decompose $U$ in $\calH=(\im\Lambda)^\perp\oplus \im \Lambda $ as in \cref{eq:Lambda decomposition of U} with the properties of $U_{ij}$ listed there. Using \cref{lem:extension of star fredholm partial isometries} below, for $i=L,R$, we have $B_{ii}\in\calU_\FF$ whose difference from $U_{ii}$ is compact. This is justified because $\findex_{\Lambda,2} U_{ii} = 0$. Indeed, by hypothesis, this holds for $i=R$, and since $U$ is unitary, $\findex_{2} U = 0$. Now, using \cref{cor:compact perturbation preserve Z2 index}, we conclude that $\findex_{2} U_{LL}\oplus U_{RR} = 0$. So $\findex_{2} U_{LL} = 0$ too.
		
		Let $B=B_{LL}\oplus B_{RR}\in\calU_F^\calL$ and $A=UB^*\in \calU^\calL$. We have $\Id-A\in\calK$ from \cref{eq:A is compact perturbation from identity}. However, now $\calB_\FF$ is not an algebra. Nonetheless, observe that $(BF)^2=BFBF=FB^*BF=F^2$ so that $\ti{F}:=BF$ (instead of $F$) defines an $\FF$-structure with which $A$ is $\star$-real or quaternionic: \eq{A(BF)=UF=F U^*=F B^*A^*= (BF) A^*\,.}
	\end{proof}

	Using the factorization lemmas, we may tend to our three theorems. In regards to the continuity of $\findex_{(2),\Lambda}\equiv\findex_{(2)}\circ\bbLambda$, it is a consequence of the norm continuity of $\findex_{(2)}:\calF\to\ZZ_{(2)}$ and the trivial fact that $\bbLambda:\calB\to\calB$ is continuous. This statement is true regardless of $\FF$. 
	
	So we merely need to show surjectivity (when applicable), well-definedness (when applicable) and injectivity.

	\begin{proof}[Proof of \cref{thm:classification of RCH local unitaries}] 
		
		We start with surjectivity for $\FF=\CC$. Since $\Lambda$ is non-trivial, there is an ONB for $\calH$, $\Set{\vf_n}_{n\in\ZZ}$, such that $\Set{\vf_n}_{n\leq0}$ spans $\ker\Lambda$ and $\Set{\vf_n}_{n>0}$ spans $\im\Lambda$. Define a unitary operator $R$ on $\calH$ via \eq{R\vf_n := \vf_{n+1}\qquad(n\in\ZZ)\,.} Since $\ip{\vf_n}{[R,\Lambda]\vf_m}=\left(\chi_\NN(n)-\chi_\NN(m)\right)\delta_{n,m+1}$ this commutator is trace-class and hence compact, so that $R\in\calU^\calL$. Moreover, clearly  \eq{\findex_\Lambda R^k = -k\qquad(k\in\ZZ)} so that $\findex_\Lambda:\calU^\calL\to\ZZ$ is surjective.

		We turn to surjectivity in the case  $\FF=\RR$. We can use the same proof as above, being careful to make sure that the shift operator $R$ we define obeys $[R,C]=0$. To do so, using the fact $[\Lambda,C]=0$, we may let $C$ act on each subspace $\ker\Lambda,\im\Lambda$ separately, and we invoke \cref{lem:basis in RCH symmetry} on each to obtain an ONB for each subspace, $\Set{\vf_n}_{n\leq0}$ and $\Set{\vf_n}_{n>0}$ respectively, which is fixed by $C$. Defining again $R\vf_i:=\vf_{i+1}$ then does the job.
		
		We next turn to the case $\FF=\HH$. Let us begin with well-definededness of the index, i.e., we show that $\findex_\Lambda(U)\in 2\ZZ$ for all $U\in\calU_\HH^\calL$. For such a $U$, since $[J,\Lambda]=0$, we may let $J$ act separately on $\ker\Lambda,\im\Lambda$. Moreover, $[U,J]=0$ implies $[U_L,J]=0$ where $U_L\equiv\Lambda^\perp U \Lambda^\perp:\ker\Lambda\to\ker\Lambda$. As such, we have $J\ker U_L=\ker U_L$ and $J \ker U_L^\ast = \ker U_L^\ast$. As a result, these two spaces are even dimensional thanks to \cref{lem:basis in RCH symmetry} and hence the index is even.
		
		Finally we turn to the surjectivity of $\findex_\Lambda:\calU^\calL_\HH\to2\ZZ$. Since $[J,\Lambda]=0$, we may let $J$ act separately on $\ker\Lambda,\im\Lambda$ and invoke \cref{lem:basis in RCH symmetry} separately on them to obtain two orthonormal bases \eql{\label{eq:J equivariant basis for ker im Lambda}\Set{\vf_n,\psi_n}_{n\leq0},\Set{\vf_n,\psi_n}_{n>0}} for $\ker\Lambda,\im\Lambda$ respectively, which obey the property that $J\vf_n=\psi_n$. We then define a unitary operator $R$ on $\calH$ via \eq{\vf_n\mapsto\vf_{n+1}\qquad\psi_n\mapsto\psi_{n+1}\qquad(n\in\ZZ)} and note that \eq{RJ\varphi_i=R\psi_i=\psi_{i+1}=J\varphi_{i+1}=JR\varphi_i \\ RJ\psi_i=-R\varphi_i=-\varphi_{i+1}=J\psi_{i+1}=JR\psi_i\,.} Thus $RJ=JR$ so that $R\in\calU^\calL_\HH$ and $\findex_\Lambda R^k=-2k$ for all $k\in\ZZ$.
		
		We are left with establishing injectivity, which we can do for all three  $\FF=\CC,\RR,\HH$. It is tantamount to the following statement: given any $U,V\in\calU_\FF^\calL$ such that $\findex_\Lambda U = \findex_\Lambda V$, there exists a (norm) continuous path \eq{ \gamma:[0,1]\to \calU_\FF^\calL} such that $\gamma(0)=U$ and $\gamma(1)=V$. Thanks to \cref{lem:log law for index_Lam} we may WLOG assume that $V=\Id$ and hence that $\findex_\Lambda U =0$. But then an application of \cref{lem:factorization for local unitaries} on $U$ yields \eq{ U=AB } for $A,B\in\calU_\FF^\calL$ with $\Id-A\in\calK$ and $[\Lambda,B]=0$. This first of all implies that $A$ can only have accumulation of spectrum at $+1\in\bS^1$. Now the analysis divides according to the value of $\FF$. In the simplest case, if $\FF=\CC$, let $\alpha\in\bS^1\setminus\sigma(A)$. Then $\log_\alpha$, which is the \emph{holomorphic} logarithm with branch cut at $\alpha$, defines a \emph{local} self-adjoint operator $-\ii \log_\alpha(U)$. With that, Kuiper's path \cref{eq:Kuiper's path} passes within $\calU^\calL$ thanks to \cref{lem:continuous functional calculus of normals is local}. For $\FF=\RR,\HH$, the claim is shown via \cref{lem:deform A to Id obeying RCH symmetry} right below.
		
		Next, since $[B,\Lambda]=0$, we may write $B = B_L \oplus B_R$ in the grading $\calH =(\im\Lambda)^\perp\oplus\im\Lambda$, for two unitaries $B_L,B_R\in\calU_\FF$. Now \cref{thm:Kuiper} guarantees paths $B_L,B_R\to\Id$ which pass within $\calU_\FF$. Taking the direct sum of these two paths we obtain a diagonal, and hence local path $B\to\Id$ within $\calU_\FF^\calL$. 
		
		Combining the two separate paths from either $A,B$ to $\Id$ by multiplication yields the desired path.
	\end{proof}
	
	The following result which was used just above shows that when a local unitary has a gap, it may be deformed to the identity in a local way, i.e., Kuiper's path \cref{eq:Kuiper's path} may be taken as local using \cref{lem:continuous functional calculus of normals is local}. Next, we establish this also in the presence of symmetries:
	\begin{lem}\label{lem:deform A to Id obeying RCH symmetry}
		Let $\FF\in\Set{\RR,\HH}$ and $A\in\calU_\FF^\calL$ such that $\Id-A\in\calK$. Then there exists a continuous path $[0,1]\ni t\mapsto A_t\in\calU_\FF^\calL$ such that $A_0=A$ and $A_1=\Id$.
	\end{lem}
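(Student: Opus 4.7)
The plan is to split $A$ spectrally around $-1$, treat the bulk via principal-logarithm functional calculus, and handle the residual $-1$-eigenspace by a direct operator-level deformation. Since $\Id-A\in\calK$, the spectrum $\sigma(A)$ accumulates only at $+1$, so $\{-1\}$ is at worst an isolated eigenvalue of finite complex multiplicity. Picking a continuous bump function $f:\bS^1\to[0,1]$ supported in a small arc around $-1$ that misses the other points of $\sigma(A)$, symmetric about the real axis, and with $f(-1)=1$, the spectral projection $P:=f(A)$ onto $E:=\ker(A+\Id)$ is finite-rank, commutes with $C$ (resp.\ $J$) since $A$ does and $f$ is real-symmetric, and is local by \cref{lem:continuous functional calculus of normals is local}. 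This yields an $\FF$-equivariant decomposition $\calH=E\oplus E^\perp$ with $A=-\Id_E\oplus A'$ and $\sigma(A')\subseteq\bS^1\setminus\{-1\}$.

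First I would deform $A'$ to $\Id_{E^\perp}$ via the principal logarithm. Since $-1\notin\sigma(A')$, the operator $H:=-\ii\log(A')$ is well-defined and self-adjoint, and the identity $\log(\bar z)=\overline{\log z}$ on $\bS^1\setminus\{-1\}$ together with anti-linearity of $C$ (resp.\ $J$) shows $\{H,C\}=0$ (resp.\ $\{H,J\}=0$), i.e.\ $H\in\calB_{\ii\FF}$; a short spectral computation then yields $e^{\ii sH}\in\calU_\FF(E^\perp)$ for every $s\in\RR$. The path $t\mapsto e^{\ii(1-t)H}$ is thus norm-continuous in $\calU_\FF(E^\perp)$ from $A'$ to $\Id_{E^\perp}$, and local by \cref{lem:continuous functional calculus of normals is local}; extending by $-\Id_E$ produces a path in $\calU_\FF^\calL$ from $A$ to $-\Id_E\oplus\Id_{E^\perp}$. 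It remains to deform $-\Id_E$ to $\Id_E$ inside $\calU_\FF(E)$; such a deformation, extended by $\Id_{E^\perp}$, is a finite-rank perturbation of $\Id$ and hence automatically local. For $\FF=\HH$ the $J$-structure forces $\dim_\CC E=2k$ to be even and $\calU_\HH(E)\cong Sp(k)$ is a connected compact Lie group, giving a path; for $\FF=\RR$ with $n:=\dim_\CC E$ even, $-\Id_n\in SO(n)$ is path-connected to $\Id_n$.

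The hard case is $\FF=\RR$ with $n$ odd, where $-\Id_n\in O(n)\setminus SO(n)$ and no finite-dimensional path to $\Id_n$ exists inside $\calU_\RR(E)$. I expect to resolve this by a Kuiper-style swindle that exploits the infinite-dimensionality of $\ker(A-\Id)$ (guaranteed by $\sigma_{\mathrm{ess}}(A)=\{1\}$) together with $[C,\Lambda]=0$. The idea is to select an infinite-dimensional $C$-invariant subspace $W\subseteq\ker(A-\Id)$ whose intersections with $\im\Lambda$ and $(\im\Lambda)^\perp$ are both infinite-dimensional, and then simultaneously perform $SO(2)$ rotations in $\Lambda$-homogeneous 2-planes: one pair couples the odd $-1$-eigenvector of $E$ with a $C$-fixed vector in $W\cap\im\Lambda$ (or in its complement, chosen according to which side of $\Lambda$ the original eigenvector lies), while the remaining pairs sit entirely inside $W$; together they drive $-\Id_{E\oplus W}$ to $\Id_{E\oplus W}$. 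Because each 2-plane sits inside a single $\Lambda$-block, the composite rotation commutes with $\Lambda$ and is therefore hyper-local. The delicate points to verify are (i) extracting such a $\Lambda$-compatible $W\subseteq\ker(A-\Id)$, which uses locality of $A$ and non-triviality of $\Lambda$, and (ii) assembling the countable family of rotations into a single norm-continuous path, for which one invokes the infinite-dimensional real Kuiper theorem applied on the $\Lambda$-graded subspace.
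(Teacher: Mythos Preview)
Your strategy is close to the paper's: both isolate the finite-dimensional $-1$-eigenspace $E$ and first deform $A$ to $(-\Id_E)\oplus\Id_{E^\perp}$ (you via the principal logarithm, the paper via $t\mapsto\polar((1-t)A+t\Id)$; either works, and your locality and symmetry checks for this step are correct once one notes the path is a continuous function of $A$ on $\calH$). For $\FF=\HH$, and for $\FF=\RR$ with $\dim E$ even, your finite-dimensional argument on $E$ extended by $\Id_{E^\perp}$ is fine.

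The gap is the odd real case. Two concrete problems with your swindle: first, a $C$-fixed vector $e\in E$ need not be $\Lambda$-homogeneous, so there is no ``side of $\Lambda$'' on which it lies and hence no $\Lambda$-homogeneous $2$-plane containing it; second, you write that the rotations ``drive $-\Id_{E\oplus W}$ to $\Id_{E\oplus W}$'', but the actual starting operator on $E\oplus W$ is $(-\Id_E)\oplus\Id_W$, and on the single $2$-plane $\szpan(e,w)$ this is $\diag(-1,1)\in O(2)\setminus SO(2)$, which no $SO(2)$ rotation connects to $\Id_2$. One can rescue the idea by dropping the $2$-plane picture altogether: take $W=E^\perp\cap\im\Lambda$ (infinite-dimensional, $C$-invariant, $\Lambda$-invariant), apply real Kuiper directly on the infinite-dimensional $E\oplus W$ to get a path $(-\Id_E)\oplus\Id_W\rightsquigarrow\Id_{E\oplus W}$, and then check that its extension by $\Id_{(E\oplus W)^\perp}$ is $\Lambda$-local because the only failure of $\Lambda$-invariance of $E\oplus W$ comes from the finite-rank piece $E$. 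But this is not what you wrote.

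The paper avoids the parity issue by going the other way: instead of shrinking $-\Id_E$ to $\Id_E$, it \emph{enlarges} it, deforming $\Id_{E^\perp}\to -\Id_{E^\perp}$ on the infinite-dimensional complement (where Kuiper applies with no parity obstruction) to reach $-\Id_\calH$, and then deforms $-\Id_\calH\to\Id_\calH$ diagonally in the $\Lambda$-grading. Locality of the first step is arranged by working diagonally with respect to a genuine projection $P$ on $E^\perp$ that differs from the compression $\Lambda_{22}$ by a compact (\cref{lem:essential projection is compact away from genuine projection}). This detour through $-\Id$ handles all cases uniformly with no case split on $\dim E$.
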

	\begin{proof}
		Since $\Id-A\in\calK$, $\sigma(A)\neq\bS^1$, and in fact, all the spectrum of $A$ outside of $1$ consists of finitely degenerate eigenvalues.
		
		Suppose $-1\notin\sigma(A)$, then there is a gap around $-1$ in $\sigma(A)$ and \eq{A_t=\polar((1-t)A+t\Id)} is the path we need. Indeed, the polar part $A\mapsto A|A|^{-1}$ is a norm continuous mapping on operators that have a spectral gap about zero, which $(1-t)A+t\Id$ does for any $t\in[0,1]$ thanks to $-1\notin\sigma(A)$. Moreover, $\polar$ preserves symmetry by \cref{lem:polar part preserve symmetry} and locality by \cref{lem:continuous functional calculus of normals is local}. Indeed, $|(1-t)A+t\Id|^2$ is clearly local, and $\lambda\mapsto\lambda^{-1/2}$ is a continuous function.
		
		Now assume $-1\in \sigma(A)$. Let $V:=\ker (A+\Id)$ denote the $-1$ eigenspace for $A$. For brevity let $F=C,J$ according to the value of $\FF$. Since $AF=FA$, if $A\varphi=-\varphi$, then $AF\varphi=FA\varphi=-F\varphi$, i.e., $F\varphi\in V$ iff $\varphi\in V$. Thus \eq{FV=V,\quad FV^\perp=V^\perp} Note the space $V$ is finite dimensional since $-1$ is in the discrete spectrum of $A$. We decompose $A$ in $\calH=V\oplus V^\perp$ as \eq{A=\begin{bmatrix}-\Id & 0 \\ 0 & A_{V^\perp}\end{bmatrix}=\begin{bmatrix}-\Id & 0 \\ 0 & \Id\end{bmatrix}\begin{bmatrix}\Id & 0 \\ 0 & A_{V^\perp}\end{bmatrix}} Now $\widetilde{A}:=\Id\oplus A_{V^\perp}$ belongs to $\calU_\FF^\calL$ (as $A\in\calU^\calL_\FF$ and $(-\Id)\oplus\Id$ is unitary, local, and commutes with $F$) and $-1\notin \sigma(\widetilde{A})$, one can deform $\widetilde{A}\to \Id$ within $\calU_\FF^\calL$ as shown in the first paragraph.
		
		We now deform $(-\Id_V)\oplus \Id_{V^\perp} $ to $-\Id$ within $\calU_\FF^\calL$. Since $-\Id_V$ is already where it should be, we concentrate on deforming $\Id_{V^\perp}$ to $-\Id_{V^\perp}$ in a $\Lambda$-local way. We decompose $\Lambda$ in $\calH=V\oplus V^\perp$ as 
		\eq{\Lambda=\begin{bmatrix}\Lambda_{11} & \Lambda_{12} \\ \Lambda_{21} & \Lambda_{22}\end{bmatrix}\,.} 
		Here $\Lambda_{22}$ is essentially a projection (as $V$ is finite dimensional), and hence \cref{lem:essential projection is compact away from genuine projection} there exists a self-adjoint projection $P:V^\perp\to V^\perp$ such that $P-\Lambda_{22}\in\calK$. In particular, $[P,F]=0$ too ($F$ is diagonal in the $V$ decomposition). Now, to deform $\Id_{V^\perp}$ to $-\Id_{V^\perp}$ in a $P$-local way, since $\Id_{V^\perp}$ is diagonal in a $P$-grading of $V^\perp$, we may deform each diagonal block separately using Kuiper \cref{thm:Kuiper} and that path is guaranteed to be $P$-local as it is $P$-diagonal. Let $t\mapsto W_t$ denote this deformation from $\Id_{V^\perp}$ to $-\Id_{V^\perp}$. Then $W_t\in\calU_\FF$ and $[W_t,P]=0$. In particular \eq{[W_t,\Lambda_{22}]=[W_t,(\Lambda_{22}-P)+P]=[W_t,\Lambda_{22}-P]\in\calK\,.} 
		Thus $(-\Id)\oplus W_t \in \calU_\FF^\calL$ deforms $(-\Id)\oplus \Id$ to $-\Id$ \emph{in a $\Lambda$-local way}. Now we can deform $-\Id$ to $\Id$ within $\calU_\FF^\calL$ by decomposing $-\Id$ in the $\Lambda$ grading and the argument proceeds similarly.
	\end{proof}

	Next, we turn to the $\star\RR,\star\HH$ classes. 
	\begin{proof}[Proof of \cref{thm:star-quaternionic local unitaries}] 
		We begin by establishing surjectivity (only for the case $\FF=\star\HH$, the other case being nullhomotopic). Clearly, we have $\Id\in\calU_{\star\HH}^\calL$ and $\findex_{\Lambda,2}(\Id)=0$. We are left to construct some $U\in\calU_{\star\HH}^\calL$ with $\findex_{\Lambda,2}(U)=1$. Using the same basis choice as in \cref{eq:J equivariant basis for ker im Lambda} we define a unitary operator $U$ on $\calH$ via \eq{\vf_n\mapsto \vf_{n-1}, \quad \psi_n\mapsto \psi_{n+1}\qquad (n\in \ZZ)} and note that \eq{UJ\vf_n=U\psi_n=\psi_{n+1}=J\vf_{n+1}=JU^*\vf_n \\ UJ\psi_n=-U\vf_n=-\vf_{n-1}=J\psi_{n-1}=JU^*\psi_{n}\,.}
		Thus $UJ=JU^*$ so that $U\in\calU^\calL_{\star\HH}$. In particular, since $U\cong R\oplus R^\ast$ ($R$ being the right shift operator) in the decomposition $\szpan(\Set{\vf_n}_n)\oplus\szpan(\Set{\psi_n}_n)$, $\dim\ker\bbLambda R\oplus R^\ast = 1$ and hence $\findex_{\Lambda,2}(U)=1$.

		Next, we deal with the proof of injectivity for both $\FF=\star\RR,\star\HH$ (with $F=C,J$ accordingly). Let $U\in\calU_\FF^\calL$ and if $\FF=\star\HH$, assume for the moment that $\findex_{\Lambda,2}U=0$, the other case will be dealt with separately. 
		Applying \cref{lem:factorization for star RH local unitaries} on $U$ yields \eq{U=AB} where $A\in\calU^\calL$ with $A(BF)=(BF)A^*$ and $\Id-A\in\calK$, and $B\in\calU_\FF^\calL$ and $[\Lambda,B]=0$. Since $\Id-A\in\calK$, then the spectrum of $A$ can only accumulate at $+1$. We may rotate, e.g., anti-clockwise by $\theta$ degree, the spectrum $\sigma(A)\subset \bS^1$ of $A$ so that there is a gap at $-1$. 
		In particular $e^{i\theta}A\in\calU_\FF$ since $F$ is anti-$\CC$-linear, so \eq{e^{i\theta}A F=F e^{-i\theta}A^\ast=F(e^{i\theta}A)^*} Thus WLOG we may assume there is a gap at $-1$ in the spectrum of $A$. Now consider the path \eq{[0,1]\ni t\mapsto A_t=\polar((1-t)A+t\Id)\,.} 
		Then $A_t\in\calU^\calL_\FF$ with respect to $BF$ by \cref{lem:polar part preserve symmetry}. Consider $t\mapsto A_tB$ which deforms $U$ to $B$. We have \eq{A_tB F=BFA_t^*=FB^*A_t^*=F (A_tB)^*\,.} 
		Thus $A_tB\in\calU_\FF^\calL$ for all $t$. 
		Finally, since $[B,\Lambda]=0$, we may use \cref{star-RH Kuiper} to deform each diagonal block of $B$ in the $\Lambda$ grading to $\Id$, resulting in $B\to \Id$ within $\calU_\FF^\calL$.
		
		Finally, consider $\FF=\star\HH$ assuming that $\findex_{\Lambda,2} U=1$. Unlike in the ordinary $\ZZ$-valued index where we use the logarithmic law \cref{lem:log law for index_Lam} and then we may always connect a zero-index operator to $\Id$, in the present case for the $\ZZ_2$ index, we rather directly argue by connecting any two non-zero index operators together. Hence consider $U,\widetilde{U}\in\calU_{\star\HH}^\calL$ both with non-zero index.   
		
		Let us start by deforming $U$ (and also $\ti U$) to a more convenient non-zero index operator. Decompose $U$ in block form in the $\Lambda$ grading as before in \cref{eq:Lambda decomposition of U}. For $Z=U_{LL}$ for brevity and $X=\polar(Z)\in\calB_{\star\HH}$, let $m$ be defined by  $2m+1:=\dim\ker X=\dim (\im X)^\perp $. We use \cref{cor:compact perturbation preserve Z2 index} to conclude that $\dim\ker U_{RR}$ is odd as well.
		Similar to the proof in \cref{lem:factorization for star RH local unitaries}, we may extend $X$ to a \emph{partial isometry} $Y\in\calB_{\star\HH}$ that has $\dim \ker Y=\dim (\im Y)^\perp = 1$, and such that $Y-Z\in \calK$. Indeed, we cannot extend $Y$ to a unitary in $\calU_{\star\HH}$ precisely since the kernels are not even dimensional (see \cref{lem:extension of star fredholm partial isometries} whose hypothesis is a zero index). 
		Let $B_{L} := Y$ and $B_{R}$ the corresponding construction out of $U_{RR}$. Even though we cannot extend $B_{i}$ to a unitary separately, we will show that this may be done on the bigger space for the direct sum $B:=B_{L}\oplus B_{R}$. Let $\ker B_{i}$ be spanned by the unit vector $\eta_i$. Since $B_i\in \calB_{\star\HH}$, $(\im B_{i})^\perp$ is spanned by the vector $\xi_i:=J\eta_i$. 
		We now define an operator $M$, which is unitary when considered as a map $M:\ker B\to(\im B)^\perp$, written in the $\Lambda$-grading as \eq{M:=\begin{bmatrix}0 & \xi_L\otimes \eta_R^* \\ -\xi_R \otimes \eta_L^* & 0\end{bmatrix}} that satisfies $MJ=JM^*$. Indeed, we have 
		\eq{MJ 
			& =\begin{bmatrix}0 & \xi_L\otimes \eta_R^* \\ -\xi_R \otimes \eta_L^* & 0\end{bmatrix} \begin{bmatrix}J & 0 \\ 0 & J\end{bmatrix} 
			\\ & =\begin{bmatrix}0 & (\xi_L\otimes \eta_R^*) J \\ (-\xi_R \otimes \eta_L^*) J & 0\end{bmatrix} 
			\\ &=\begin{bmatrix}0 & J(-\eta_L\otimes \xi_R^*) \\ J(\eta_R\otimes \xi_L^*) & 0\end{bmatrix} 
			\\ &= JM^*} 
		where in the third equality, we use the fact \eq{(\xi_L\otimes \eta_R^*) J\vf= (\eta_R,J \vf)\xi_L=(\vf,J^*\eta_R)\xi_L=-(\vf,\xi_R)\xi_L=-J(\xi_R,\vf)\eta_L=J(-\eta_L\otimes\xi_R^*)\vf } and similarly for $(-\xi_R\otimes \eta_L^*)J=J(\eta_R\otimes \xi_L^*).$
		With $M$ we define the operator \eql{D:=B+M=\begin{bmatrix}B_{L} & \xi_L\otimes \eta_R^* \\ -\xi_R \otimes \eta_L^* & B_{R}\end{bmatrix}\label{eq:B with off-diagonal entries}\,.} Thus $D$ is unitary by construction, and local, since $\xi_L\otimes \eta_R^*$ and $-\xi_R \otimes \eta_L^*$ are finite-rank, and obeys $DJ=JD^\ast$. 
		We define $A:=UD^*$, write $U=AD$, and deform $A$ away using similar argument as before ($\Id-A= (D-U)D^\ast\in\calK$ and $A(DJ)=(DJ)A^\ast$). Thus there is a path from $U$ to $D$ within $\calU^\calL_{\star\HH}$, and similarly a path from $\ti U$ to $\ti{D}$, where $\ti{D}$ is constructed analogously to $D.$
		
		Hence we are left to deform $D$ to $\ti D$. For each $i\in\Set{L,R}$, we use \cref{lem:intertwines U and tiU in star H} right below to construct unitaries $X_{i},Y_{i}$ such that $B_{i}=X_{i}^*\widetilde{B}_{i}Y_{i}$ and $Y_{i}J=JX_{i}$. Plugging in $X,Y$ into $D$ we find
		\eq{D &=\begin{bmatrix}X^*_{L}\widetilde{B}_{L}Y_{L} & \xi_L\otimes \eta_R^* \\ -\xi_R \otimes \eta_L^* & X_{R}^*\ti{B}_{R}Y_{R}\end{bmatrix} 
			\\ &= \begin{bmatrix}X_{L}^* & 0 \\ 0 & X_{R}^*\end{bmatrix}\begin{bmatrix}\widetilde{B}_{L} & X_{L} (\xi_L\otimes \eta_R^*)Y_{R}^* \\ X_{R}(-\xi_R \otimes \eta_L^*)Y_{L}^* & \ti{B}_{R}\end{bmatrix}\begin{bmatrix}Y_{L} & 0 \\ 0 & Y_{R}\end{bmatrix}\,.} 
		In particular, from how $X_i,Y_i$ are constructed in \cref{eq:how V11 W11 maps}, one has \eq{X_{R}(-\xi_R \otimes \eta_L^*)Y_{L}^* =-(X_{R}\xi_R) \otimes (Y_{L}\eta_L)^*= -\ti{\xi}_R\otimes \widetilde{\eta}_L^*\,.} 
		Similarly $X_{L} (\xi_L\otimes \eta_R^*)Y_{R}^*=\tilde{\xi}_L\otimes \widetilde{\eta}_R^*$. We write $X=X_{L}\oplus X_{R}$ and $Y=Y_{L}\oplus Y_{R}$. Then \eq{D=X^*\widetilde{D}Y=X^*\widetilde{D}JXJ^*} where we used $YJ=JX$ in the last step. 
		Applying \cref{thm:Kuiper} on $X$, there exists a path $[0,1]\ni t\mapsto X_t\in\calU$ connecting $X\rightsquigarrow\Id$. Let $B_t=X_t^*\widetilde{B}JX_tJ^*$. Then \eq{B_tJ=X_t^*\widetilde{B}JX_t=X_t^*J\widetilde{B}^*X_t= J(JX_t^*J^*\widetilde{B}^* X_t)=JB_t^*\,.} Thus $B_t\in\calU_{\star\HH}^\calL$ deforms $B$ to $\widetilde{B}$ as desired.
		
	\end{proof}
	
	\begin{lem}\label{lem:intertwines U and tiU in star H}
		Let $U,\ti{U}\in\calB_{\star\HH}$ be partial isometries such that the dimension of the kernels of $U,\ti{U},U^*,\ti{U}^*$ are all finite and equal. Then there exists $V,W\in\calU$ such that $U=V^*\ti{U}W$ and $WJ=JV.$
	\end{lem}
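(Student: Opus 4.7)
The plan is to exploit the partial-isometry structure of $U,\ti{U}$ to reduce everything to intertwining two quaternionic structures on infinite-dimensional Hilbert spaces, then read off $V$ and $W$ from the intertwiner together with $J$. First I would set up the canonical pieces: let $H_2 := \im U$, $H_1 := (\ker U)^\perp$, so that $U_1 := U|_{H_1}$ is a linear unitary $H_1 \to H_2$; define $\ti{H}_1,\ti{H}_2,\ti{U}_1$ analogously for $\ti{U}$. The relation $UJ = JU^*$ gives $J(\ker U) = \ker U^*$, and taking orthogonal complements (using that $J$ is anti-unitary) yields $J(H_2) = H_1$ and $J(H_1) = H_2$.

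I would then observe that $\tau := U_1 J$ is a quaternionic structure on $H_2$: for $\phi \in H_2$, $\tau\phi = UJ\phi = JU^*\phi = JU_1^*\phi$, whence $\tau^2\phi = U_1 J J U_1^*\phi = -U_1 U_1^*\phi = -\phi$ since $U_1 U_1^* = \Id_{H_2}$ and $J^2 = -\Id$. Similarly $\ti{\tau} := \ti{U}_1 J$ is a quaternionic structure on $\ti{H}_2$. Next I would build a linear unitary $V_1 : H_2 \to \ti{H}_2$ with $V_1\tau = \ti{\tau}V_1$: applying \cref{lem:basis in RCH symmetry} to each of $H_2,\ti{H}_2$ with the respective quaternionic structure supplies orthonormal bases $\{e_i,\tau e_i\}_i$ and $\{\tilde{e}_i,\ti{\tau}\tilde{e}_i\}_i$; declaring $V_1 e_i := \tilde{e}_i$, $V_1 \tau e_i := \ti{\tau}\tilde{e}_i$ and extending linearly yields the required intertwiner.

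To assemble, I would pick any linear unitary $V_0 : \ker U^* \to \ker \ti{U}^*$ (possible because both spaces have the same finite dimension by hypothesis), set $V := V_0 \oplus V_1$ in the decomposition $\calH = \ker U^* \oplus \im U$, and define $W := JVJ^{-1}$. The composition of an anti-linear, a linear, and an anti-linear map is linear, and unitarity is preserved, so $W \in \calU$; the relation $WJ = JV$ is then immediate. To verify $U = V^*\ti{U}W$ it suffices to check $\ti{U}W = VU$ separately on $\ker U$ and on $H_1$: on $\ker U$ both sides vanish, since $W(\ker U) = JV(\ker U^*) = J(\ker \ti{U}^*) = \ker \ti{U}$; on $\phi \in H_1$, $\ti{U}W\phi = \ti{U}_1 J V_1 J^{-1}\phi = \ti{\tau}V_1(J^{-1}\phi) = V_1\tau(J^{-1}\phi) = V_1 U_1\phi = VU\phi$, using the intertwining property together with $\tau J^{-1} = U_1$ on $H_1$.

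The only step with any genuine content is the quaternionic-basis construction, which in turn hinges on the initial recognition that $U_1 J$ (and $\ti{U}_1 J$) is a quaternionic structure on its respective image subspace; the rest is a direct unpacking of the relation $UJ = JU^*$ combined with the standard fact that two quaternionic Hilbert spaces of the same infinite cardinality are isomorphic via their bases.
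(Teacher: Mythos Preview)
Your proof is correct and rests on the same key observation as the paper's: that combining the partial isometry with $J$ yields a quaternionic structure (you use $U_1 J$ on $\im U$, the paper uses $J U_2$ on $(\ker U)^\perp$, which are conjugate), and that an intertwiner between two such structures can be read off from Kramers-pair bases via \cref{lem:basis in RCH symmetry}. The packaging differs slightly: the paper builds $W$ block-diagonally on $\ker U \oplus (\ker U)^\perp$ and then sets the nontrivial block of $V$ to be $\ti{U}_2 W_2 U_2^\ast$, verifying the two relations $W_1 J = J V_1$ and $W_2 J = J \ti{U}_2 W_2 U_2^\ast$ by hand, whereas you build $V$ first on $\ker U^\ast \oplus \im U$ and then declare $W := J V J^{-1}$, which makes $WJ = JV$ automatic and reduces the remaining check to $\ti{U}W = VU$ on the two summands. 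Your organization is a bit more economical, but the substance is identical.
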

	\begin{proof}
		We first write 
		\eq{
			U&=\begin{bmatrix}0 & 0 \\ 0 & U_2\end{bmatrix}:\ker U \oplus (\ker U)^\perp \to (\im U)^\perp \oplus \im U 
			\\ \ti{U}&=\begin{bmatrix}0 & 0 \\ 0 & \ti{U}_{2}\end{bmatrix}:\ker \ti{U} \oplus (\ker \ti{U})^\perp \to (\im \ti{U})^\perp \oplus \im \ti{U}\,.
		} 
		Define operators $V,W$ that take the block form 
		\eq{W&=\begin{bmatrix}W_{1} & 0 \\ 0 & W_{2}\end{bmatrix}:\ker U \oplus (\ker U)^\perp \to \ker \ti{U}\oplus (\ker \ti{U})^\perp \\ V&=\begin{bmatrix}V_{1} & 0 \\ 0 &  \ti{U}_{2}W_{2} U_{2}^*\end{bmatrix}: (\im U)^\perp \oplus \im U \to (\im\ti{U})^\perp \oplus \im \ti{U}}
		where $V_{1},W_{1},W_{2}$ are, for now, some unitaries to be constructed explicitly momentarily. With this, it is clear that \eq{U=V^*\ti{U}W\,.} 
		
		We construct the unitaries $V_{1},W_{1},W_{2}$ such that the following two conditions hold \eql{W_{1}J&=JV_{1} \label{eq:V11 W11 symmetry relation}\\ W_{2}J&=J \ti{U}_{2}W_{2}U_{2}^*\label{eq:V22 W22 symmetry relation}\,.} 
		The expressions make sense since $U\in \calB_{\star\HH}$ implies that \eq{J\ker U=(\im U)^\perp\,,\quad  J(\ker U)^\perp =\im U\,;}
		similar expressions holds for $\ti{U}$. Now we let $\ker U$ be spanned by the ONB $\Set{\eta_i}_{i=1}^m$. Then $(\im U)^\perp$ is spanned by the ONB $\Set{\xi_i}_{i=1}^m$, where $\xi_i:=J\eta_i$. We construct analogous tilde version of ONB for $\ker \ti{U}$ and $(\im \ti{U})^\perp$. Define \eql{V_{1}:\sum_{i=1}^m \ti{\xi}\otimes \xi_i^*,\quad W_{1}:\sum_{i=1}^m\ti{\eta}_i\otimes \eta_i^*.\label{eq:how V11 W11 maps}} 
		Thus \cref{eq:V11 W11 symmetry relation} holds. 
		
		We find that $JU_{2}$ defines a $\star\HH$-structure on $(\ker U)^\perp$, so applying \cref{lem:basis in RCH symmetry} gives an ONB consisting of Kramers pairs $\Set{\vf_i,\psi_i}_{i=1}^\infty$ for $(\ker U)^\perp$ such that $\psi_i=JU_{2}\vf_i$. Similarly, let $\Set{\vf_i,\ti{\psi}_i}_{i=1}^\infty$ be an ONB of Kramers pairs for $J\ti{U}_2$ on $(\ker \ti{U})^\perp$. Construct $W_{2}$ as \eql{\vf_i\mapsto -\ti{\psi}_i,\quad \psi_i\mapsto \ti{\vf}_i } One again readily verifies that \cref{eq:V22 W22 symmetry relation} holds. The relations \cref{eq:V11 W11 symmetry relation,eq:V22 W22 symmetry relation} are equivalent to \eq{WJ=JV\,.}
	\end{proof}
	
	\section{Equivariant classification of $\Lambda$-non-trivial self-adjoint unitaries}\label{sec:local SA unitaries classification}

	In this section we turn our attention to equivariant local self-adjoint (orthogonal) projections, and calculate the corresponding set of path-connected components. Now, however, we add a non-triviality condition which is stronger than locality, and moreover, the symmetry classes $\FF$ we consider are slightly different. To explain the difference, let us consider equivariant local self-adjoint unitaries $\calSU^{\calL}_{\FF}$ rather than projections (see \cref{rem:connection between projections and self-adjoint unitaries}); we shall abbreviate \emph{SAU} henceforth. We prefer working with SAUs here for two reasons: (1) the physical symmetry constraints appear naturally at the level of the self-adjoint unitaries, which, as we'll see below, are flat Hamiltonians, and (2) the calculations below somewhat simplify in this way. As for the symmetries, we have still $\CC$ (no constraint), $\RR$ and $\HH$, i.e., the SAU would commute with $F=C,J$. However now we replace $\star\RR,\star\HH$ by $\ii \RR,\ii\HH$ respectively (since these are the conditions which arise from particle-hole symmetry later, see \cref{sec:bulk insulators classification}).
	
	As was mentioned, now we constrain the class of SAUs we study even more beyond locality in a crucial way. We have already seen the notion of a non-trivial SAU in \cref{def:non trivial projections}: this is a SAU operator where both $\pm1$ eigenspaces are infinite dimensional. We shall also need
	\begin{defn}[$\Lambda$-non-trivial SAUs]\label{def:Lambda-non-trivial SAUs} $U\in\calSU$ is called $\Lambda$-non-trivial iff there exists some $V\in\calSU$ such that:
		\begin{enumerate}
			\item $[V,\Lambda]=0$ (hyper-local).
			\item $\Lambda V\Lambda$ and $\Lambda^\perp V \Lambda^\perp$ are both non-trivial SAUs.
			\item $U-V\in\calK$.
		\end{enumerate}
		We note this implies automatically that such a $U$ is $\Lambda$-local since by definition $\Lambda U \Lambda^\perp$ is compact. We denote the space of all $\Lambda$-non-trivial SAUs by $\LamNTSAU$. Hence we have \eq{\LamNTSAU\subsetneq\calSU^\calL\subsetneq\calSU\,.}
	\end{defn}
	
	It turns out that if one attempts to classify the bare, merely $\Lambda$-local space $\calSU^\calL$, the result is \emph{not} nullhomotopic (as one would expect from \cref{table:Kitaev}) due to finite rank problems which roughly correspond to half-infinite systems. So later on, in \cref{sec:bulk insulators classification} we will see that the correct notion to reproduce \cref{table:Kitaev} is rather the more constrained $\Lambda$-non-trivial space $\LamNTSAU$ and that in a sense, unitaries are \emph{automatically} $\Lambda$-non-trivial (see \cref{lem:chiral insulators are automatically bulk non trivial} below), which is why this notion was not necessary in \cref{sec:local unitaries classification}. Another point in support of this notion is that $\Lambda$-non-triviality is well-defined in the sense that it is preserved under small norm and compact perturbations within $\calSU^\calL$, see \cref{lem:lambda-non-triviality is well-defined} at the end of this section.

	\begin{table}
		\begin{center}
			\begin{tabular}{|c|c|}
				\hline
				$\pi_0(\LamNTSAU)$ & $\Set{0}$\\\hline
				$\pi_0(\LamNTSAU_{\RR})$ & $\Set{0}$\\\hline
				$\pi_0(\LamNTSAU_{\HH})$ & $\Set{0}$\\\hline
				$\pi_0(\LamNTSAU_{\ii\RR})$ & $\ZZ_2$\\\hline
				$\pi_0(\LamNTSAU_{\ii\RR})$ & $\Set{0}$\\
				\hline
			\end{tabular}
		\end{center}
		\caption{The classification of equivariant non-trivial self-adjoint unitaries.}
		\label{table:equivariant Lambda-non-trivial SAUs}
	\end{table}
	We finally turn to our main classification theorems. The results of this section are summarized in \cref{table:equivariant Lambda-non-trivial SAUs}. 
	\begin{thm}[Classification of $\Lambda$-non-trivial $\RR,\CC,\HH$ SAUs]\label{thm:classification of RCH LamNT SAUs} The space of $\Lambda$-non-trivial $\RR,\CC,\HH$ SAUs is null-homotopic: \eql{\pi_0(\LamNTSAU_\FF) \cong\Set{0}\qquad(\FF\in\Set{\CC,\RR,\HH})\,.}
	\end{thm}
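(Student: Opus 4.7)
The plan is a two-step reduction. Given $U\in\LamNTSAU_\FF$, I will first deform $U$ within $\LamNTSAU_\FF$ to some hyperlocal SAU $V$ (one with $[V,\Lambda]=0$), and then contract $V$ to a fixed reference within the hyperlocal subspace using a classical Kuiper-type connectedness statement on each of the two half-spaces $\im\Lambda^\perp$ and $\im\Lambda$ separately.

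For the first step I unpack \cref{def:Lambda-non-trivial SAUs} to obtain a hyperlocal SAU $V$ with $U-V\in\calK$, arranging for $V$ to lie in $\calSU_\FF$ by a symmetrization step relying on \cref{ass:real and quaternionic structures are hyperlocal} and \cref{lem:basis in RCH symmetry}. Passing to the associated non-trivial self-adjoint projections $P:=\tfrac{1}{2}(\Id+U)$ and $Q:=\tfrac{1}{2}(\Id+V)$, both in $\calB_\FF$ with $P-Q\in\calK$, the task reduces to building a path from $P$ to $Q$ through local non-trivial $F$-equivariant self-adjoint projections. The key technical input will be an equivariant extension of the result of \cite{AndruchowChiumientoLucero2015}: such a compactly close pair of non-trivial $F$-projections admits an intertwiner $W\in\calU_\FF$ with $W-\Id\in\calK$ and $WPW^\ast=Q$. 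Granting this, since $W-\Id$ is compact the spectrum of $W$ accumulates only at $1\in\bS^1$, so a holomorphic logarithm with branch cut at some point of $\bS^1\setminus\sigma(W)$ yields a compact self-adjoint $H$ with $W=e^{\ii H}$; the relation $WF=FW$ combined with the antilinearity of $F$ forces $FH=-HF$. Hence the path $P_t:=e^{\ii tH}Pe^{-\ii tH}$ is $F$-equivariant, projection-valued, has compact commutator with $\Lambda$ (every term in the Leibniz expansion of $[P_t,\Lambda]$ is a product involving either $[P,\Lambda]$ or $e^{\ii tH}-\Id$, both compact), and remains $\Lambda$-non-trivial throughout because $P_t-Q\in\calK$ by \cref{lem:stability of essentially non trivial projections}.

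For the second step, since $V$ is hyperlocal and $[F,\Lambda]=0$, both $V$ and $F$ block-diagonalize in $\calH=\im\Lambda^\perp\oplus\im\Lambda$; write $V=V_L\oplus V_R$ with each $V_i$ an $F$-equivariant non-trivial SAU on an infinite-dimensional half. Fix once and for all a reference $V_0=V_L^0\oplus V_R^0\in\LamNTSAU_\FF$ built by an $F$-adapted basis choice on each half via \cref{lem:basis in RCH symmetry}, and contract $V_L\rightsquigarrow V_L^0$ and $V_R\rightsquigarrow V_R^0$ using the classical Kuiper-type fact that the space of non-trivial self-adjoint projections on a separable infinite-dimensional Hilbert space with $\FF$-structure is path-connected (equivalent to connectedness of the Grassmannian of infinite-dimensional, infinite-codimensional subspaces in the complex, real, and quaternionic categories). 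The direct sum of these paths is manifestly hyperlocal and hence lies in $\LamNTSAU_\FF$, completing the reduction to a single point.

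The main obstacle is the equivariant AC--L extension invoked in the first step. When $\|P-Q\|<1$ one takes $W=\polar(QP+Q^\perp P^\perp)$, which is automatically $F$-equivariant because its defining expression is algebraic in $P,Q\in\calB_\FF$; the large-norm case reduces to this by splitting off the finite-dimensional spectral subspaces of $P-Q$ near $\pm1$ and performing an $F$-equivariant finite-rank surgery there (possibly modifying $V$ by a hyperlocal finite-rank compact to kill any essential-codimension obstruction, which preserves membership in $\LamNTSAU_\FF$). For $\FF=\HH$ this surgery must respect Kramers pairing, but the required bookkeeping is of the same kind already used in the proofs of \cref{thm:classification of RCH local unitaries,thm:star-quaternionic local unitaries}, so no new mechanism is needed.
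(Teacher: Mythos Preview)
Your two-step outline matches the paper's strategy, but the execution differs in one substantive way and there is one concrete slip.

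\textbf{Where you diverge from the paper.} Rather than pulling the hyperlocal $V$ from \cref{def:Lambda-non-trivial SAUs} and then handling symmetrization and the essential-codimension obstruction, the paper constructs $V$ directly from $U$: writing $U=\begin{bmatrix}X&A\\A^\ast&Y\end{bmatrix}$ in the $\Lambda$-grading, it sets $V:=f_+(X)\oplus f_-(Y)$ with $f_\pm:=\sgn\pm\chi_{\{0\}}$. This $V$ is automatically in $\calSU_\FF$ (since $X,Y\in\calB_\FF$ are self-adjoint and $f_\pm$ are real-valued), and the point of this specific choice is that $G:=\tfrac12(U+V)$ is \emph{invertible} outright, by an explicit computation exploiting the intertwining relation $XA=-AY$. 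One then gets $V=\polar(G)^\ast U\,\polar(G)$ with $\polar(G)\in\calU_\FF^\calL$ of zero $\Lambda$-index, and the contraction $\polar(G)\rightsquigarrow\Id$ comes from \cref{thm:classification of RCH local unitaries}. This sidesteps both the AC--L appeal and the finite-rank surgery you flag; your route is viable but carries more bookkeeping.

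\textbf{The slip.} The claim that a holomorphic logarithm with branch cut at an \emph{arbitrary} point of $\bS^1\setminus\sigma(W)$ yields $H$ with $FH=-HF$ is false. That anticommutation requires the branch to be conjugation-symmetric, i.e.\ cut at $-1$; with a generic cut the eigenvalues of $H$ on a conjugate pair $\lambda,\bar\lambda\in\sigma(W)$ fail to be negatives of one another. And if $-1\in\sigma(W)$ (a finite-multiplicity possibility you have not excluded), no compact self-adjoint $H$ with $e^{\ii H}=W$ and $FH=-HF$ need exist at all: in the real case an odd-dimensional $(-1)$-eigenspace obstructs it, since $C$ would have to pair eigenspaces of $H$ at $\theta$ and $-\theta$ inside that kernel. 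The clean repair is to drop the logarithm entirely: from $W-\Id\in\calK$ you already have $W\in\calU_\FF^\calL$ with $\findex_\Lambda W=0$, so \cref{thm:classification of RCH local unitaries} hands you a path $W_t$ in $\calU_\FF^\calL$ to $\Id$, and $t\mapsto W_t U W_t^\ast$ is a path in $\calSU_\FF^\calL$ which stays inside $\LamNTSAU_\FF$ by \cref{lem:lambda-non-triviality is well-defined}. That is exactly how the paper proceeds; \cref{lem:deform A to Id obeying RCH symmetry} is where the $(-1)$-eigenspace issue is confronted head-on.
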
 
	When $\FF=\CC$ (i.e., without any symmetry constraints) this theorem is not new, and appeared relatively recently within \cite{AndruchowChiumientoLucero2015}. Here we extend it also to the cases $\FF=\RR,\HH$ (that extension is straightforward) in a unified proof for all three cases.

	The following theorem is new to our knowledge:

	\begin{thm}[Classification of $\Lambda$-non-trivial $\ii\RR$ and $\ii\HH$ SAUs]\label{thm:classification of iR iH LamNT SAUs} The space of $\Lambda$-non-trivial $\ii\RR$ SAUs has two path components. The map $\findex_{\Lambda,2}:\LamNTSAU_{\ii\RR}\to\ZZ_2$ is norm continuous and ascends to a bijection \eql{\findex_{\Lambda,2}:\pi_0(\LamNTSAU_{\ii\RR})\xrightarrow{\sim}\ZZ_2\,.}
		
		The space of $\Lambda$-non-trivial $\ii\HH$ SAUs is null-homotopic: \eql{\pi_0(\LamNTSAU_{\ii\HH}) \cong\Set{0}\,.}
	\end{thm}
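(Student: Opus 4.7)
The plan is to adapt the reduction-to-reference strategy used in \cref{thm:classification of RCH local unitaries,thm:star-quaternionic local unitaries,thm:classification of RCH LamNT SAUs} to the SAU setting with particle-hole-type symmetries. Step one is to establish the invariant (or its triviality); step two is to reduce any element to a hyper-local SAU via a path inside the symmetry class; step three is to deform the hyper-local SAU to a canonical reference by an equivariant Kuiper-type argument on each $\Lambda$-block.

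I would treat the $\ii\HH$ case first, as the obstruction vanishes automatically. Since $[J,\Lambda]=0$ and $UJ=-JU$, one computes $\Lambda U\Lambda J=-J\Lambda U\Lambda$, so $\ker\bbLambda U$ inherits a Kramers pairing via $J^2=-\Id$ and must be even-dimensional. Only path-connectedness remains to be shown. Given $U\in\LamNTSAU_{\ii\HH}$, the $\Lambda$-non-triviality hypothesis supplies a hyper-local $\ii\HH$ SAU $V=V_L\oplus V_R$ with $V_L,V_R$ non-trivial and $U-V\in\calK$. I would then construct a path $U\rightsquigarrow V$ inside $\LamNTSAU_{\ii\HH}$ by passing to the equivalent projection picture $P=(\Id-U)/2$, $Q=(\Id-V)/2$, and using the classical path between two close self-adjoint projections (taking care that the projections remain $\Lambda$-non-trivial and $J$-intertwining along the path); an alternative is a direct path $t\mapsto \sgn(H_t)$ for a well-chosen gapped path of Hamiltonians satisfying $H_tJ=-JH_t$. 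Finally, since $V=V_L\oplus V_R$ is diagonal in the $\Lambda$-grading, each summand is deformed to a reference non-trivial $\ii\HH$ SAU on its Hilbert space by an $\ii\HH$-equivariant Kuiper-type connectivity statement, and diagonality keeps the path hyper-local.

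For the $\ii\RR$ case, continuity of $\findex_{\Lambda,2}$ follows from continuity of $\bbLambda$ and of $\findex_2$. Surjectivity is established by an explicit Kitaev-chain-like construction: on a $C$-equivariant ONB $\{\vf_n\}_{n\in\ZZ}$ adapted to the $\Lambda$-grading, define a SAU consisting of a hyper-local reference corrected by a single rank-one term, arranged so that $U^2=\Id$, $UC=-CU$, and $\dim\ker\bbLambda U$ is odd. For injectivity in the trivial class the argument of the $\ii\HH$ case transfers verbatim with $J$ replaced by $C$. For injectivity in the non-trivial class one cannot go through the identity; instead, one bridges two odd-index references directly via a $C$-equivariant intertwiner paralleling \cref{lem:intertwines U and tiU in star H}, adapted so that the rank-one corrections are built from single real vectors rather than from Kramers pairs.

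The main obstacle is ensuring that the reduction from $U$ to a hyper-local $V$ can be carried out by a path that simultaneously preserves the SAU property, the particle-hole anti-commutation, and $\Lambda$-locality. The natural factorization $U=V\exp(A)$ with $A$ skew anti-commuting with $V$ is complicated by the fact that the essential spectrum of $VU$ always contains $-1$, so a direct holomorphic logarithm is unavailable; one must route around this via a spectral-rotation trick in the spirit of \cref{lem:deform A to Id obeying RCH symmetry}, or via the equivalent projection-path formulation. A secondary difficulty is the odd-index bridging in the $\ii\RR$ case, where the absence of Kramers pairing (since $C^2=+\Id$) forces the rank-one constructions and the intertwiner to be built $C$-equivariantly from single real vectors, requiring more delicate parity bookkeeping than in \cref{lem:intertwines U and tiU in star H}.
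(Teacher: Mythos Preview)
Your overall architecture is right, and your handling of surjectivity and of the odd-index bridging in the $\ii\RR$ case is close in spirit to what the paper does. But the heart of the argument---connecting a given $U$ to a hyper-local SAU $V$ inside $\LamNTSAU_{\ii\FF}$---has a genuine gap.

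First, the $V$ supplied by \cref{def:Lambda-non-trivial SAUs} is \emph{not} guaranteed to lie in $\calB_{\ii\FF}$; the definition only asks for some hyper-local non-trivial SAU compact-close to $U$, with no symmetry constraint. Second, even granting a symmetric $V$, the condition $U-V\in\calK$ does not give $\norm{U-V}<1$, so the ``classical path between two close self-adjoint projections'' is unavailable, and your alternative $t\mapsto\sgn(H_t)$ is circular until you produce the gapped interpolation $H_t$. You correctly flag this at the end but do not resolve it.

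The paper's resolution is to \emph{construct} $V$ explicitly from the block form $U=\begin{bmatrix}X&A\\A^\ast&Y\end{bmatrix}$, exactly as in \cref{thm:classification of RCH LamNT SAUs}, and then use the Andruchow--Chiumiento--Lucero conjugator $G=\tfrac12(U+V)$, whose invertibility (proved directly via the intertwining relation $XA=-AY$) yields $V=\polar(G)^\ast U\,\polar(G)$ and hence a path $U\rightsquigarrow V$ once $\polar(G)$ is deformed to $\ii\Id$ via \cref{thm:classification of RCH local unitaries} applied to $-\ii\polar(G)\in\calU^\calL_{-\ii\FF}$. The new obstacle in the $\ii\FF$ classes is that the naive choice $V=f_+(X)\oplus f_-(Y)$ of \cref{eq:V diagonal from SAU U} violates $VF=-FV$, because $\chi_{\Set{0}}(X)$ commutes rather than anti-commutes with $F$. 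The fix is to split $\ker X$ into two $F$-conjugate halves $E,L$ with $EF=FL$ (possible for $\ii\HH$ by Kramers, and for $\ii\RR$ precisely when $\findex_{\Lambda,2}U=0$) and set $V_L=\sgn(X)+E-L$, $V_R=\sgn(Y)-A^\ast(E-L)A$. In the odd-index $\ii\RR$ case one cannot split evenly, so the paper takes $\chi_{\Set{0}}(X)=E+L+\eta\otimes\eta^\ast$ with $C\eta=\eta$ and builds a \emph{non}-diagonal $V$ carrying rank-one off-diagonal blocks $\eta\otimes\xi^\ast$, $\xi\otimes\eta^\ast$; two such $V,\widetilde V$ are then conjugated by an explicit $C$-equivariant diagonal unitary $W=W_L\oplus W_R$, matching your sketch of the odd-index bridging.
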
 
	
	We now present the proofs of \cref{thm:classification of RCH LamNT SAUs,thm:classification of iR iH LamNT SAUs}.
	\begin{proof}[Proof of \cref{thm:classification of RCH LamNT SAUs}]
		Let $U,\widetilde{U}\in\LamNTSAU_{\FF}$ be given, with $\FF=\CC,\RR,\HH$. Our goal is to construct a continuous path within $\LamNTSAU_{\FF}$ from $U$ to $\widetilde{U}$. The main idea is as follows: using \cref{lem:classification of nontrivial projections} below we know that non-trivial SAUs are null-homotopic without the further $\Lambda$-non-triviality constraint. So if it turned out that both $U,\widetilde{U}$ were diagonal in the $\Lambda$ grading, we would be finished. We thus concentrate on showing how $U$ may be deformed into a $\Lambda$-diagonal element in $\LamNTSAU_\FF$, with the deformation within that space.
		
		Hence, in the $\calH = \im(\Lambda)^\perp\oplus\im(\Lambda)$ grading, let us write \eql{\label{eq:space decomposition of SA unitary}U = \begin{bmatrix}
				X & A \\ A^\ast & Y
		\end{bmatrix}} where $X,Y$ are self-adjoint operators and $A:\im(\Lambda)\to\im(\Lambda)^\perp$ is general. It should be emphasized that $X,Y$ are \emph{not} SAUs since $A\neq0$. By locality however, $A$ is compact, and since for any SAU, $-\Id\leq U\leq\Id$, we also have $-\Id\leq X,Y\leq \Id$. Since $\Id=U^2$, we have $X,Y$ essentially unitary; in particular, $|A|^2 = \Id-Y^2$, $|A^\ast|^2=\Id-X^2$. Finally, the intertwining property \eql{\label{eq:intertwining property XA=minus AY}XA = -A Y} holds. Using it, we can show that \eql{\label{eq:A maps eigenvectors of Y to minus eigenvectors of X}A:\ker(Y-\lambda\Id)\to\ker(X+\lambda\Id)\qquad(\lambda\in(-1,1))} isomorphically. Indeed, let $Y\psi = \lambda \psi$ for $|\lambda|<1$ and $\norm{\psi}=1$. Then $XA\psi=-\lambda A\psi$ and \eql{\label{eq:eigenvectors of Y mapped by A to eigenvectors of X}\norm{A\psi}^2=\ip{\psi}{|A|^2\psi}=\ip{\psi}{(\Id-Y^2)\psi}=(1-\lambda^2)\norm{\psi}^2\neq0\,.} This works similarly to show that $A^\ast$ is injective and hence $A$ is the claimed isomorphism.
		
		Now, $X,Y$ have spectra which may only accumulate at $\pm1$, since \eq{\sigma_{\mathrm{ess}}(U) = \sigma_{\mathrm{ess}}(X)\cup\sigma_{\mathrm{ess}}(Y)} so that on $(-1,1)$ both $X$ and $Y$ have discrete spectrum, and the intertwining property implies that \eql{\label{eq:spectral projections flip via A}\chi_{\Set{\lambda}}(X) A = A \chi_{\Set{-\lambda}}(Y)} for all $\lambda\in(-1,1)$. Indeed, \eq{A \chi_{\Set{-\lambda}}(Y) = \chi_{\Set{\lambda}}(X) A \chi_{\Set{-\lambda}}(Y) = \chi_{\Set{\lambda}}(X) A \left(\Id-\chi_{\Set{-\lambda}}(Y)^\perp\right)
		} where the first inequality is thanks to \cref{eq:A maps eigenvectors of Y to minus eigenvectors of X}. Now, $\chi_{\Set{\lambda}}(X) A \chi_{\Set{-\lambda}}(Y)^\perp=0$ because $\ker(A)=\ker(|A|^2)=\ker(\Id-Y^2)$, so $A$ is zero on any eigenvector of $Y$ of eigenvalue of modulus $1$, and on any other eigenvalue $-\mu$, \cref{eq:A maps eigenvectors of Y to minus eigenvectors of X} shows that it maps to the range of $\chi_{\Set{-\mu}}(X)$.
		
		We will use the definition of $\sgn:\RR\to\RR$ given by \eql{\label{eq:sgn function}\sgn(\lambda) \equiv -\chi_{(-\infty,0)}(\lambda) +\chi_{(0,\infty)}(\lambda)\qquad(\lambda\in\RR)} and with it define $f_\pm:\RR\to\RR$ via \eql{\label{eq:fpm function}f_\pm(\lambda) := \sgn(\lambda) \pm \chi_{\Set{0}}(\lambda)\qquad(\lambda\in\RR)\,.}
		
		We define the diagonal SAU operator 
		\eql{\label{eq:V diagonal from SAU U}V:=f_+(X)\oplus f_-(Y)\,.}
		Here $V:=f_+(X)\oplus f_+(Y)$ also works.
		This is the operator we shall deform $U$ into. To do so, we shall use the conjugating self-adjoint operator due to \cite{AndruchowChiumientoLucero2015}: \eql{\label{eq:G invertible from U and V}G := \frac{1}{2}\left(U + V\right)\,.}
		
		Let us note the effect of different symmetries in this context, i.e., we assume that $[F,U]=0$ for $F=\Id,C,J$. Since the symmetry operators are hyper-local by \cref{ass:real and quaternionic structures are hyperlocal}, we have $X,Y\in\calB_\FF$ too. In particular $\sgn(X),\sgn(Y)$ and $\chi_{\Set{0}}(X),\chi_{\Set{0}}(Y)$ also belong to $\calB_\FF.$ Indeed, even though $F$ is anti-unitary, $X,Y$ are self-adjoint and hence the anti-unitarity does not interfere. Hence $G\in\calB_\FF$ too, and so we can essentially forget about the symmetry constraints as long as we have symmetric versions of \cref{thm:Kuiper,lem:classification of nontrivial projections}, which we do.
		
		Next, we note that that since $U^2=V^2=\Id$, $GU=\frac{1}{2}(U+V)U=\frac{1}{2}V(U+V)=VG$. Also, $G$ is clearly Fredholm of zero $\Lambda$-index (in the sense of \cref{eq:index of local unitary}). Indeed, since $\lambda\mapsto\lambda+f_\pm(\lambda)=:g_\pm(\lambda)$ has $\im(g_\pm)\cap(-1,1)=\varnothing$, $G$ is, up to compact, a direct sum of invertible, and hence zero-index Fredholm, operators: \eq{G - \frac{1}{2}\left( g_+(X)\oplus g_-(Y)\right) \in\calK\,.} We find \eq{\findex_\Lambda G = \findex g_+(X) = 0\,.}
		In fact $G=\frac12\begin{bmatrix} g_+(X) & A \\ A^\ast &  g_-(Y)\end{bmatrix}$ is invertible. Indeed, since $\findex G=0$, it suffices to check that ${\ker G=\Set{0}}$. Suppose $G \begin{bmatrix}\vf \\ \psi\end{bmatrix}=0$, then \eql{\label{eq:G phi psi = 0 (1)}g_+(X)\vf + A\psi = 0 \\ \label{eq:G phi psi = 0 (2)}A^*\vf + g_-(Y)\psi = 0\,.}
		Multiply the first equation by $A^*$ to get \eq{A^* g_+(X)\vf + |A|^2\psi = 0\,.} Now, using the intertwining property \cref{eq:intertwining property XA=minus AY}, we have \eq{A^*g_+(X) &= A^*(X + \sgn(X)+\chi_{\Set{0}}(X)) \\ &= (-Y-\sgn(Y) + \chi_{\Set{0}}(Y))A^* = -g_-(Y)A^*\,.} 
		Thus \eq{0=A^* g_+(X)\vf + |A|^2\psi=-g_-(Y)A^*\vf+|A|^2\psi = (g_-^2(Y)+\Id-Y^2)\psi} where, in the last equality, we have used \cref{eq:G phi psi = 0 (2)}, and $|A|^2=\Id-Y^2.$ Note that $\lambda\mapsto g_-^2(\lambda) + 1-\lambda^2 = 2|\lambda|+2$ has range in $[2,\infty)$, and hence the above implies that $\psi=0.$ Similarly, one can show that $\vf=0.$
		
		We now use the self-adjoint invertible operator $G$ as follows. Since $GU = VG$, we have $G^2 U = U G^2$ and thus also $|G| U = U |G|$. This implies that we also have $\polar(G) U = V \polar(G)$, i.e., \eql{\label{eq:V and U intertwines} V = \polar(G)^\ast U \polar(G)\,.} But $\polar(G)\in\calU^\calL_\FF$ (we are using \cref{lem:polar part preserve symmetry}) and, $\findex_\Lambda \polar(G) = 0$ (the index is invariant under the taking the polar part \cite[Lemma 6]{Graf_Shapiro_2018_1D_Chiral_BEC}). Thus using \cref{thm:classification of RCH local unitaries} $\polar(G)$ may be deformed to $\Id$ within the space $\calU^\calL_\FF$. This yields an equivariant local path within $\calU_\FF$ \eq{U \rightsquigarrow V \overset{\mathrm{\cref{lem:classification of nontrivial projections}}}{\rightsquigarrow} \ti{V}\rightsquigarrow \ti{U}\,.} So far we have only exhibited this path as $\Lambda$-local, but it is in fact  $\Lambda$-non-trivial via \cref{lem:lambda-non-triviality is well-defined} below. 
		
	\end{proof}

	We turn to the two remaining, more exotic symmetry classes, $\ii \RR$ and $\ii \HH$, one of which is not nullhomotopic.
	\begin{proof}[Proof of \cref{thm:classification of iR iH LamNT SAUs}] Since we claim that $\LamNTSAU_{\ii \HH}$ is nullhomotopic, we only need to establish surjectivity for $\findex_{2,\Lambda}:\LamNTSAU_{\ii \RR}\to\ZZ_2$. We first construct some $U\in\LamNTSAU_{\ii \RR}$ with a zero index. 
		
		To this end, we invoke \cref{lem:basis in RCH symmetry} separately on $\ker\Lambda,\im\Lambda$ to obtain an ONB for $\calH$ $\Set{\vf_i,\psi_i}_{i\in\ZZ}$ such that $i\leq0$ spans the kernel and $i>0$ spans the image, and moreover, $C\vf_i=\psi_i$ for $i\in\ZZ$. Let now $P^\pm_\vf,P^\pm_\psi$ be self-adjoint projections onto $\Set{\vf_i}_{i>0},\Set{\psi_i}_{i>0}$ and $\Set{\vf_i}_{i\leq0},\Set{\psi_i}_{i\leq0}$ respectively, and define \eq{U=(P_\vf^--P_\psi^-)\oplus (P_{\vf}^+-P_{\psi}^+)\,.} Then $U$ is a $\Lambda$-non-trivial SAU that belongs to $\calB_{\ii\RR}.$
		
		We turn to the construction of a $U\in\LamNTSAU_{\ii\RR}$ that has a non-trivial index. Using \cref{lem:basis in RCH symmetry} separately on $\ker\Lambda,\im\Lambda$, we have an orthonormal basis fixed by $C$ for each of these spaces. Pick out a vector $\eta^\pm$ out of each, and re-apply the lemma on $\ker\Lambda\ominus\szpan(\eta^-),\im\Lambda\ominus\szpan(\eta^+)$ to obtain $\Set{\vf_i,\psi_i}_{i\in\ZZ}$ such that $i\leq0$ spans the kernel minus $\eta^-$ and $i>0$ spans the image minus $\eta^+$, and moreover, $C\vf_i=\psi_i$ for $i\in\ZZ$. We define $P^\pm_\vf,P^\pm_\psi$ similarly as in the previous paragraph and define \eq{U=\begin{bmatrix} P_\vf^--P_\psi^- & -\ii\eta^-\otimes \left(\eta^+\right)^\ast \\ \ii\eta^+\otimes \left(\eta^-\right)^\ast & P_{\vf}^+-P_{\psi}^+\end{bmatrix}} Then $U\in\LamNTSAU_{\ii\RR}$ and $\findex_{\Lambda,2}(U)=\dim\ker (P_\vf^+-P_\psi^+)=1$ as $\ker (P_\vf^+-P_\psi^+)=\szpan(\eta^+)$.
		
		Let us now establish injectivity. Let $U,\ti U\in\LamNTSAU_\FF$ be given, with $\FF=\ii\RR,\ii\HH$ and $F$ accordingly, and if $\FF=\ii\RR$, we assume that both operators have zero index (the non-trivial will be dealt with later on). We will deform $U\rightsquigarrow\ti U$ by a similar procedure to \cref{thm:classification of RCH LamNT SAUs}: first we deform $U$ to a $\Lambda$-diagonal SAU $V$, and then we deform $V\rightsquigarrow \ti V$. So now we write $U$ in \cref{eq:space decomposition of SA unitary} and we will reuse the properties proven and the construction defined in the proof of \cref{thm:classification of RCH LamNT SAUs}. Since $F$ is hyper-local, $\Set{U,F}=0$ implies that \eql{\label{eq:X Y A are iiR iiH symmetric}X,Y,A\in \calB_\FF\,.} 
		We would have liked to use $V\equiv f_+(X)\oplus f_-(Y)$ from \cref{eq:V diagonal from SAU U}; that construction, however, is unsatisfactory since now $V$ does not respect the $\FF$ symmetry constraint: \eq{f_+(X)F=(\sgn(X)+\chi_{\Set{0}}(X))F=F(-\sgn(X)+\chi_{\Set{0}}(X))=-F f_-(X)\neq -F f_+(X)\,.} 
		To fix this, we will decompose the zero eigenspace of $X$ into two disjoint parts of the same dimension $\chi_{\Set{0}}(X)=E + L$ such that \eql{E F=F L\,.\label{eq:symmetry intertwines two parts of kernels at zero}} 
		To do so, we should have $\ker X$ even dimensional. In the case $\FF=\ii\HH$, the kernel is always even dimensional. Indeed, since $X\vf=0$ iff $XF\vf=-FX\vf=0$, it follows that $\chi_{\Set{0}}(X)F=F\chi_{\Set{0}}(X)$. Hence $F=J$ is a symmetry operator on $\im\chi_{\Set{0}}(X)\equiv \ker X$ and hence $\dim\ker X\in 2\NN$ from \cref{lem:basis in RCH symmetry}, and we get an ONB of Kramers pairs $\Set{\vf_i,\psi_i}_{i=1}^m$ for $\ker X$. Let \eql{\label{eq:definition of two parts of kerX}E :=\sum_{i=1}^m \vf_i\otimes \vf_i^*,\quad L:=\sum_{i=1}^m\psi_i\otimes \psi_i^*\,.}
		Then \eq{LJ\xi=\sum_{i=1}^m \ip{\psi_i}{J\xi}\psi_i = \sum_{i=1}^m \ip{\xi,J^*\psi_i}\psi_i = \sum_{i=1}^m \ip{\xi,\vf_i}J\vf_i= \sum_{i=1}^m J \ip{\vf_i,\xi_i}\vf_i=J E\xi\,.}
		For the case $\FF=\ii\RR$ we need to further impose that $\dim\ker X\in 2\NN$ (which is equivalent to $\findex_{\Lambda,2}(U)=0$, using the fact that the total operator is unitary, and stability of the index under compacts, see \cref{cor:compact perturbation preserve Z2 index})--we deal with the odd case in the end. Hence for $\FF=\ii\RR$ and $F=C$, apply \cref{lem:basis in RCH symmetry} on $\ker X$ to obtain an ONB $\Set{\vf_i,\psi_i}_{i=1}^m$ for  with $C\vf_i=\psi_i$. Now we define similarly $E$ and $L$ as in \cref{eq:definition of two parts of kerX}.
		
		Note that \cref{eq:spectral projections flip via A} implies $\chi_{\Set{0}}(Y)=A^{-1}\chi_{\Set{0}}(X)A$ (where we mean $A$ as the isomorphism in \cref{eq:A maps eigenvectors of Y to minus eigenvectors of X}). In fact, using \cref{eq:eigenvectors of Y mapped by A to eigenvectors of X} with $\lambda=0$ we see that $A$ maps $\ker Y$ \emph{unitarily} onto $\ker X$. Thus we can write $\chi_{\Set{0}}(Y)=A^*\chi_{\Set{0}}(X)A$. Let us therefore define the diagonal SAU that \emph{does} satisfy the symmetry constraint \eq{V=\begin{bmatrix}\sgn(X)+E-L & 0 \\ 0 & \sgn(Y)-A^*(E-L)A \end{bmatrix}=:V_L\oplus V_R\,.} 
		Indeed, $V\in\calB_\FF$ since \eq{V_LF=(\sgn(X)+E-L)F=F(-\sgn(X)+L-E)=-FV_L} and similarly for $V_R$. Moreover, $V$ is a SAU since a short calculation (which uses the intertwining property \cref{eq:intertwining property XA=minus AY}) shows $V_L^2=\Id_{\Lambda^\perp},V_R^2=\Id_\Lambda$.
		
		Now that we have a diagonal symmetric $V$, we may define the conjugation operator $G$ as before in \cref{eq:G invertible from U and V}. One has to be slightly careful since the different definition of $V$ leads to a different $G$ compared with \cref{eq:G invertible from U and V}, however, the two differ by a compact. Hence, all properties of $G$ from before still hold, and in particular, it is Fredholm of zero index. Similarly we can show that $G$ is invertible, and again it suffices to check that $\ker G=\Set{0}$ since $\findex G=0$. To that end, suppose $G\begin{bmatrix}\vf \\ \psi\end{bmatrix}=0$, then \eq{(X+V_L)\vf + A\psi = 0 \\ A^*\vf + (Y+V_R)\psi=0\,.} Using the intertwining property \cref{eq:intertwining property XA=minus AY}, we have \eq{A^*(X+V_L)=A^*(X+\sgn(X)+E-L)=(-Y-\sgn(Y)+A^*(E-L)A)A^*=-(Y+V_R)A^*\,.}
		Hence when we multiply the first equation by $A^*$ we get \eq{0=A^*(X+V_L)\vf+A^*A\psi=-(Y+V_R)A^*\vf + |A|^2\psi = ((Y+V_R)^2+\Id-Y^2)\psi} where in the last step we use the second equation and $|A|^2=\Id-Y^2.$ Now \eq{(Y+V_R)^2+\Id-Y^2=Y^2+V_R^2 + YV_R+V_RY + \Id-Y^2=2\Id + YV_R+V_RY\,.} 
		But observe that  \eq{YV_R=Y(\sgn(Y)-A^*(E-L)A)=Y\sgn(Y)+A^*\underbrace{X(E-L)}_{=0}A=Y\sgn(Y)=|Y|\,.} Thus $(Y+V_R)^2+\Id-Y^2=2\Id + 2|Y|$, which implies that $\psi=0$. Similarly, one can show that $\vf=0$.
		
		Since $U,V\in\calB_\FF^\calL$, it follows that $G\in\calB_\FF^\calL$, and hence $W:=\polar(G)\in\calU^\calL_\FF$, and we have $V=W^*UW$ similar to \cref{eq:V and U intertwines}. The rest of the arguments follow analogously to the proof in \cref{thm:classification of RCH LamNT SAUs}, with the exception, however of an equivariant version (adapted to $\ii\RR$ and $\ii\HH$) of \cref{thm:Kuiper}. To that end, let us rather apply \cref{thm:classification of RCH local unitaries} on $-\ii W$. Indeed, since \eq{(-\ii W)F=-\ii (-FW)=F(-\ii W)} it follows that $-\ii W\in\calU^\calL_{-\ii \FF}$ with $\FF=\ii\RR,\ii\HH$, i.e., $-\ii W$ is a standard real or quaternionic operator, and there is a path from $-\ii W$ to $\Id$ within $\calU^\calL_{-\ii \FF}$. We multiply by $\ii$ again to obtain a path from $W$ to $\ii \Id$ within $\calU^\calL_\FF.$ Thus $V$ can be deformed to $(\ii \Id)^* U (\ii \Id)=U$ in $\LamNTSAU_\FF.$

		Finally, we tackle the problem of connecting two operators $U,\ti{U}\in\LamNTSAU_{\ii \RR}$ both of whom have non-zero index. However, we can't exactly follow the strategy above since it turns out that it is not possible to deform $U$ to a diagonal SAU that obeys that symmetry constraint. Since, by definition, $\findex_{2,\Lambda} U = 1$ means $\dim \ker X \in 2\NN+1$ (still using \cref{eq:space decomposition of SA unitary}), we decompose this kernel into \emph{three} disjoint parts \eq{\chi_{\Set{0}}(X)=E+L+\eta\otimes \eta^*} where $E,L$ satisfies \cref{eq:symmetry intertwines two parts of kernels at zero}, and $\eta\in(\im\Lambda)^\perp$ is fixed by the symmetry $C\eta=\eta$. It is possible to find such $\eta.$ Indeed, apply \cref{lem:basis in RCH symmetry} to construct an ONB of $\ker X$ fixed by $C$, and pick one $\eta$ in this collection. Now apply \cref{lem:basis in RCH symmetry} again on $\ker X \ominus\CC\eta$, and obtain $\Set{\vf_i,\psi_i}_{i=1}^m$ such that $C\vf_i=\psi_i$, and we construct $E,L$ similar as before \cref{eq:definition of two parts of kerX}. 
		Define \eql{\label{eq:define xi in iiR case}\xi:=  A^*\eta \in\im \Lambda} so that  $C\xi=CA^*\eta = -A^*C\eta =-A^*\eta =-\xi$, where we use \cref{eq:X Y A are iiR iiH symmetric} to show that $CA^*=-A^*C$.
		Construct a SAU \eql{\label{eq:V non diagonal}V:=\begin{bmatrix}\sgn(X)+E-L & \eta\otimes \xi^* \\ \xi\otimes \eta^* & \sgn(Y)-A^*(E-L)A\end{bmatrix}=:\begin{bmatrix}V_L & \eta\otimes \xi^* \\ \xi\otimes \eta^* & V_R\end{bmatrix}\,.} 
		We have $V^2=\Id$ from the following computation: \eq{(\sgn(X)+E-L)^2+(\eta\otimes\xi^*) (\xi\otimes \eta^*)=\sgn(X)^2+(E+L+\eta\otimes \eta^*)=(\sgn(X)+\chi_{\Set{0}}(X))^2=\Id} and \eq{(\xi\otimes\eta^*)(\sgn(X)+E-L)+(\sgn(Y)-A^*(E-L)A)(\xi\otimes\eta^*)=0\,.}
		
		To verify that $V\in\calB_{\ii\RR}$, we compute 
		\eq{
			VC&=\begin{bmatrix}\sgn(X)+E-L & \eta\otimes \xi^* \\ \xi\otimes \eta^* & \sgn(Y)-A^*(E-L)A\end{bmatrix} \begin{bmatrix}C & 0 \\ 0 & C\end{bmatrix}
			\\&=\begin{bmatrix}(\sgn(X)+E-L)C & (\eta\otimes \xi^*)C \\ (\xi\otimes \eta^*)C & (\sgn(Y)-A^*(E-L)A)C\end{bmatrix}
			\\&=\begin{bmatrix}C(-\sgn(X)+L-E) & -C(\eta\otimes \xi^*) \\ -C(\xi\otimes \eta^*) & C(-\sgn(Y)-A^*(L-E)A)\end{bmatrix} 
			\\&=-CV
		}
		where we use the fact that $AC=-CA$ and $C\eta=\eta$. Now that we have an appropriate SAU $V$, we follow \cref{eq:G invertible from U and V} and define $G:=\frac12(U+V)$. This self-adjoint operator enjoys all the properties discussed above of having $\findex_\Lambda G = 0$ and invertible. To see the invertibility, one follows a similar calculation to the one performed already twice above, so we omit it here.
				
		Using the invertibility of $G$ we may construct now a path from $U\rightsquigarrow V$ within $\LamNTSAU_{\ii\RR}$. Having deformed $U,\ti U$ into $V,\ti{V}$ respectively we now seek a unitary operator that conjugates $V$ into $\ti{V}$, and could be deformed to $\Id$ in a certain symmetric way. Using the RHS of \cref{eq:V non diagonal}, decompose the space $\ker\Lambda$ as \eq{\ker\Lambda = \im \chi_{\Set{1}}(V_{L}) \oplus \im \chi_{\Set{-1}}(V_{L}) \oplus \chi_{\Set{0}}(V_{L})=:\calG} where, in fact, $\im \chi_{\Set{0}}(V_{L})=\szpan(\eta)$. We may decompose $\ker \Lambda$ similarly according to $\ti{V}_{L}$, and we denote this grading as $\ti\calG$; note $\calG$ and $\ti\calG$ are \emph{the same space $\ker\Lambda$} which is merely graded differently. Since $V_LC=-CV_L$, it follows that $\chi_{\Set{1}}(V_{L})C=C \chi_{\Set{-1}}(V_{L})$. Since $\chi_{\Set{1}}(V_{L}),\chi_{\Set{1}}(\ti {V_{L}})$ are both infinite dimensional, there is a unitary $Z:\chi_{\Set{1}}(V_{L})\to\chi_{\Set{1}}(\ti{V_{L}})$. Using it, we define $W_L:\calG\to\ti\calG$ as \eq{W_L:=\begin{bmatrix}Z & 0 & 0 \\ 0 & CZC & 0 \\ 0 & 0 & \ti\eta\otimes\eta^\ast\end{bmatrix}\,.}
		It is clear that $W_L$ is unitary. In this grading, we can write $C,V_L:\calG\to\calG$ as \eq{C=\begin{bmatrix}0 & C_{\im \chi_{\Set{-1}}(V_{L})\to\im \chi_{\Set{1}}(V_{L})} & 0 \\ C_{\im \chi_{\Set{1}}(V_{L})\to\im \chi_{\Set{-1}}(V_{L})} & 0 & 0 \\ 0 & 0 & \Id\end{bmatrix},\quad V_L=\begin{bmatrix}\Id & 0 & 0 \\ 0 & -\Id & 0 \\ 0 & 0 & 0\end{bmatrix}\,.} A similar expression holds for $\ti \calG$ and $\ti{V_L},C:\ti\calG\to\ti\calG$. Now a direct computation shows that \eq{V_{L}=W_{L}^\ast\ti{V}_{L}W_{L},\quad W_{L}C=C W_{L}\,.} 
		
		We define $W_{R}:\im\Lambda\to\im\Lambda$ similarly. Plugging this into the RHS of \cref{eq:V non diagonal}, we find \eq{V &=\begin{bmatrix}W_{L}^\ast\ti{V}_{L}W_{L} & \eta\otimes \xi^* \\ \xi\otimes\eta^* & W_{R}^\ast\ti{V}_{R}W_{R}\end{bmatrix} \\ &=\begin{bmatrix}W_{L}^* & 0 \\ 0 & W_{R}^*\end{bmatrix} \begin{bmatrix}\ti{V}_{L} & W_{L} \eta\otimes \xi^* W_{R}^* \\ W_{R} \xi\otimes\eta^* W_{L}^* & \ti{V}_{R}\end{bmatrix}\begin{bmatrix}W_{L} & 0 \\ 0 & W_{R}\end{bmatrix}\,.} In fact, it holds that $W_{L} (\eta\otimes \xi^* )W_{R}^*=\ti{\eta}\otimes\ti{\xi}^*$. Thus $V=W\ti{V}W^*$ where $W:=W_{L}\oplus W_{R}$.
		
		We can use \cref{thm:Kuiper} to deform each $W_{L},W_{R}$ to $\Id$ such that the path commutes with $C$, which yields a path connecting $V\rightsquigarrow \ti V$ within $\LamNTSAU_{\ii \RR}$.
	\end{proof}

	\begin{lem}[$\Lambda$-non-triviality is well-defined]\label{lem:lambda-non-triviality is well-defined}
		Let $U\in\LamNTSAU$ and $W\in \calSU^\calL$. If $U-W$ is compact or sufficiently small in norm, then $W\in\LamNTSAU$ too. In particular, any continuous path in $\calSU^\calL$ starting within $\LamNTSAU$ is entirely contained within $\LamNTSAU$.
	\end{lem}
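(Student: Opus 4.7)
My plan is to handle the three claims in sequence: the compact perturbation case, the small-norm perturbation case, and the stability of paths. The compact case is essentially free from the definition: if $V\in\calSU$ witnesses $U\in\LamNTSAU$ (so $V$ is hyper-local with non-trivial $\Lambda$-blocks and $U-V\in\calK$), then simply write $W-V=(W-U)+(U-V)\in\calK$, so the very same $V$ witnesses $W\in\LamNTSAU$.

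For the norm case, the strategy is to manufacture a new witness $V'$ for $W$ directly from $W$. Decompose $W$ in the $\ker\Lambda\oplus\im\Lambda$ grading; since $W\in\calSU^\calL$, the off-diagonal blocks are compact and the relation $W^2=\Id$ forces the diagonal blocks $W_{LL},W_{RR}$ to be essentially self-adjoint unitaries (equivalently, after the shift $P\mapsto\tfrac12(\Id-U)$ of \cref{rem:connection between projections and self-adjoint unitaries}, essentially projections). The same decomposition applied to $U$, together with the hypothesis $U\in\LamNTSAU$, gives via the witnessing $V=V_L\oplus V_R$ that $U_{LL}-V_L,U_{RR}-V_R\in\calK$, so $U_{LL},U_{RR}$ are essentially \emph{non-trivial} SAUs. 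Then I would apply \cref{lem:stability of essentially non trivial projections} (translated through \cref{rem:connection between projections and self-adjoint unitaries}) to the pairs $U_{LL},W_{LL}$ and $U_{RR},W_{RR}$: since $\|W_{LL}-U_{LL}\|,\|W_{RR}-U_{RR}\|\le\|W-U\|$, taking $\|W-U\|$ small enough ensures $W_{LL},W_{RR}$ are themselves essentially non-trivial SAUs. Invoking the forward direction of \cref{lem:stability of essentially non trivial projections} then provides genuine non-trivial SAUs $V_L',V_R'$ with $W_{LL}-V_L',W_{RR}-V_R'\in\calK$. Setting $V':=V_L'\oplus V_R'$ produces a hyper-local SAU with non-trivial blocks, and $W-V'\in\calK$ since the difference splits into the compact off-diagonal part of $W$ plus the two block-diagonal compact corrections.

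For the path statement, the plan is to promote the stability results into a clopen property. The small-norm case shows that $\LamNTSAU$ is open in $\calSU^\calL$. For closedness, suppose $U_n\in\LamNTSAU$ converges in norm to $U\in\calSU^\calL$; then for $n$ large, $\|U-U_n\|$ is small and the small-norm case applied with $U_n$ playing the role of $U$ (and $U$ playing the role of $W$) gives $U\in\LamNTSAU$. Hence $\LamNTSAU$ is clopen in $\calSU^\calL$, so for any continuous $\gamma:[0,1]\to\calSU^\calL$ with $\gamma(0)\in\LamNTSAU$, the preimage $\gamma^{-1}(\LamNTSAU)$ is a nonempty clopen subset of the connected space $[0,1]$, hence equals $[0,1]$.

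The only step that requires any care is checking that \cref{lem:stability of essentially non trivial projections} truly transfers to essential SAUs via the affine bijection of \cref{rem:connection between projections and self-adjoint unitaries}; once this is acknowledged (the bijection is norm-bi-Lipschitz and preserves compact differences), the rest is a bookkeeping exercise of inserting block decompositions and compact remainders, which I would write out only as much as needed to match the definitional conditions of \cref{def:Lambda-non-trivial SAUs}.
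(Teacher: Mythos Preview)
Your proposal is correct and follows essentially the same approach as the paper: block-decompose in the $\Lambda$-grading, note that the diagonal blocks are essentially self-adjoint unitaries, and invoke \cref{lem:stability of essentially non trivial projections} (via the affine bijection of \cref{rem:connection between projections and self-adjoint unitaries}) together with $\|W_{ii}-U_{ii}\|\le\|W-U\|$ to transfer essential non-triviality from $U_{ii}$ to $W_{ii}$. You are simply more explicit than the paper in constructing the witness $V'=V_L'\oplus V_R'$ and in spelling out the clopen argument for the path statement, both of which the paper leaves implicit.
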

	\begin{proof}
		Let $U\in\LamNTSAU$. Then there exists $V$ as in \cref{def:Lambda-non-trivial SAUs} such that $[V,\Lambda]=0$, and $V$ is non-trivial when restricted to either $\im\Lambda$ or $(\im \Lambda)^\perp$, and $U-V\in\calK$. Let $W\in\calSU^\calL$. Decompose $W$ in $\calH= (\im\Lambda)^\perp \oplus \im\Lambda$ as \eq{\begin{bmatrix}W_{LL} & W_{LR} \\ W_{RL} & W_{RR}\end{bmatrix}\,.} Here $W_{ii}$ is self-adjoint and essentially unitary for $i=L,R$. By \cref{lem:stability of essentially non trivial projections} and the fact that $\norm{W_{ii}-U_{ii}}\leq\norm{W-U}$ for $i=L,R$, we conclude that $W_{ii}$ is essentially a non-trivial SAU.

		The compact statement is trivial.
	\end{proof}

	\section{Classification of bulk one-dimensional spectrally-gapped insulators}\label{sec:bulk insulators classification}
	We now come to the classification of one-dimensional insulators with a spectral gap. Let us begin with the general setup. We are interested in describing quantum mechanical systems of non-interacting electrons on a lattice, and hence we choose the Hilbert space $$ \calH := \ell^2(\ZZ^d)\otimes\CC^N $$ where $d$ is the space dimension and $N$ is the (fixed) number of internal degrees of freedom on each lattice site. The choice of the cubic lattice is made for simplicity of notation, since changing $N$ we may encode any graph via redimerization. What is however of importance is the fact $\ZZ^d$ has no boundary, which corresponds physically to \emph{bulk} systems. Later we comment briefly on edge systems in \cref{sec:1d edge classification}. We note that a classification of continuum systems with Hilbert space $L^2(\RR^d)\otimes\CC^N$ would also be interesting, especially since some of the features presented here seem to only emerge in the tight-binding setting,  see  \cite{ShapiroWeinstein2020,ShapiroWeinstein2022}.
	
	As was mentioned already above, locality plays a crucial role in our analysis. Physically it corresponds to the decaying probability of quantum mechanical transition between farther and farther points in space. There are various ways to encode locality of a quantum mechanical operator; Let $\Set{\delta_x}_{x\in\ZZ^d}$ be the singled-out position basis of Hilbert space, so that for any $A\in\calB$ and $x,y\in\ZZ^d$, the expression $A_{xy}$ corresponds to an $N\times N$ matrix whose matrix elements are $$ (A_{xy})_{ij} \equiv \ip{\delta_x\otimes e_i}{A \delta_y \otimes e_j}\qquad(i,j\in\Set{1,\dots,N}) $$  with $\Set{e_i}_{i=1}^N$ the standard basis for $\CC^N$. Now, the most straight forward way which is common in physics to specify locality is the nearest-neighbor constraint, i.e., $A\in\calB$ is local iff $$ A_{xy} = A_{xy}\chi_{\Set{0,1}}(\norm{x-y}) \qquad(x,y\in\ZZ^d)$$ where we take, say, the Euclidean norm on $\ZZ^d$ and $\chi$ is the characteristic function. Sometimes one prefers to consider \emph{finite hopping} operators, which are those operators $A\in\calB$ for which there exists some $R>0$ such that \eq{A_{xy} = A_{xy}\chi_{\RR_{\leq R}}(\norm{x-y}) \qquad(x,y\in\ZZ^d)\,.} In mathematics it is customary to consider the locality constraint as exponential decay of the off-diagonal matrix elements, i.e., that there exists some $C,\mu<\infty$ such that \eql{\label{eq:exp locality} \norm{A_{xy}}\leq C \exp\left(-\mu\norm{x-y}\right)\qquad(x,y\in\ZZ^d)\,. } Here we may choose any matrix norm for the LHS. This definition of locality is very natural and also facilitates the analysis on many occasions, it has appeared in various papers on topological insulators, e.g. \cite{EGS_2005,Shapiro20,FSSWY22,BSS23,Shapiro2019}.
	
	In choosing the correct definition of locality there is a certain art. If we were to insist on the above definition via exponential decay \cref{eq:exp locality}, the analysis becomes tedious and inelegant. Indeed, to drive this point further, and out of general interest, we explore this idea later in \cref{subsec:exp locality 1d chiral classification}. On the other hand one may define locality as that property of operators so that (together with the gap condition), topological indices are well-defined, which might lead to rather abstract topological analysis. Here we choose a middle ground which on the one hand leads to relatively natural functional analytic proofs and on the other hand is somewhat of a shadow of \cref{eq:exp locality}. We formulate it only in one and two dimensions here so as to avoid additional notational overhead which is anyway not necessary in the present paper, but see \cref{sec:higher odd d} below for the construction in higher dimensions.
	
	\begin{defn}[locality in $d=1$]\label{def:1d locality} Define an operator to be local iff it is $\Lambda$-local as in \cref{def:Lambda-local operators}, now with the particular choice $\Lambda:=\chi_\NN(X)$ where $X$ is the position operator on $\ell^2(\ZZ)$. Hence, $A\in\calB$ is local iff $[A,\Lambda]\in\calK$. 
	\end{defn}
	\begin{defn}[locality in $d=2$]\label{def:2d locality} Let $X_1,X_2$ be the two position operators on $\ell^2(\ZZ)$, with which $\Phi = \arg(X_1+\ii X_2)$ is the angle-position operator and $\ee^{\ii \Phi}$ is the phase position operator. An operator $A\in\calB$ is termed local now iff $[\ee^{\ii \Phi},A]\in\calK$.
	\end{defn}
	It is a fact that \cref{eq:exp locality} implies the compact commutator locality criterion: indeed, this is proven e.g. in \cite[Lemma 2 (b)]{Graf_Shapiro_2018_1D_Chiral_BEC} and \cite[Lemma A.1]{BSS23} for $d=1,2$ respectively. On the other hand it is certainly clear that these compact commutator notions of locality are strictly weaker than \cref{eq:exp locality}. From now on in this section $\calL$ stands for local operators with the compact commutator condition (very soon we will specify to $d=1$ and then we mean \cref{def:1d locality}).
	
	\begin{rem}[Compact commutator locality and the role of $N$]\label{rem:the role of N with the compact commutator condition} In our presentation so far the parameter $N$ is the internal fiber dimension, which physically could stand for spin, isospin, sub-lattice, or any other on-site internal degree of freedom of electrons. By requiring that operators are local via \cref{def:1d locality} instead of \cref{eq:exp locality}, we in principle allow $N$ to vary as we perform homotopies between operators. Indeed, by re-dimerization, given any operator presented on a Hilbert space with one given $N$ we may obtain another operator with any other $\widetilde{N}$ and clearly both would obey the compact commutator condition. This is thus a counter point of criticism on our K-theoretic-free analysis: why go through so much trouble to avoid K-theory if in the end anyway $N$ may effectively vary during homotopies? One response would be that unlike in K-theory our construction still calculates absolute rather than relative phases (we avoid the Grothendiek construction) and moreover, as explained, the calculation brings the topology defined on the set of operators to the foreground and as such may allow us to deal with the mobility gap regime.
	\end{rem}
	
	\begin{defn}[material]
		A material is then specified as a \emph{local} quantum mechanical Hamiltonian $H$ on $\calH$, i.e., some self-adjoint bounded linear operator $H=H^\ast\in\calL$.
	\end{defn}
	
	\subsection{Insulators}
	The space of all materials is too big to be topologically interesting (it is clearly nullhomotopic with straight-line homotopies). To further restrict it, we concentrate on \emph{insulators}: materials which exhibit zero direct current if electric voltage is applied. This statement needs to be qualified: due to the Pauli exclusion principle, electrons in a solid are characterized by a \emph{Fermi energy} $\mu\in\RR$, and so the same material could be both an insulator and a conductor when probed at different values of $\mu$. It turns out that for the purpose of conductivity, at a given $\mu$, it is equivalent to consider either $H$ at Fermi energy $\mu$, or $H-\mu\Id$ at Fermi energy $0$; clearly the latter operator is local too. Hence for the sake of simplicity we shall henceforth assume, without loss of generality, that the Fermi energy is always fixed at $\mu=0$. We note in passing that this assumption is not entirely benign when coupled with symmetries: further below we will see that certain symmetric operators have spectral symmetry about zero and then if one sets the Fermi energy at values other than zero one may obtain a different classification. 
	
	We identify two ways to encode the insulator (at $\mu=0$) condition: the spectral gap and the mobility gap criterions. The spectral gap condition is a simple constraint on the operator ${0\notin\sigma(H)}$, i.e., $H$ is an invertible operator. Since $\sigma(H)\in\Closed{\RR}$, this implies the existence of an open interval about zero which is not in the spectrum. The mobility gap condition is rather a constraint on the quantum dynamics associated with $H$, and is a set of almost-sure consequences for random ensembles of operators exhibiting Anderson localization. This condition was first presented in \cite{EGS_2005}. Since we will discuss the mobility gap regime specifically later in \cref{sec:mobility gap} let us continue with the general progression here and accept that insulators are
	\begin{defn}[insulators] A material $H=H^\ast\in\calL$ is an insulator iff it is invertible, i.e., if $0\notin\sigma(H)$. The space of all insulators is denoted by $\calI\equiv\calI_N$ (we mostly keep the fiber dimension $N$ implicit since it is fixed) and is endowed, as all other spaces, with the subspace topology from the operator norm topology on $\calB$.
	\end{defn}

	To each insulator $H$ we associate a \emph{Fermi projection} \eq{P\equiv P(H)  := \chi_{(-\infty,0)}(H)} which physically speaking corresponds to the Fermionic many-body ground state (density matrix) within the single-particle Hilbert space. Importantly, $P$ inherits locality from $H$: This is a consequence of \cref{lem:continuous functional calculus of normals is local} and the fact that under the assumption of a spectral gap, $\chi_{(-\infty,0)}(H)=f(H)$ with $f$ a continuous function differing from $\chi_{(-\infty,0)}$ on $\CC\setminus\sigma(H)$.

	At this point we specify to $d=1$. The task at hand is to calculate $\pi_0(\calI)$. According to the Kitaev table \cref{table:Kitaev} we should recover $\pi_0(\calI)=\Set{0}$. This is however not true at the level of generality we are working. Indeed, this is clear even without locality constraints: just take any insulator that has spectrum only above zero and another insulator that has spectrum only below zero: these two cannot be connected without passing with spectrum through zero and hence exiting $\calI$. A remedy would be to constrain to the space of insulators such that their Fermi projection is non-trivial as in \cref{def:non trivial projections}. But actually even this is still not enough: locality in one-dimension divides the system into left and right halves, and we should insist that our system is non-trivial on each side separately--this is the notion of $\Lambda$-non-trivial projections from \cref{def:Lambda-non-trivial SAUs} (adapted from SAUs to self-adjoint projections in an obvious way)--so that we are speaking about genuine \emph{bulk} systems rather than domain walls or edge systems. 
	
	\begin{example}[The necessity of $\Lambda$-non-triviality]
		Let $H := \Lambda - \Lambda^\perp$ and $\widetilde{H} := -\Lambda + \Lambda^\perp$. Both of these (flat) Hamiltonians are local (indeed, diagonal in space and in energy) and each has a Fermi projection which is non-trivial in the sense of \cref{def:non trivial projections}, because it has an infinite kernel and infinite range. However, on each half of space separately, the Fermi projections are trivial (just $\Id$ or $0$). 
		
		We claim that $H$ cannot be deformed into $\widetilde{H}$ without either closing the gap or violating locality.
	\end{example}
	
	\begin{proof}
		We prove the claim by contradiction: suppose there exists a continuous path $t\mapsto H_t$ that deforms $H$ to $\widetilde{H}$ such that $H_t$ is self-adjoint, invertible and local. Then $t\mapsto \frac{1}{2}(\Id-\sgn(H_t))$ is a continuous path of local self-adjoint projections that connects the Fermi projection of $H$ to $\widetilde{H}$, which we denote as $P$ and $\widetilde{P}$, respectively. Let us recall \cite[Proposition 2.2.6]{rordam2000introduction}, which says that for any C-star algebra $\calA$, if $A,B\in\calA$ are projections that are path-connected, then there exists a unitary in $\calA$ conjugating them. We apply this lemma on the C-star algebra $\calL$ to conclude that there exists some $U\in\calU^\calL$ such that $P=U^*\widetilde{P}U$. Decompose $U$ in $\calH=(\im \Lambda)^\perp\oplus \im \Lambda$ as \cref{eq:Lambda decomposition of U}. Writing out the equation $P=U^*\widetilde{P}U$ in this decomposition, we find \eq{\begin{bmatrix}\Id & 0 \\ 0 & 0\end{bmatrix} = \begin{bmatrix}U^*_{LL}& U^*_{RL} \\ U^*_{LR} & U^*_{RR}\end{bmatrix}\begin{bmatrix}0 & 0 \\ 0 & \Id \end{bmatrix} \begin{bmatrix}U_{LL}& U_{LR} \\ U_{RL} & U_{RR}\end{bmatrix} = \begin{bmatrix}U^*_{RL}U_{RL}& U^*_{RL}U_{RR} \\ U^*_{RR}U_{RL} & U^*_{RR}U_{RR}\end{bmatrix} \,.} Thus $U_{RL}^*U_{RL}=\Id.$ Now by assumption $U$ is local, which implies that $U_{RL}$ is a compact operator. However, $\Id$ is not compact on the infinite-dimensional space $(\im\Lambda)^\perp$. This leads to a contradiction.
	\end{proof}
	
	We thus define 
	
	\begin{defn}[bulk insulators]\label{def:bulk insulators} A material $H\in\calI$ is a bulk-insulator iff its Fermi projection is $\Lambda$-non-trivial, i.e., $\sgn(H)$ is a $\Lambda$-non-trivial SAU in the sense of \cref{def:Lambda-non-trivial SAUs}. We denote the space of bulk-insulators with $\calI_B\equiv\calI_{N,B}$: \eql{\calI_B := \Set{H\in\calI|\sgn(H)\in\LamNTSAU}} and furnish it also with the subspace topology.
	\end{defn}

	It will indeed emerge that in one space dimension, $\pi_0(\calI_B)=\Set{0}$ as stipulated by \cref{table:Kitaev}; this is one case of \cref{thm:main 1d insulators classification theorem} below.

	\subsection{The Altland-Zirnbauer symmetry classes} 
	Next we discuss the Altland-Zirnbauer symmetry classes \cite{AltlandZirnbauer1997} (AZ classes henceforth). The idea is that by restricting to a subspace, we could obtain non-trivial topology. From context of physics, naturally the subspaces of operators are those which obey certain symmetries. According to Wigner's theorem \cite{Bargmann1964}, a symmetry is a unitary or anti-unitary operator on $\calH$. Two basic operations coming from quantum field theory are time-reversal $\Theta$ and charge conjugation $\calC$ (which, in the context of solid state physics should be considered as particle-hole $\Xi$); the third one is parity which we do not need here. Naturally since the time evolution in quantum mechanics is implemented via $\exp(-\ii t H)$, $\Theta$ should be anti-unitary and $H$ is deemed ``time-reversal invariant'' iff it \emph{commutes} with $\Theta$. It was Dyson \cite{Dyson1962} who identified the two important cases $\Theta^2=\pm\Id$ which eponymously became known as Dyson's three-fold way (no $\Theta$ constraint or $[H,\Theta]=0$ with $\Theta=\pm\Id$). Altland and Zirnbauer \cite{AltlandZirnbauer1997} combined the three-fold way together with the charge-conjugation operator to form what is now known as \emph{the ten-fold way}. They considered many-body systems and Bogoliubov-de-Gennes (BdG) Hamiltonian description of superconductors, and in the context of which, one may think of particle hole $\Xi$ again as an anti-unitary operator which may square to $\pm\Id$, and commutes with $\Theta$. However, now, a Hamiltonian is deemed particle-hole symmetric iff it \emph{anti-commutes} with $\Xi$: \eql{\{H,\Xi\}\equiv H\Xi+\Xi H = 0\,.} The idea that a symmetry anti-commutes with a Hamiltonian may appear unnatural and at odds with basic notions of quantum mechanics--this is not how Altland and Zirnbauer phrased their many-body theory where all symmetries commute with the Hamiltonian; see \cite{Zirnbauer2021} for further discussion. Nonetheless it became quite established in modern condensed matter physics to use the anti-commutation condition as a convenient way to deal with particle-hole symmetry, \emph{and we will follow suit}. They then defined the chiral symmetry operator as the composition of the two \eq{\Pi := \Theta \Xi\,.} Since both $\Theta$ and $\Xi$ are anti-unitary, $\Pi$ is actually unitary and its square is of no consequence in the sense that $\{H,\Pi\}=0$ iff $\{H,\ii \Pi\}=0$. An interesting point is that one may consider a system which is chiral-symmetric (so it obeys $\{H,\Pi\}=0$ even though it has no further symmetries). Taking into account all possibilities (presence or absence of each symmetry constraint with each $\pm\Id$ version) we arrive at ten possibilities which are depicted in the first column of \cref{table:Kitaev}. These ten possibilities correspond to well-known structures in mathematics, such as the ten Morita equivalence classes of Clifford algebras \cite{AtiyahBottShapiro1964}, Cartan's ten infinite families of compact symmetric spaces \cite{Cartan1926,Cartan1927} and the ten associative real super division algebras \cite{Wall1964,Deligne1999}. The AZ labels themselves, by the way, come from Cartan.

	\begin{assumption}[Symmetries are strictly local] We shall assume that $\Theta,\Xi$ and $\Pi$ are strictly local, i.e., they \emph{commute} with the position operator $X$. Hence they can be considered as (anti-)unitary operators on $\CC^N$.
	\end{assumption}
	It would appear that most of the analysis should probably go through if it is only assumed that the commutator is compact: redimerization could make it hold if the symmetry operators have finite range.
	\begin{rem}\label{rem:conventional choices for Pi}
		In the foregoing discussion, we merely remarked that the sign of $\Pi^2$ is of no consequence to the analysis, and usually, when one presents the Kitaev table \cref{table:Kitaev} (as we did) one does not write out what $\Pi^2$ is, but rather only whether it is present or not. 
		
		It is however clear that if $\Theta$ and $\Xi$ are presumed to commute (as we indeed assume) then $\Pi^2=\Theta^2 \Xi^2$ and hence according to \cref{table:Kitaev} once $\Theta^2$ and $\Xi^2$ disagree, $\Pi^2=-\Id$. This however contradicts the ubiquitous convention of taking $\Pi=\Id\otimes\sigma_3$ which always squares to $+\Id$. Thus there are two possibilities: either take $\widetilde{\Pi}=\ii \Id\otimes\sigma_3$ for those AZ symmetry classes where $\Xi^2$ and $\Theta^2$ disagree ($\Set{H,\Pi}=0$ iff $\Set{H,\widetilde{\Pi}}=0$), or equivalently, for those AZ symmetry classes, take $\Set{\Theta,\Pi}=0$ instead of $[\Theta,\Pi]=0$.
		
		To preserve notational simplicity, we found it more convenient to \emph{always} assume that $\Pi=\Id\otimes\sigma_3$ and when necessary, employ $\Set{\Theta,\Pi}=0$; this convention follows, e.g., \cite{KatsuraKoma2018}. This explains the following assumption.
	\end{rem}

	\begin{assumption}\label{ass:chirality has same dimensions for its eigenspaces}
		We assume that $\Pi$ has $\pm1$ eigenspaces of the same dimension, and that there is a unitary mapping between the two $\Pi$ eigenspaces which commutes with both $\Theta$ or $\Xi$.
	\end{assumption}
	\begin{defn}[symmetric insulators] To each of the AZ symmetry classes \eq{\mathrm{AZ}:=\Set{\mathrm{A},\mathrm{AI},\mathrm{AII},\mathrm{AIII},\mathrm{BDI},\mathrm{D},\mathrm{DIII},\mathrm{C},\mathrm{CI},\mathrm{CII}}} we define the class of bulk-insulators which obey that symmetry and label it by \eq{\calI_{B,\Sigma}\equiv\calI_{N,B,\Sigma}\qquad(\Sigma\in\mathrm{AZ})\,.}
	\end{defn}
	
	The main result of this section is
	\begin{thm}[The one-dimensional column of the Kitaev table]\label{thm:main 1d insulators classification theorem} At each fixed $N$, for any $\Sigma\in\mathrm{AZ}$, the path-connected components of $\calI_{N,B,\Sigma}$ considered with the subspace topology associated with the operator norm topology, agree with the set appearing in the first column of \cref{table:Kitaev}.
	\end{thm}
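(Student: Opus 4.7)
The plan is to route each symmetry class through the abstract classifications of \cref{sec:local unitaries classification,sec:local SA unitaries classification} via the \emph{flat-band} map
\[
Q\colon \calI_B \to \LamNTSAU\,,\qquad H \mapsto \sgn(H) = \Id - 2\chi_{(-\infty,0)}(H)\,.
\]
Since $0\notin\sigma(H)$ the function $\sgn$ agrees on $\sigma(H)$ with some continuous $f\colon\RR\to\RR$, so $Q$ is norm continuous and lands in $\calSU^\calL$ by \cref{lem:continuous functional calculus of normals is local}. That $Q(H)$ is $\Lambda$-non-trivial is built into \cref{def:bulk insulators}. The map $Q$ is a homotopy equivalence on $\calI_B$: the straight-line path $H_t := (1-t)H + t\,Q(H)$ stays self-adjoint, invertible (since $Q(H)$ and $H$ share a sign pattern, the spectrum of $H_t$ never crosses $0$), local, preserves every AZ symmetry $H$ enjoys, and stays $\Lambda$-non-trivial by \cref{lem:lambda-non-triviality is well-defined}. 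Hence $\pi_0(\calI_{B,\Sigma}) \cong \pi_0(\mathrm{Flat}_\Sigma)$ where $\mathrm{Flat}_\Sigma \subseteq \LamNTSAU$ is the subspace cut out by the AZ symmetries of $\Sigma$.

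For the five non-chiral classes the translation to \cref{sec:local SA unitaries classification} is immediate: $[H,\Theta]=0$ forces $[Q,\Theta]=0$, while $\{H,\Xi\}=0$ forces $\{Q,\Xi\}=0$. Identifying $\Theta$ or $\Xi$ with the abstract $C$ or $J$ structure according to its square, the classes $\mathrm{A,AI,AII,D,C}$ map respectively to $\LamNTSAU$, $\LamNTSAU_{\RR}$, $\LamNTSAU_{\HH}$, $\LamNTSAU_{\ii\RR}$, $\LamNTSAU_{\ii\HH}$. \cref{thm:classification of RCH LamNT SAUs,thm:classification of iR iH LamNT SAUs} then yield $\{0\},\{0\},\{0\},\ZZ_2,\{0\}$, agreeing with \cref{table:Kitaev}.

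For the five chiral classes, the anti-commutation $\{H,\Pi\}=0$ with $\Pi = \Id\otimes\sigma_3$ forces $H$ into block off-diagonal form in the $\pm 1$ eigenspaces of $\Pi$: writing $H=\begin{bmatrix} 0 & h \\ h^\ast & 0 \end{bmatrix}$, invertibility of $H$ is equivalent to that of $h$, and flattening yields $Q(H)=\begin{bmatrix} 0 & U \\ U^\ast & 0 \end{bmatrix}$ with $U := \polar(h)\in\calU^\calL$. The assignment $H\mapsto U$ is a homotopy equivalence onto $\calU^\calL$; in the chiral case $\Lambda$-non-triviality of the Fermi projection is automatic by \cref{lem:chiral insulators are automatically bulk non trivial}, so no further constraint on $U$ is needed. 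The remaining symmetries translate to constraints on $U$ according to whether they commute or anti-commute with $\Pi$ (per \cref{rem:conventional choices for Pi}): when $\Theta^2$ and $\Xi^2$ agree the relevant anti-unitaries are diagonal in the chiral grading and induce an $\RR$ or $\HH$ structure on $U$; when they disagree the anti-unitaries are off-diagonal and swap the two chiral sectors, sending $U \mapsto U^\ast$ under conjugation and thus producing a $\star\RR$ or $\star\HH$ constraint. This yields the dictionary
\[
\mathrm{AIII}\leftrightarrow\calU^\calL,\quad \mathrm{BDI}\leftrightarrow\calU^\calL_{\RR},\quad \mathrm{CII}\leftrightarrow\calU^\calL_{\HH},\quad \mathrm{DIII}\leftrightarrow\calU^\calL_{\star\HH},\quad \mathrm{CI}\leftrightarrow\calU^\calL_{\star\RR}\,,
\]
and \cref{thm:classification of RCH local unitaries,thm:star-real local unitaries,thm:star-quaternionic local unitaries} deliver the groups $\ZZ,\ZZ,2\ZZ,\ZZ_2,\{0\}$, agreeing with the corresponding entries (noting $2\ZZ\cong\ZZ$ as abelian groups).

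The main technical obstacle is the symmetry bookkeeping in the chiral case, namely verifying that the structure induced on $U$ is exactly one of $\{\RR,\HH,\star\RR,\star\HH\}$ with the correct square sign for each of BDI, CII, DIII, CI. Using \cref{ass:chirality has same dimensions for its eigenspaces} one fixes a unitary identification of the two $\Pi$-eigenspaces compatible with both $\Theta$ and $\Xi$, which reduces the computation to a direct expansion of $[\Theta,H]=0$ (resp.\ $\{\Xi,H\}=0$) in the chiral grading. A secondary item is to check that $h$ inherits locality from $H$, which is immediate since $[\Pi,\Lambda]=0$ ($\Pi$ acts purely on the fiber while $\Lambda=\chi_\NN(X)$ acts purely on the base), and then $U=\polar(h)$ is local by \cref{lem:continuous functional calculus of normals is local} applied to $|h|^{-1}$.
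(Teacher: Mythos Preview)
Your proposal is correct and follows essentially the same route as the paper: the paper first establishes the strong deformation retract onto flat Hamiltonians via the straight-line path (its \cref{lem:flat Hamiltonians are strong deformation retraction}), then identifies flat non-chiral insulators with $\LamNTSAU_\FF$ via \cref{table:AZ non-chiral classes and symmetries of projections} and flat chiral insulators with $\calU^\calL_\FF$ via the off-diagonal block and \cref{table:AZ chiral classes and symmetries of unitaries}, invoking the same abstract classification theorems you cite. Your symmetry bookkeeping in the chiral case (diagonal vs.\ off-diagonal anti-unitaries according to whether $\Theta^2$ and $\Xi^2$ agree) is exactly the content of the paper's \cref{rem:conventional choices for Pi} and the proof of Lemma~5.6.
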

	We stress that while \cref{table:Kitaev} was derived using K-theory of C-star algebras, here we make no recourse to K-theory and rely entirely on homotopies of operators. In particular, the classification we derive is not relative and does not rely on extended degrees of freedom (for us $N$ in $\calH=\ell^2(\ZZ)\otimes\CC^N$ is fixed once and for all throughout the analysis). While these two points might not exactly appeal to specialists in K-theory, what is perhaps more interesting is the perspective on the mobility gap regime, see \cref{sec:mobility gap}.
	
	The rest of this section is dedicated to proving \cref{thm:main 1d insulators classification theorem} using the results presented in \cref{sec:local unitaries classification,sec:local SA unitaries classification}. In \cref{subsec:exp locality 1d chiral classification} we present a completely different approach which assumes a different mode of locality via \cref{eq:exp locality}.
	
	\paragraph{Examples of concrete physical models.} In order to connect with concrete literature in physics, we point out that
	\begin{enumerate}
		\item In class AIII, the Hamiltonian is of the form \eq{H = \begin{bmatrix}
				0 & S^\ast \\
				S & 0
		\end{bmatrix}} and the associated index is \eq{\findex\bbLambda \polar(S)\in\ZZ\,.} This index is widely known as the \emph{Zak phase} \cite{Zak1989}, and in the translation-invariant setting reduces to a winding number. A widely popular model which exhibits a non-trivial Zak phase is the SSH model \cite{SSH1979} of polyacetylene.
		\item In class D, the associated index is \eq{\findex_2 \bbLambda \sgn(H) \in \ZZ_2\,.} This index is widely known as the \emph{Majorana number} and a quintessential model which exhibits it is the \emph{Kitaev chain} \cite{Kitaev2001}.
	\end{enumerate}
	
	\subsection{Flat Hamiltonians}
	Sometimes in physics there is a distinction between classifying Hamiltonians and classifying \emph{ground states}, which, in the single-particle context correspond to the associated Fermi projections. As we will see now, for us this distinction does not exist since we are working in the spectral gap regime.
	
	We say a Hamiltonian $H$ is \emph{flat} iff $\sgn(H)=H$ where $\sgn$ is the sign function (its value at zero is of no consequence since our Hamiltonians have no spectrum there). We denote the space of all flat bulk-insulators by $\calI_B^\flat$. We note that if $H$ is flat then its Fermi projection $P$ is given by $ P = \frac{1}{2}\left(\Id-H\right)$ so flat Hamiltonians are algebraically related to their Fermi projections.
	\begin{lem}\label{lem:flat Hamiltonians are strong deformation retraction} Flat insulators are a strong deformation retract of insulators. This statement remains true if we add the bulk-insulator constraint as well as any of the ten AZ symmetry constraints: $\calI_{B,\Sigma}^\flat$ is a strong deformation retraction of $\calI_{B,\Sigma}$ for any $\Sigma\in\mathrm{AZ}$. 
	\end{lem}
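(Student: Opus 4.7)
The plan is to define the retraction $r(H) := \sgn(H)$ and the straight-line homotopy
\begin{equation*}
F:\calI_{B,\Sigma}\times[0,1]\to\calB\,,\qquad F(H,t) := (1-t)H + t\,\sgn(H)\,,
\end{equation*}
understood pointwise via the measurable functional calculus (well-defined since $0\notin\sigma(H)$). The three strong-deformation-retract axioms are immediate from the definition: $F(H,0)=H$; $F(H,1)=\sgn(H)$ is flat because $\sgn(\sgn(H))=\sgn(H)$; and if $H\in\calI_{B,\Sigma}^\flat$ then $\sgn(H)=H$, whence $F(H,t)=H$ for every $t$. So the actual content is verifying that $F$ lands in $\calI_{B,\Sigma}$ and is jointly norm continuous.

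The key observation is that $F(H,t) = g_t(H)$ for the real function $g_t(\lambda) := (1-t)\lambda + t\sgn(\lambda)$, which on $\RR\setminus\Set{0}$ has the same sign as $\lambda$ and in particular never vanishes on $\sigma(H)$. Hence each $F(H,t)$ is self-adjoint with spectrum bounded away from zero, and, crucially, $\sgn(F(H,t)) = \sgn(H)$ for every $t$, so the Fermi sign---and therefore $\Lambda$-non-triviality---is preserved tautologically along the entire path. Locality of $F(H,t)$ follows from $\sgn(H)\in\calL$ by \cref{lem:continuous functional calculus of normals is local} (since $\sgn$ is continuous on $\sigma(H)$, the latter being bounded away from $0$), together with the fact that $\calL$ is a $\ast$-algebra. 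For the AZ symmetries I would use that $\sgn$ is an odd real function to deduce
\begin{equation*}
S\,\sgn(H)\,S^{-1} = \sgn(SHS^{-1}) \qquad \text{for } S\in\Set{\Theta,\Xi,\Pi}\,,
\end{equation*}
so that the commutation relation $[\Theta,H]=0$ and the anti-commutation relations $\Set{\Xi,H}=0$, $\Set{\Pi,H}=0$ transfer verbatim to $\sgn(H)$, and then to $F(H,t)$ by linearity. The AZ class $\Sigma$ is thus preserved throughout the homotopy.

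The only real analytic point is the joint norm continuity of $F$, which reduces to that of $H\mapsto\sgn(H)$ on $\calI$. I would prove this via the polar formula $\sgn(H) = H|H|^{-1}$: the maps $H\mapsto H^2\mapsto |H|$ are norm continuous by the standard continuity of the square root on the positive cone, and inversion is continuous on operators whose spectral lower bound is uniformly positive on a small norm-ball about any fixed invertible $H$. I do not foresee a genuine obstacle beyond this routine verification; the lemma is in essence the algebraic observation that the straight-line interpolation to flatness conserves the Fermi sign (and hence all topological content), combined with the norm continuity of the continuous functional calculus.
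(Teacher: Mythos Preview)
Your proposal is correct and follows essentially the same route as the paper: the retraction is $\sgn$, the homotopy is the straight line $F(H,t)=(1-t)H+t\sgn(H)$, and the preservation of the gap, locality, the AZ constraint (via oddness of $\sgn$), and the bulk condition (via $\sgn(F(H,t))=\sgn(H)$) are argued exactly as in the paper. The only cosmetic difference is that you establish norm continuity of $H\mapsto\sgn(H)$ through the polar formula $H|H|^{-1}$, whereas the paper uses the Riesz contour-integral representation of the spectral projection; both are standard and equivalent for this purpose.
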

	\begin{proof}
		The desired retraction \emph{is} in fact $\sgn$, which (via the functional calculus) may be considered a map $\calL\to\calL$ (see \cref{lem:continuous functional calculus of normals is local}). 
		
		Hence, given $H\in\calI_{B,\Sigma}$, one has \eq{ \sgn(H) = \Id + \frac{1}{\pi \ii}\oint_\Gamma R(z)\dif{z}} where $\Gamma$ is any CCW path encircling $\sigma(H)\cap(-\infty,0)$ and $R(z)\equiv(H-z\Id)^{-1}$. From this formula and the resolvent identity norm continuity easily follows. Since $\sgn\circ\sgn=\sgn$, this is indeed a retraction; note that since $\sgn$ is odd, $\sgn(H)$ would obey the same AZ constraint that $H$ would.
		
		Next, define $F:\calI_{B,\Sigma}\times[0,1]\to\calI_{B,\Sigma}$ via \eq{F(H,t) := (1-t)H+t\sgn(H)\qquad(H\in\calI_{B,\Sigma},t\in[0,1])\,.} It is well-defined since \eq{\chi_{(-\infty,0)}(F(t,H)) = \chi_{(-\infty,0)}(H)\qquad(t\in[0,1])} and $F(H,0)=H$, $F(H,1)=\sgn(H)$ and $F(\sgn(H),1)=\sgn(H)$. Since the bulk-insulator condition is defined in terms of the flat Hamiltonian and not the Hamiltonian itself, $F(t,H)$ is a bulk-insulator for all $t$.
	\end{proof}
	
	Clearly the path-connected components of a space and those of its retract are the same, and hence in proving \cref{thm:main 1d insulators classification theorem}, we could just as well work with $\calI^\flat_{B,\Sigma}$. This last fact makes the analysis reduce to the study of $\Lambda$-non-trivial equivariant self-adjoint unitaries.
	
	\subsection{Classification of the non-chiral classes} 
	\begin{table}
		\begin{center}
			\begin{tabular}{|c|c|c|}
				\hline
				$\FF$ & AZ Class 
				& Topological invariant\\\hline\hline
				$\CC$ & A & - \\\hline
				$\RR$ & AI & -\\\hline
				$\HH$ & AII & - \\\hline
				$\ii\RR$ & D & $\findex_2 \bbLambda \sgn(H)$\\\hline
				$\ii\HH$ & C & -\\
				\hline
			\end{tabular}
		\end{center}
		\caption{Correspondence between the operators defined in \cref{eq:real and quaternionic operators,eq:pure imaginary real or quaternionic operators} and the non-chiral AZ symmetry classes classes, and formulas for the topological invariants.}
		\label{table:AZ non-chiral classes and symmetries of projections}
	\end{table}
	The non-chiral classes are those within the AZ classes where $\Pi$ is absent: classes A,AI,AII, C and D. In this case $\calI_{B,\Sigma}^\flat$ is the same space as $\LamNTSAU_\FF$ with the appropriate correspondence between $\Sigma$ and $\FF$ as depicted in \cref{table:AZ non-chiral classes and symmetries of projections}. When $\Theta$ squares to $\pm\Id$, we have a real (resp. quaternionic) structure and that corresponds to the anti-unitary operator $C$ (resp. $J$) of \cref{eq:real and quaternionic operators}. On the other hand, the presence of a particle-hole symmetry corresponds rather to $\sgn(H)$ belonging to the purely-imaginary real or quaternionic sets of operators in \cref{eq:pure imaginary real or quaternionic operators}.
	
	We find that for the non-chiral classes our theorem is complete via \cref{sec:local SA unitaries classification} and in particular the results there which are summarized in \cref{table:equivariant Lambda-non-trivial SAUs}.
	
	\subsection{Classification of the chiral classes}
	Now we assume that $\Pi$ is present, i.e., that we are in such AZ classes where insulators obey $\{H,\Pi\}=0$. Thanks to \cref{ass:chirality has same dimensions for its eigenspaces}, it must be that $N=2W$ for some $W\in\NN_{\geq1}$, and so the Hilbert space breaks into a direct sum \eq{
		\calH = \left(\ell^2(\ZZ)\otimes\CC^W\right)\oplus\left(\ell^2(\ZZ)\otimes\CC^W\right)\,.
	} We formally refer to the left copy as ``positive chirality'' and the other as ``negative chirality'', and use $\calH_{\pm}$ for these two. Since they are isomorphic in a local way, we will actually drop the distinction between them. By a local (at the level of $\CC^N$) unitary transformation on $\calH$ we may without loss of generality assume that $\Pi$ is diagonal, i.e., acting as $\Id_W\oplus(-\Id_W)$ on each local copy of $\CC^W\oplus\CC^W$. Hence it must be that insulators which are chiral have the form \eq{ H = \begin{bmatrix}
			0 & S^\ast \\
			S & 0
	\end{bmatrix}} for some $S\in\calB(\ell^2(\ZZ)\otimes\CC^W)$ which is \emph{not necessarily self-adjoint} (note how in writing $S$ we dropped the distinction between the positive and negative chiralities). Moreover, in this chiral grading, $\Lambda$ is diagonal.
	
	Clearly, the spectral gap condition on $H$ translates to $S$ being invertible since $|H|^2=|S|^2\oplus|S^\ast|^2$, and $H$ is local iff $S$ is.
		%
	Moreover, via \cite[Lemma 2]{Graf_Shapiro_2018_1D_Chiral_BEC}, \eql{\label{eq:correspondence between chiral flat hamiltonian and off-diagonal sub-block} \sgn(H) = \begin{bmatrix}
			0 & \polar(S)^\ast \\
			\polar(S) & 0
	\end{bmatrix} } where $\polar(S)\equiv S|S|^{-1}$ is the polar part of $S$, which is in our setting unitary since $S$ is invertible.
	
	Finally, it is interesting to note that 
	\begin{lem}\label{lem:chiral insulators are automatically bulk non trivial}
		If $H\in\calI$ has chiral symmetry then it is a bulk-insulator automatically. Thus, the bulk-insulator constraint is vacuous within the chiral classes.
	\end{lem}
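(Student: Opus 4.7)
The plan is to construct an explicit $\Lambda$-hyperlocal self-adjoint unitary $V$ with $V - \sgn(H) \in \calK$ whose restriction to each half of $\calH$ is a non-trivial SAU. This is exactly what is needed to conclude $\sgn(H) \in \LamNTSAU$ per \cref{def:Lambda-non-trivial SAUs}, and hence $H \in \calI_B$.

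Set $U := \polar(S)$, a local unitary on $\ell^2(\ZZ) \otimes \CC^W$. By \cref{eq:correspondence between chiral flat hamiltonian and off-diagonal sub-block}, $\sgn(H)$ is off-diagonal in the chiral grading $\calH = \calH_+ \oplus \calH_-$ with entries $U^\ast, U$. Since the chirality $\Pi$ is strictly local, it commutes with $\Lambda$, so we may refine to the four-fold grading indexed by chirality $\pm$ and by $\Lambda^{(\perp)}$. Write $U = [U_{ij}]_{i,j \in \Set{L,R}}$ as a $2\times 2$ matrix with respect to the $\Lambda$-decompositions of $\calH_+$ and $\calH_-$. Locality of $U$ forces $U_{LR}, U_{RL}$ to be compact, and the diagonal entries $U_{LL}, U_{RR}$ to be essentially unitary (as in the proof of \cref{lem:factorization for local unitaries}). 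Reordering so that $\Lambda$ is the outer grading, $\sgn(H)$ takes the form $\begin{pmatrix} T_L & C \\ C^\ast & T_R \end{pmatrix}$, where $T_i := \begin{pmatrix} 0 & U_{ii}^\ast \\ U_{ii} & 0 \end{pmatrix}$ acts on $\calH_{+,i} \oplus \calH_{-,i}$ and $C$ is compact.

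Each $T_i$ is self-adjoint with $T_i^2 - \Id \in \calK$, i.e., it is essentially a self-adjoint unitary. The crux is to verify that $T_i$ is essentially non-trivial, so that an SAU version of \cref{lem:stability of essentially non trivial projections} (available via \cref{rem:connection between projections and self-adjoint unitaries}) produces a genuine non-trivial SAU $V_i$ with $V_i - T_i \in \calK$. The inclusion $\sigma_{\mathrm{ess}}(T_i) \subseteq \Set{\pm 1}$ is immediate from $T_i^2 = \Id$ modulo compact via the continuous functional calculus. For the reverse inclusion, observe that $T_i$ anti-commutes with the chirality $\Pi$ restricted to $\calH_i$ (since $T_i$ is off-diagonal in the chiral grading), which forces $\sigma_{\mathrm{ess}}(T_i)$ to be symmetric about $0$. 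Since $T_i$ is not compact (else $T_i^2$ would be, contradicting $T_i^2 - \Id \in \calK$), its essential spectrum is non-empty and hence equals $\Set{-1, +1}$. I expect this anti-commutation argument to be the main subtlety of the proof.

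Setting $V := V_L \oplus V_R$ in the $\Lambda$-grading, one has $[V, \Lambda] = 0$ by construction; each restriction $\Lambda^{(\perp)} V \Lambda^{(\perp)}$ is a non-trivial SAU on its respective half by choice of $V_i$; and $V - \sgn(H) = (V_L - T_L) \oplus (V_R - T_R) - C \in \calK$. Therefore $\sgn(H) \in \LamNTSAU$ and $H \in \calI_B$, as required.
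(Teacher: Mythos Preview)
Your proof is correct and follows essentially the same approach as the paper's. The paper works directly with the $\Lambda$-diagonal blocks $X,Y$ of $\sgn(H)$ and uses $\{X,\Pi\}=\{Y,\Pi\}=0$ to force $\sigma_{\mathrm{ess}}=\{-1,+1\}$; your $T_L,T_R$ are exactly these $X,Y$, just written out explicitly via the chiral off-diagonal form of $\sgn(H)$, and the conclusion via \cref{lem:stability of essentially non trivial projections} is identical.
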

	\begin{proof}
		Using the same $\im(\Lambda)^\perp\oplus\im(\Lambda)$ decomposition of the Hilbert space as in \cref{eq:space decomposition of SA unitary}, we write \eq{\sgn(H) = \begin{bmatrix}X & A \\ A^\ast & Y\end{bmatrix}\,.} We caution the reader that only within this proof and unlike the rest of this section, $X,Y$ are blocks in $\sgn(H)$ and \emph{not} position operators. 
		
		Since $[\Pi,\Lambda]=0$, we conclude that $\Set{X,\Pi}=\Set{Y,\Pi}=0$. Thus, both $X$ and $Y$ have spectra which is symmetric about zero and contained in $[-1,1]$, and, is discrete on $(-1,1)$. But since, by assumption, both $\im(\Lambda)^\perp$ and $\im(\Lambda)$ are infinite-dimensional, it cannot be that either $X$ or $Y$ have only discrete spectrum. As a result, we conclude that both $X$ and $Y$ have both $\pm1$ in their essential spectrum. We conclude by \cref{lem:stability of essentially non trivial projections}, which implies that $X,Y$ are essentially non-trivial self-adjoint unitaries.
		
	\end{proof}
	
	\begin{table}
		\begin{center}
			\begin{tabular}{|c|c|c|}
				\hline
				$\FF$ & AZ Class & Topological invariant \\\hline\hline
				$\CC$ & AIII & $\findex\bbLambda\polar(S)$ \\\hline
				$\RR$ & BDI & $\findex\bbLambda\polar(S)$ \\\hline
				$\HH$ & CII&$\findex\bbLambda\polar(S)\in2\ZZ$ \\\hline
				$\star\RR$ & CI&-\\\hline
				$\star\HH$ & DIII&$\findex_2\bbLambda\polar(S)$ \\\hline
			\end{tabular}
		\end{center}
		\caption{Correspondence between the operators defined in \cref{eq:real and quaternionic operators,eq:star real and quaternionic operators} and the chiral AZ symmetry classes, and formulas for the topological invariants. Here $S$ stands for the off-diagonal block within $H$ in the presence of chiral symmetry, and $\polar$ is its polar part.}
		\label{table:AZ chiral classes and symmetries of unitaries}
	\end{table}

	\begin{lem}
		For $\Sigma$ in the chiral classes, the space $\calI_{B,\Sigma}^\flat$ is homeomorphic to $\calU_\FF^\calL$ with the correspondence between $\Sigma$ and $\FF$ as depicted in \cref{table:AZ chiral classes and symmetries of unitaries}. 
	\end{lem}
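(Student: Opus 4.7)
The plan is to exhibit the explicit homeomorphism $\Psi:\calI_{B,\Sigma}^\flat\to\calU_\FF^\calL$ that sends a flat chiral Hamiltonian $H=\begin{bmatrix}0 & S^\ast \\ S & 0\end{bmatrix}$ (in the chirality grading in which $\Pi=\Id_W\oplus(-\Id_W)$) to its lower off-diagonal block $S$, with inverse $\Psi^{-1}(S):=\begin{bmatrix}0 & S^\ast \\ S & 0\end{bmatrix}$. The first step is to verify well-definedness on each level: flatness of $H$ combined with \cref{eq:correspondence between chiral flat hamiltonian and off-diagonal sub-block} forces $S=\polar(S)$, hence $S\in\calU$; since $\Pi$ is strictly local, $\Lambda$ is diagonal in the chirality grading, so $[\Lambda,H]\in\calK$ is equivalent to $[\Lambda,S]\in\calK$, yielding $S\in\calL$; and by \cref{lem:chiral insulators are automatically bulk non trivial} the bulk-non-triviality condition imposes nothing additional in the chiral setting.

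The second step is to translate each AZ constraint on $H$ into the corresponding equivariance of $S$. In class AIII no constraint arises, so $\FF=\CC$. In classes BDI and CII, per \cref{rem:conventional choices for Pi} we have $[\Theta,\Pi]=0$, so $\Theta$ is diagonal in the chiral grading; after the identification of $\calH_+$ with $\calH_-$ (which is local by \cref{ass:chirality has same dimensions for its eigenspaces}), the relation $[H,\Theta]=0$ becomes $[S,F]=0$ with $F=C$ (class BDI, $\Theta^2=+\Id$) respectively $F=J$ (class CII, $\Theta^2=-\Id$), giving $\FF=\RR,\HH$. In classes CI and DIII the convention of \cref{rem:conventional choices for Pi} forces $\{\Theta,\Pi\}=0$, so $\Theta$ is off-diagonal; a short block computation of $\Theta H=H\Theta$ then yields, after identifying the two chirality sectors, $SF=FS^\ast$ with $F=C$ (CI) and $F=J$ (DIII), i.e.\ $S\in\calU_{\star\RR}^\calL$ and $S\in\calU_{\star\HH}^\calL$ respectively. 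In all cases, one checks in parallel that the particle-hole constraint $\{H,\Xi\}=0$ produces no new condition on $S$ beyond what $\Theta$ and $\Pi$ already give (since $\Xi=\Theta\Pi$ up to a unit scalar), consistent with \cref{table:AZ chiral classes and symmetries of unitaries}.

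For the third step, $\Psi$ is a compression to a fixed sub-block, so $\norm{\Psi(H)-\Psi(H')}\leq\norm{H-H'}$, and $\Psi^{-1}$ satisfies $\norm{\Psi^{-1}(S)-\Psi^{-1}(S')}=\norm{S-S'}$ because the block anti-diagonal embedding preserves operator norm; both directions are therefore Lipschitz continuous with constant one, and $\Psi\circ\Psi^{-1}=\mathrm{id}$, $\Psi^{-1}\circ\Psi=\mathrm{id}$ by construction, establishing the homeomorphism.

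The main obstacle is not analytic but bookkeeping: one must carefully track the convention of \cref{rem:conventional choices for Pi} through the block decomposition in classes CI and DIII, and verify that the anti-unitary $\Theta$ (originally between the two chirality sectors) descends, after the local identification afforded by \cref{ass:chirality has same dimensions for its eigenspaces}, to a genuine $C$- or $J$-structure on a single copy of $\ell^2(\ZZ)\otimes\CC^W$ with respect to which $S$ satisfies the $\star\FF$ relation as in \cref{eq:star real and quaternionic operators}. Everything else is direct block algebra.
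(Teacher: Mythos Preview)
Your proposal is correct and follows essentially the same approach as the paper's proof: both define the homeomorphism as extraction of the off-diagonal block $S$ from the flat chiral Hamiltonian, then translate the $\Theta$-constraint through the chiral grading using \cref{rem:conventional choices for Pi} to distinguish the diagonal case ($[\Theta,\Pi]=0$, yielding $\RR,\HH$) from the off-diagonal case ($\{\Theta,\Pi\}=0$, yielding $\star\RR,\star\HH$). Your treatment is slightly more explicit about Lipschitz continuity and about why the $\Xi$-constraint is redundant once $\Theta$ and $\Pi$ are accounted for, but the substance is identical.
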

	\begin{proof}
		Most of the necessary statements for the proof have just appeared above so we really only need to focus on the correspondence between the physical symmetry classes of $\Theta$ and $\Xi$ versus the abstract real and quaternionic operator classes defined in \cref{sec:locality and symmetry}.
		
		Clearly for $\Sigma=\mathrm{AIII}$ the mapping given by \eql{\label{eq:mapping insulator to unitary}\calI_{B,\mathrm{AIII}}^\flat\ni\begin{bmatrix}
				0 & U^\ast \\ U & 0 
			\end{bmatrix}\mapsto U\in\calU^\calL_\CC} is the required homeomorphism, which is indeed a homeomorphism: well-definedness and bijectivity follow by the foregoing discussion and continuity is clear.
		
		We proceed with the other four choices of $\Sigma$. By \cref{rem:conventional choices for Pi} and \cref{ass:chirality has same dimensions for its eigenspaces}, we are left only to check what $\Theta$ squares to and whether it commutes or anti-commutes with $\Pi$: four possibilities. As was explained in \cref{rem:conventional choices for Pi}, when, in \cref{table:Kitaev}, $\Xi^2$ and $\Theta^2$ disagree we should take $\Set{\Theta,\Pi}=0$ and when they agree we take $[\Theta,\Pi]=0$. 
		
		Let us write the time-reversal symmetry operator in the chiral grading as \eq{\Theta=\begin{bmatrix}\Theta_{++} & \Theta_{+-} \\ \Theta_{-+} & \Theta_{--}\end{bmatrix}\,.} 
		\begin{enumerate}
			\item When $[\Pi,\Theta]=0$ (classes BDI and CII) we have $\Theta_{+-}=\Theta_{-+}=0$. In this case, we define $F:=\Theta_{++}=\Theta_{--}$ (they are the same by \cref{ass:chirality has same dimensions for its eigenspaces}) and we find that under the mapping \cref{eq:mapping insulator to unitary} the condition $[\sgn(H),\Theta]=0$ implies $UF=FU$, i.e., $U$ is either a real or quaternionic operator based on the value of $F^2$: for $\Theta^2=\Id$ (class BDI) we get $U\in\calU^\calL_{\RR}$ and for $\Theta^2=-\Id$ (class CII) we get $U\in\calU^\calL_{\HH}$.
			\item When $\Set{\Pi,\Theta}=0$ (classes DIII and CI) we have $\Theta_{++}=\Theta_{--}=0$. \cref{ass:chirality has same dimensions for its eigenspaces} allows us further to avoid notation overhead since $\Theta_{+-}=-\Theta_{-+}^\ast=:F$. In this case, however, $[\sgn(H),\Theta]=0$ implies $UF=FU^\ast$, which is precisely the $\star$-real or $\star$-quaternionic condition, based on $F^2=\pm\Id$, which is equal to the value of $\Theta^2$. Hence we find that for $\Theta^2=\Id$ (class CI) $U\in\calU^\calL_{\star\RR}$ and for $\Theta^2=-\Id$ (class DIII), $U\in\calU^\calL_{\star\HH}$.
		\end{enumerate}
	\end{proof}
	
	Now as a result of the statements in \cref{sec:local unitaries classification}, the proof of \cref{thm:main 1d insulators classification theorem} is complete.
	
	\subsection{Classification of exponentially local chiral insulators}\label{subsec:exp locality 1d chiral classification}
	Our theory so far has involved the one-dimensional locality condition \cref{def:1d locality}. This condition may appear somewhat contrived from the physical stand point, in the sense that all it asks is that Hamiltonians $H$ obey $\Lambda H \Lambda^\perp \in \calK$. This condition may be criticized (and we would agree, rightly so) that too much of the physics has been washed away.
	
	In this subsection we address this issue as follows: we consider one-dimensional operators with exponential locality as in \cref{eq:exp locality}, \emph{but only in class AIII} for simplicity. Indeed, this type of endeavor is somewhat perpendicular to the activity of topological classification, and is more related to a study of regularity and approximation. In the commutative setting this would be tantamount to a type of Whitney approximation theorem saying that for any two smooth manifolds $X,Y$ with $\partial Y = \varnothing$, any continuous map $X\to Y$ is continuously homotopic to a smooth map $X\to Y$. For that reason we restrict ourselves here merely to one non-trivial symmetry class rather than repeat the analysis for all the AZ classes.
	
	Hence, let us define 
	\begin{defn}[exponentially local insulators] An exponentially local insulator is a self-adjoint Hamiltonian $H=H^\ast\in\calB$ which is spectrally gapped (at zero) and for which exponential locality \cref{eq:exp locality} holds with any rate:\eq{\inf_{x,y\in\ZZ^d}-\frac{1}{\norm{x-y}}\log\left(\norm{H_{xy}}\right)>0\,.}
		We denote this space by $\calI_{\mathrm{exp}}$ and furnish it with the subspace topology from the operator norm topology. We note that at least for the chiral classes there is no need to speak of the bulk-insulator condition thanks to \cref{lem:chiral insulators are automatically bulk non trivial} (recall exponentially local operators are $\Lambda$-local).
	\end{defn}
	It is a fact that this space is strictly smaller than the one obtained with \cref{def:1d locality}. Indeed, an explicit example may be constructed \cite[Example 3.3.10]{Geib2022}.
	
	The classification result for exponentially local chiral operators is thus:
	\begin{thm}[AIII $d=1$ exp. local classification] In $d=1$, at fixed $N\in\NN$, the space $\calI_{\mathrm{exp,AIII}}$ has $\ZZ$ path components labeled by the norm continuous map \eq{\calI_{\mathrm{exp,AIII}}\ni \begin{bmatrix}
				0 & S^\ast\\S&0
			\end{bmatrix} \mapsto \findex \bbLambda S \in \ZZ\,. } 
	\end{thm}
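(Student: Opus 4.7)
The plan is to follow the three-step structure---norm continuity, surjectivity, and injectivity---of the proof of \cref{thm:classification of RCH local unitaries}, while verifying at each step that exponential locality is preserved, not merely the compact-commutator condition of \cref{def:1d locality}. Continuity is immediate: exponentially local operators embed into the larger $\calL$ by \cite[Lemma 2(b)]{Graf_Shapiro_2018_1D_Chiral_BEC}, where $\findex \bbLambda$ is already norm continuous. Surjectivity uses the same unit shift $R$ as before, which is finite-range---hence trivially exponentially local---and satisfies $\findex \bbLambda R^k = -k$ for all $k \in \ZZ$.

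For injectivity, by \cref{lem:log law for index_Lam} it suffices to exponentially-locally deform any $S$ with $\findex \bbLambda S = 0$ to $\Id$. First I would flatten $S$ via $t \mapsto S |S|^{-t}$, which preserves exponential locality because $|S|$ has a spectral gap (since $S$ is invertible) and $\lambda \mapsto \lambda^{-t}$ is holomorphic on a complex neighborhood of $\sigma(|S|)$; a Combes--Thomas estimate for resolvents combined with the Cauchy integral formula gives exponential decay of $|S|^{-t}$ uniform in $t \in [0,1]$. Next I would approximate the resulting unitary by a finite-range unitary $\hat S$ lying in the same path component, via truncation of matrix entries outside a sufficiently large window followed by polar rectification; exponential decay of the original controls both the truncation error and the preservation of invertibility along the interpolation.

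With the finite-range $\hat S$ in hand, I would reproduce the factorization of \cref{lem:factorization for local unitaries}: because the failure of $\hat S_{LL}$ and $\hat S_{RR}$ to be unitary is supported on a finite-dimensional subspace near the cut, the extensions $B_L, B_R$ can be chosen explicitly with finite range. Consequently $A = \hat S B^\ast$ is finite-range and differs from $\Id$ only within a bounded window, so $\sigma(A)$ has a genuine gap on $\bS^1$ and the holomorphic logarithm delivers an exponentially local path $A \rightsquigarrow \Id$ in the spirit of \cref{lem:deform A to Id obeying RCH symmetry}.

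The main obstacle is the final step: deforming $B = B_L \oplus B_R$ to $\Id$ through exponentially local unitaries on each half-line separately. Kuiper's path \cref{eq:Kuiper's path} uses the bounded measurable logarithm, which destroys exponential decay whenever $\sigma(B_L) = \bS^1$. My preferred remedy is a scale-by-scale telescoping: on each half-line, iteratively split off a finite-range piece near a new artificial cut via a second factorization, deform the resulting spectrally-gapped factor by a holomorphic logarithm, and pass to a smaller half-line; the composition of these sub-paths converges in the exponentially local norm and yields the desired homotopy. This inductive construction is the technical heart of the proof and, together with the regularity overhead of tracking decay rates through every step, is exactly why this subsection has been separated from the main line of the paper.
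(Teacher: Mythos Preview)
Your approach diverges from the paper's in a substantive way, and the divergence occurs precisely at the point you flag as ``the main obstacle''. The paper does \emph{not} attempt to adapt the factorization of \cref{lem:factorization for local unitaries} to the exponentially local setting. Instead it abandons unitaries altogether and works in the larger space of exponentially local \emph{invertibles}: after truncating $S$ to finite range, it redimerizes twice so that the problem becomes one about operators of the nearest-neighbor form $A+BR$ on $\ell^2(\ZZ)\otimes\CC^K$, and then invokes \cref{lem:A+BR homotopies}. That lemma uses the Stummel idempotents of \cref{thm:stummel} to produce a diagonal-in-space grading with respect to which $A$ and $BR$ decouple, allowing a deformation to a canonical $D^\perp+DR$ that stays within the finite-range class throughout. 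No half-line Kuiper is ever needed.

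Your route, by contrast, leads directly into the half-line Kuiper problem, and the ``scale-by-scale telescoping'' you propose to resolve it has a genuine gap. The topology on $\calI_{\mathrm{exp,AIII}}$ is the subspace topology from the operator norm, not an ``exponentially local norm'' (which is in any case not a single norm but a family indexed by decay rate). Your inductive scheme produces, at stage $k$, an operator that agrees with $\Id$ on $\ell^2(\Set{1,\dots,n_k})$ but whose tail on $\ell^2(\Set{n_k+1,\dots})$ is still a generic finite-range unitary with no reason to be close to $\Id$ in norm. The sequence of endpoints therefore need not converge in operator norm to $\Id$, and concatenating infinitely many sub-paths each of bounded-below length does not yield a continuous path $[0,1]\to\calI_{\mathrm{exp,AIII}}$. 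This is exactly the difficulty the paper sidesteps by never leaving the finite-range world: the Stummel grading is itself diagonal in position (a consequence of the Cauchy identity computation in the proof of \cref{lem:A+BR homotopies}), so every intermediate operator along the paper's path is literally nearest-neighbor.
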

	We note that in this theorem, it is easier for us to deform within the space of exponentially local \emph{invertible} operators rather than exponentially local \emph{unitary} operators as in the preceding proof. This is no major disadvantage though, since our ultimate goal is $\calI_{\mathrm{exp,AIII}}$ rather than $\calI_{\mathrm{exp,AIII}}^\flat$.
	\begin{proof}
		Similarly to the proof of \cref{thm:classification of RCH local unitaries}, the continuity, surjectivity and logarithmic law for $\findex\circ\bbLambda$ are established, so we are really only concerned with injectivity of the map at the level of the path-components.
		
		Hence, let $S\in\calB(\ell^2(\ZZ)\otimes\CC^W)$ be invertible and have zero index, and our goal is to continuously connect it to $\Id$ within the space of exponentially local invertibles.
		
		Let $R$ be the bilateral right shift operator on $\ell^2(\ZZ)$. Then clearly we may write \eq{ S = \sum_{l\in\ZZ} S_l R^l} where for each $l\in\ZZ$, $S_l$ is a diagonal operator given via its matrix elements \eq{ (S_l)_{xy} = \delta_{xy} S_{x,x-l}\qquad(x,y\in\ZZ)\,.} The series converges in operator norm thanks to exponential locality. Hence, given any $\ve>0$, there is some $L_\ve>0$ such that the $L_\ve$ hopping operator $S^{L_\ve} := \sum_{|l|\leq L_\ve}S_lR^l$ is $\ve$-close to $S$: \eq{ \norm{S - S^{L_\ve}}<\ve\,.} Moreover, the straight line homotopy \eq{[0,1]\ni t \mapsto (1-t) S + t S^{L_\ve}} is clearly norm continuous, and passes through locals. It passes through invertibles too if $\ve$ is chosen sufficiently small since these are open. This shows that without loss of generality we may assume that $S$ is of finite hopping. 
		
		Next, for any finite hopping operator, there is an integer $\widetilde{W}$ (in particular, $\widetilde{W}=W\times L_\ve$) and a local unitary transformation \eq{U:\ell^2(\ZZ)\otimes\CC^{\widetilde{W}}\to\ell^2(\ZZ)\otimes\CC^{W}} which does not affect the $\Lambda$-index (this is ``redimerization'') such that \eq{ U^\ast S^{L_\ve} U = A+BR+CR^\ast} where $\widetilde{A},\widetilde{B},\widetilde{C}$ are diagonal operators (so, sequences $\ZZ\to\CC^{\widetilde{W}}$). Indeed this map is \eq{U^\ast(\psi) \equiv (\dots,\begin{bmatrix}\psi_{0}\\\dots\\\psi_{L_\ve-1}\end{bmatrix},\begin{bmatrix}\psi_{L_\ve}\\\dots\\\psi_{2L_\ve-1}\end{bmatrix},\begin{bmatrix}\psi_{2L_\ve}\\\dots\\\psi_{3L_\ve-1}\end{bmatrix},\dots)\,.}
		
		Let us factor out a left shift operator \eq{A+BR+CR^\ast = (AR + BR^2 + C)R^\ast} and consider another redimerization $V:\ell^2(\ZZ)\otimes \CC^{2\ti{W}}\to \ell^2(\ZZ)\otimes \CC^{\ti{W}}$ so that we can write \eq{V^* (AR + BR^2 + C)V = \widehat{A}+\widehat{B}R} where $\widehat{A},\widehat{B}:\ZZ\to \CC^{2\ti{W}}$ are diagonal operators (the transition from $A+BR+CR^\ast$ to $A+BR$ was described in \cite[Example 1]{Graf_Shapiro_2018_1D_Chiral_BEC}). In particular \eq{\findex \bbLambda (\widehat{A}+\widehat{B}R)=-\ti{W}} due to the factoring-out of a left shift operator which has index $+\ti W$. Using \cref{lem:A+BR homotopies} below, we can deform $\widehat{A}+\widehat{B}R$ to $D^\perp+DR$ where $D:\ZZ\to\CC^{2\ti{W}}$ is diagonal with \eq{D_n=\begin{bmatrix}0_{\ti{W}} & 0 \\ 0 & \Id_{\ti{W}}\end{bmatrix}\qquad(n\in\ZZ)} and the deformation is within the space of invertible operators of the same nearest-neighbor form as $\widehat{A}+\widehat{B}R$. In conclusion, we have the path \eq{U^\ast S^{L_\ve} U = A+BR+CR^\ast\rightsquigarrow V(D^\perp +DR)V^*R^*} within the space of exponentially local invertibles. This last operator, however, is readily seen to be equal to \eq{\dots\oplus\begin{bmatrix}
				0 & \Id_{\ti W} \\ \Id_{\ti W} & 0
			\end{bmatrix}\oplus\begin{bmatrix}
				0 & \Id_{\ti W} \\ \Id_{\ti W} & 0
			\end{bmatrix}\oplus \dots\,.} Each such $2\times 2$ block may be deformed (by a \emph{local} Kuiper) to $\Id_{2\ti W}$ within the space $\calU(2\ti W)$ and hence the whole operator to $\Id$ within the infinite $\calU$ (so that, in particular, this deformation passes within invertibles and exponentially locals). We have thus exhibited a path $U^\ast S^{L_\ve} U\rightsquigarrow\Id_{\ell^2(\ZZ)\otimes\CC^{\ti W}}$. Conjugating that path with $U$ and composing with the straight line path from $S$ to $S^{L_\ve}$ we obtain a path in the original Hilbert space $S\rightsquigarrow \Id_{\ell^2(\ZZ)\otimes\CC^{W}}$.
	\end{proof}

	\begin{lem}[$A+BR$ homotopies]\label{lem:A+BR homotopies}
		On the space $\ell^2(\ZZ)\otimes \CC^K$, let $R$ be the bilateral right-shift operator on $\ell^2(\ZZ)$ and $A,B:\ZZ\to \Mat_K(\CC)$ be diagonal operators.  For any $k\in\ZZ$, the space of invertible operators on $\ell^2(\ZZ)\otimes \CC^K$ of the form $A+BR$ having $\findex_\Lambda(A+BR)=k$ is path-connected.
	\end{lem}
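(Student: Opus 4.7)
The plan is to continuously connect any two invertible operators $S = A+BR$ and $\widetilde{S} = \widetilde A + \widetilde B R$ on $\ell^2(\ZZ)\otimes\CC^K$ sharing $\findex_\Lambda(S) = \findex_\Lambda(\widetilde S) = k$, within the space of invertibles of the same nearest-neighbor form. My strategy is to deform each to a canonical translation-invariant representative depending only on $k$.

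For $k$ with $-K \leq k \leq 0$, I propose the canonical form $S_k := D^\perp + DR$, where $D$ is the constant (position-independent) rank-$|k|$ projection onto the last $|k|$ coordinates of $\CC^K$. A direct computation shows $S_k$ is invertible (it acts as the identity on $\im D^\perp$ and as a shift on $\im D$), is manifestly of the required form, and satisfies $\findex_\Lambda S_k = -\rank(D) = k$. Indices outside this range either yield an empty stratum of the lemma or are accessed by a symmetric construction involving the adjoint.

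To deform a general invertible $S = A+BR$ to $S_k$, I rely on three ingredients. First, a gauge freedom: for any continuous path $t\mapsto G_t$ of position-diagonal invertibles on $\CC^K$, the conjugation $S\mapsto G_t^{-1} S G_t$ preserves the nearest-neighbor form, since
\begin{equation*}
G^{-1}(A+BR)G \;=\; (G^{-1}AG) \;+\; \bigl(G^{-1} B\, (RGR^{-1})\bigr) R,
\end{equation*}
and $RGR^{-1}$ is itself position-diagonal (a shifted copy of $G$). Since $\mathrm{GL}_K(\CC)$ is path-connected and Kuiper's theorem (\cref{thm:Kuiper}) handles the unitary case at the fibered level, such gauge paths are plentiful. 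Second, straight-line homotopies $(A_t, B_t) := (1-t)(A,B) + t(A',B')$, which remain in the open set of invertible $A'+B'R$ whenever $\norm{(A',B')-(A,B)}$ is small. Third, using the compact commutator $[\Lambda, S]\in\calK$ to extract asymptotic values $(A_\pm, B_\pm)$ as $n\to\pm\infty$ and reduce inward to a translation-invariant profile; the index $\findex_\Lambda$ is then read off as the winding number of $z\mapsto\det(A_\infty + B_\infty z)$ on $\bS^1$, pinning down the rank of $D$.

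The main obstacle will be maintaining global invertibility throughout the deformation. Invertibility of $A+BR$ is inherently global, coupling all sites via $R$, whereas the gauge and straight-line deformations act essentially site-wise. To ensure no intermediate $A_t + B_tR$ leaves the invertible locus, I expect to invoke a resolvent stability (Combes--Thomas-type) estimate controlling $(A+BR)^{-1}$ under local perturbations, combined with a partition-of-unity argument in position to patch local deformations into a globally continuous path. The stability of $\findex_\Lambda$ under small-norm and compact perturbations then guarantees the constructed path ends at the canonical representative $S_k$ dictated by $k$.
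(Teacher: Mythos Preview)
Your proposal has a genuine gap at the third ingredient. You write that the compact commutator $[\Lambda,S]\in\calK$ lets you ``extract asymptotic values $(A_\pm,B_\pm)$ as $n\to\pm\infty$'' and then reduce to a translation-invariant profile whose determinant winding gives the index. But for nearest-neighbor operators $S=A+BR$ with $A,B$ position-diagonal, the commutator $[\Lambda,S]=B[\Lambda,R]$ is \emph{automatically} finite-rank regardless of what the sequences $n\mapsto A_n,B_n$ do; it carries no information about their asymptotics. There is no reason whatsoever for $A_n$ or $B_n$ to converge as $n\to\pm\infty$, so there are no asymptotic values to extract, no translation-invariant limit to deform to, and no symbol $z\mapsto\det(A_\infty+B_\infty z)$ to wind. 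The remaining two ingredients (diagonal gauge conjugation and small straight-line moves) are correct but, as you yourself note, do not by themselves control global invertibility; your proposed remedy via Combes--Thomas and partition of unity is not a substitute for the missing structural input.

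The paper supplies that missing input through a Stummel-idempotent construction. One first observes (via conjugation by $U_\lambda=\lambda^X$) that invertibility of $A+BR$ implies invertibility of $A+\lambda BR$ for all $\lambda\in\bS^1$. This licenses the contour integrals
\[
P=\frac{1}{2\pi\ii}\oint_{\bS^1}(A+\lambda BR)^{-1}BR\,\dif\lambda,\qquad
Q=\frac{1}{2\pi\ii}\oint_{\bS^1}BR(A+\lambda BR)^{-1}\,\dif\lambda,
\]
which are idempotents that \emph{simultaneously} block-diagonalize $A$ and $BR$, with each block separately invertible throughout $\overline{\DD}$ or its complement. A Cauchy-type identity then shows $P,Q$ are themselves diagonal in position, with ranks that are constant in $n$ and equal to $K+k$. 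This is precisely the structural fact that replaces your non-existent asymptotic limits: it gives, at every site, a canonical rank-$(K+k)$ subspace, and the path $t\mapsto Q^\perp(A+(1-t)BR)P^\perp+Q((1-t)A+BR)P$ stays invertible by construction, landing on an operator that a further diagonal gauge (your first ingredient) takes to $D^\perp+DR$.
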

	
	\begin{proof}
		We first note that if $A+BR$ is invertible, then $A+\lambda BR$ is invertible for any $\lambda\in\bS^1$. To see this, for any $\lambda\in\bS^1$ we define the diagonal operator $U_\lambda:= \lambda^X\otimes \Id_K$ where $X$ is the position operator on $\ell^2(\ZZ)$. I.e., \eq{(U_\lambda)_{nm}:=\delta_{nm}\lambda^n \Id_K\qquad(n,m\in\ZZ)\,.} Since $R^\ast X R = X+\Id$ we have $R^\ast U_\lambda R  = \lambda U_\lambda$. Thus, using the fact that the diagonal operators $A,B$ commute with $U_\lambda$ we find \eq{A+\lambda BR = A + BR \underbrace{\left(R^\ast U_\lambda R U_\lambda^\ast\right)}_{=\lambda\Id}=A U_\lambda U_\lambda^\ast +B U_\lambda R U_\lambda^\ast = U_\lambda\left(A+BR\right)U_\lambda^\ast } from which we conclude $A+\lambda BR$ is invertible for all $\lambda\in\bS^1$.
		
		It is hence justified to apply 
		\cref{thm:stummel} below, with $G=BR$ to obtain idempotents $P,Q$ with respect to which $A$ and $BR$ are diagonal according to the grading \cref{eq:stummel grading}, that is, we have \eq{A&=Q^\perp A P^\perp + QAP\\ BR&= Q^\perp BR P^\perp + QBRP} (with $P^\perp\equiv\Id-P$, despite not being orthogonal) and\eql{\label{eq:stummel invertible maps}
			\forall \lambda\in\overline{\DD}\,,\qquad\begin{cases}
				Q^\perp (A+\lambda BR)P^\perp &:\im P^\perp\to\im Q^\perp\\ \quad Q(\lambda A+BR)P&:\im P\to \im Q    
			\end{cases}\text{ are both invertible.}} 
		We therefore consider the path \eql{\label{eq:path from A+BR to disjointed decomposition}[0,1]\ni t\mapsto Q^\perp(A+(1-t)BR)P^\perp + Q((1-t)A+BR)P} which is invertible by construction (each of its two diagonal blocks is separately invertible), and as we shall see, is also of the $A+BR$-hopping form. Indeed, we argue that the idempotents $P,Q$ are diagonal in space. We note in passing that this fact was observed in \cite{ben1991band}, from which we draw inspiration. Let $\bbDelta :\calB\to\calB$ be the super-operator which projects an operator to its diagonal part, i.e., for any operator $F\in\calB$, \eq{(\bbDelta F)_{nm}\equiv F_{nm}\delta_{nm}\qquad(n,m\in\ZZ)\,.} Clearly $R^\ast \left(\bbDelta F\right) R =  \bbDelta \left( R^\ast F R\right) $. Then we note the Cauchy identity \eq{\frac{1}{2\pi \ii}\oint_{\bS^1} U_\lambda FU_\lambda^* \dif{\lambda} = R^\ast\bbDelta F} which follows from the calculation on the $n,m$ matrix elements \eq{\left(\frac{1}{2\pi \ii}\oint_{\bS^1} U_\lambda FU_\lambda^* \dif{\lambda}\right)_{nm}=\left(\frac{1}{2\pi \ii}\oint_{\bS^1} \lambda^{n-m} \dif{\lambda}\right)F_{nm}=\delta_{n-m,-1}F_{nm} = \left(R^\ast \bbDelta F\right)_{nm}} and so using the definition of $P$ from \cref{eq:stummel projections} we have 
		\eq{
			P \equiv \frac{1}{2\pi \ii}\oint_{\bS^1} (A+\lambda BR)^{-1} BR\dif{\lambda} = \frac{1}{2\pi \ii}\oint_{\bS^1} U_\lambda (A+BR)^{-1}U_\lambda^\ast \dif{\lambda} BR &= R^\ast \left(\bbDelta\left(A+BR\right)^{-1} \right)BR \\&= \bbDelta\left(R^\ast \left(A+BR\right)^{-1}BR\right)\,.
		} 
		Similarly, the idempotent $Q= \bbDelta B\left(A+BR\right)^{-1}$ and is in particular diagonal. Therefore, \cref{eq:path from A+BR to disjointed decomposition} indeed passes within the space of operators of the form $A+BR$ and yields, with the diagonal idempotent $\Pi:= R P R^\ast = \bbDelta\left(\left(A+BR\right)^{-1}B\right)$, the path \eq{A+BR\rightsquigarrow Q^\perp AP^\perp + QBRP=Q^\perp AP^\perp + QB\Pi R\,.} Our goal now is to further deform the two diagonal operators $Q^\perp A P^\perp,QB\Pi$ into $D^\perp,D$ where \eql{\label{eq:def of D}D=\Id\otimes M} for some constant $K\times K$ matrix given by $M:=0_{K-p}\oplus\Id_p$ for some $p=0,\dots,K$.
		
		To get to $D^\perp+DR$, we note that $QB\Pi:\im \Pi\to\im Q$ and $Q^\perp A P^\perp:\im P^\perp \to \im Q^\perp$ are both invertible since they are the point $\lambda=0$ of \cref{eq:stummel invertible maps}. Since all these operators are diagonal, they define a sequence of invertible maps $Q_n B_n P_{n-1} : \im P_{n-1}\to\im Q_n$ and $Q_n^\perp A P_n^\perp : \im P_n^\perp \to \im Q_n^\perp$. For every $n\in\ZZ$, the invertibility of the first matrix implies that $\dim\im P_{n-1} = \dim Q_n$ and of the second $\dim \im P_{n}^\perp = \dim \im Q_n^\perp$ and hence $\dim \im P_{n} = \dim \im Q_n$ so that \eq{\ZZ\ni n\mapsto \dim \im P_n = \dim\im Q_n \in \NN} is a constant sequence, whose constant value we denote by $p\in\NN$. Our goal is to deform the two sequences of invertible maps above into the trivial ones given by $n\mapsto M,M^\perp$; to do one has to deform both the vector subspaces on which these maps are invertible so as to make them constant in $n$ (this amounts to a unitary conjugation of $P_n,Q_n$), and then deform the maps within the respective subspaces to $\Id$ resp. $0$. Indeed, the constant rank of $Q_n,P_n$ implies that we can find unitaries $U,V$ which are diagonal in space and so that $P = U^\ast D U$ and $Q = V^\ast D V$. As such, for each $n\in\ZZ$, the $K\times K$ matrices $(V A U^\ast)_n$ and $(V B R U^\ast R^*)_n$ both restrict to invertible maps $\im M^\perp \to \im M^\perp$ and $\im M \to \im M$ respectively. Using Kuiper we deform each of these matrices to $M^\perp$ and $M$ respectively to obtain a local invertible deformation  $A+BR\rightsquigarrow D^\perp+D R$.
		
		Let us see that the path we describe above indeed passes through invertibles. We have so far
		\eq{
			A+BR & \rightsquigarrow Q^\perp AP^\perp + QBRP  
			\\ & =  (V^*D^\perp V) A (U^* D^\perp U) + (V^*D V) BR (U^* D U) 
			\\ & =  V^*(D^\perp V A U^* D^\perp) U + V^* (D V BR U^* D) U 
			\\ & \rightsquigarrow D^\perp (V A U^*) D^\perp + D (V BR U^*) D
		} where the last deformation follows by a Kuiper to get $V,U\rightsquigarrow\Id$. 
		Now we note
		\eq{
			VAU^*:\ker D\to\ker D,\quad V BR U^*:\im D\to \im D
		}
		are invertible by construction. We note that $V BR U^*$ is not diagonal and shall be deformed to $DR$ rather than $D$. To this end, write $V BR U^*=(V BR U^*R^*)R$ and note $V BR U^*R^*:\im D\to\im D$ is invertible and diagonal. Deform within diagonal invertibles in each respective subspace
		\eq{
			VAU^*\rightsquigarrow D^\perp : \ker D\to\ker D,\quad V BR U^*R^* \rightsquigarrow D:\im D\to\im D
		}
		Then $(V BR U^*R^*) R \rightsquigarrow DR:\im D\to\im D$ within invertibles. Thus
		\eq{
			D^\perp (V A U^*) D^\perp + D (V BR U^*) D &\rightsquigarrow D^\perp (D^\perp) D^\perp + D (DR) D \\
			&=D^\perp + D RD = D^\perp +DR
		}
		since $RD=DR$. The path is invertible since we treat $\ker D\to\ker D$ and $\im D\to \im D$ separately.
		
		Since $\findex_\Lambda (A+BR)=k$ is preserved under norm continuous deformations, its value is equal to $\findex_\Lambda (D^\perp + DR)=p-K$, where we identified $p=K+k.$ Any two invertible operators $S,T$ of the form $A+BR$ with the same index can be deformed to the same $D^\perp+D R$, and hence can be deformed to each other.
	\end{proof}
	
	\section{Classification of one-dimensional edge systems}\label{sec:1d edge classification}
	A prominent feature of topological insulators is the \emph{bulk-edge correspondence}, which states, roughly speaking, that \quotes{the topology} of infinite systems agrees with the topology of the associated systems truncated to the half-space. This vague statement has physical content (about existence of edge modes) and two mathematical assertions: that the topological classifications of these two types of geometries agree, and moreover, that given a bulk insulator $H$, if we were to truncate it to the half-space (with largely any reasonable boundary conditions) to get $\widehat{H}$, calculating the index for $H$ or for $\widehat{H}$ (using different formulas) would yield the same number. This latter, numerical as it were, type of bulk-edge correspondence has been the subject of many papers, starting with the integer quantum Hall effect \cite{Hatsugai1993}, and continuing with the more mathematical \cite{Schulz-BaldesKellendonkRichter2000,Elbau_Graf_2002}; As far as we are aware, that the two topological classifications agree (without numerical equivalence) has been established for the entire table using KK-theory \cite{BourneKellendonkRennie2017} in the spectral gap regime.
	
	Let us make a few comments about the edge classification in the current setting. The one-dimensional edge Hilbert space is $\ell^2(\NN)\otimes\CC^N$. Now, the constraint of locality which was presented in \cref{def:1d locality} does not make sense anymore in the edge (unlike locality in the form of \cref{eq:exp locality} which would carry over directly). Moreover, generically edge systems are \emph{not} insulators: rather, they are truncations of infinite systems which are bulk insulators. In the spectral gap regime this may be encoded with or without recourse to a bulk Hamiltonian, as presented in \cite[Section 2.4]{BSS23}. In the one dimensional spectral gap setting, however, the situation is somewhat simplified for the following reason: by adding a truncation, we may only create finite-degeneracy eigenvalues but not change the essential spectrum (since the truncation is a compact perturbation of the bulk system). However, according to the RAGE theorem, eigenvalues are exponentially decaying from some center, and thus exponentially decaying from the truncation. As a result, it would appear that asking that the edge Hamiltonian is a Fredholm operator suffices for the bulk-gap requirement, because Fredholm operators are precisely those which are essentially gapped at zero. 
	
	But more is true: the Fredholm condition is a very weak notion of locality which in the edge setting is a good replacement for \cref{def:1d locality}. Indeed, if we think of the Fredholm condition as the finiteness of the kernel and the kernel of the adjoint, this is essentially asking that the operator cannot have too far away hopping, since if it did, that would violate the finite kernel condition.
	
	As such, it would appear that in one dimension, locality and the gap condition collapse into one insulator condition: 
	\begin{defn}[one-dimensional edge insulators]\label{def:1d edge insulators} $\widehat{H}=\widehat{H}^\ast\in\calB(\ell^2(\NN)\otimes\CC^N)$ is an edge insulator (with bulk gap at zero energy) iff it is a Fredholm operator.
	\end{defn}
	
	Clearly, in the edge picture we do not need to worry about the \quotes{bulk} insulator condition but we do need to make sure our systems are non-trivial in the sense that they have essential spectrum below and above zero. This corresponds to Atiyah and Singer's notion of the non-trivial component $\calF_\star$. We conclude that in the one-dimensional edge picture, if we are willing to accept a very weak notion of locality (but we emphasize it has not been completely ignored) the theory reduces to the classical Atiyah-Singer classification of Fredholm operators with symmetries \cite{AtiyahSinger1969}. Then, for example, class A corresponds to the non-trivial self-adjoint Fredholm operators $\calF^{\mathrm{sa}}_\star$ and from \cite{AtiyahSinger1969} we have \eq{\pi_0(\calF^{\mathrm{sa}}_\star)\cong \left[\Set{0}\to\calF^{\mathrm{sa}}_\star\right]\cong K_1(\Set{0})\cong\Set{0}} whereas in class AIII, the chiral off-diagonal sub-block $S$ must be Fredholm, which automatically implies that $\begin{bmatrix}
		0 & S^\ast\\S&0
	\end{bmatrix}$ is in $\calF^{\mathrm{sa}}_\star$. This then reduces to the even older Atiyah-J\"anich theorem: \eq{\pi_0(\calF) \cong \left[\Set{0}\to\calF\right]\cong K_0(\Set{0})\cong \ZZ\,.}
	
	One could then phrase an edge analog of \cref{thm:main 1d insulators classification theorem}; the formulas for the edge indices are obvious: they are the Fredholm indices or the $\ZZ_2$ Atiyah-Singer indices of the various Fredholm operators without taking $\sgn$ or polar part and without the application of $\bbLambda$, according to \cref{table:AZ chiral classes and symmetries of unitaries,table:AZ non-chiral classes and symmetries of projections}. We find:
	\begin{thm} One dimensional edge insulators as in \cref{def:1d edge insulators} have path components given by the $d=1$ column of \cref{table:Kitaev}, and hence the bulk and edge one dimensional systems have the same classifications. For any given bulk insulator $H$, the bulk index calculated from $H$ agrees with the edge index calculated from $\widehat{H}$ where $\widehat{H}$ is any edge insulator obtained by truncating $H$ 
		to the half-space such that $\widehat{H}$ is Fredholm and respects the symmetry constraint. Hence we obtain a numerical bulk-edge correspondence.
	\end{thm}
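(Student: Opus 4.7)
The plan is to handle the two assertions of the theorem separately: first, the classification of the path-components of the space of edge insulators in each AZ class; second, the numerical identification of the bulk and edge indices for any admissible truncation.

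For the classification, I would build on the reduction already carried out in the paragraphs preceding the theorem statement. In the non-chiral classes, an edge insulator is a self-adjoint Fredholm operator on $\ell^2(\NN)\otimes\CC^N$ which is non-trivial (i.e. lies in $\calF^{\mathrm{sa}}_\star$) and carries the real, quaternionic, $\ii\RR$ or $\ii\HH$ structure corresponding to the symmetry class. For each such class I would invoke the appropriate part of the Atiyah--Singer calculation of $\pi_0$ (as recorded in \cref{sec:Atiyah-Singer Z2 index theory}), which yields $\Set{0}$, $\ZZ_2$ or $\ZZ$ exactly as dictated by the $d=1$ column of \cref{table:Kitaev}. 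In the chiral classes, one reduces $\widehat{H}$ to its off-diagonal block $\widehat{S}$ (cf. \cref{eq:correspondence between chiral flat hamiltonian and off-diagonal sub-block}), which is a Fredholm operator carrying a real, quaternionic, $\star$-real or $\star$-quaternionic structure as appropriate, and again applies the corresponding Atiyah--Singer classification (the correspondence between physical symmetries and abstract structures being the same one tabulated in \cref{table:AZ non-chiral classes and symmetries of projections,table:AZ chiral classes and symmetries of unitaries}).

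For the numerical bulk-edge correspondence, the key algebraic observation is that, for any bulk $U\in\calU^\calL$, $\bbLambda U = \Lambda U\Lambda + \Lambda^\perp$ acts as the identity on $\ker\Lambda$ and as $\Lambda U\Lambda|_{\im\Lambda}$ on $\im\Lambda\cong\ell^2(\NN)\otimes\CC^N$. Consequently $\ker\bbLambda U = \ker(\Lambda U\Lambda|_{\im\Lambda})$ and similarly for the adjoint, so $\findex\bbLambda U = \findex(\Lambda U\Lambda|_{\im\Lambda})$ and the same identity holds for $\findex_2$. This directly equates the bulk index (defined via $\bbLambda\polar(S)$ in the chiral classes and via $\bbLambda\sgn(H)$ in the non-chiral ones) with the edge Fredholm index of the canonical truncation $\widehat{H}_0 := \Lambda H \Lambda|_{\im\Lambda}$ (or $\widehat{S}_0 := \Lambda S \Lambda|_{\im\Lambda}$). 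To extend from this canonical truncation to an arbitrary admissible $\widehat{H}$, I would invoke stability of the Fredholm indices under compact perturbations preserving the symmetry: any two admissible truncations of the same bulk differ by a compact operator, since the bulk locality \cref{def:1d locality} forces the boundary correction to be a compact perturbation.

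The main obstacle will be making the notion of admissible truncation precise and, in the symmetry classes where the index is $\ZZ_2$- or quaternionic-valued, verifying that compact perturbations preserving the appropriate symmetry really do preserve the index (this is the content of results such as \cref{cor:compact perturbation preserve Z2 index}). One must also be careful in the particle-hole classes (D, C, DIII, CI, BDI, CII) that the truncation really respects the spectral symmetry about zero, which is a genuine constraint beyond mere Fredholmness; the cleanest way out is to fix a canonical symmetric truncation $\widehat{H}_0$, show it is Fredholm by Fredholmness of $\bbLambda \sgn(H)$, and then check that any other admissible $\widehat{H}$ differs from $\widehat{H}_0$ by a compact operator within the same symmetry class, at which point the homotopy invariance of the Atiyah--Singer indices yields the numerical equality.
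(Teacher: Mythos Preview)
Your proposal is correct and follows essentially the same line as the paper's sketch: the classification part is delegated to the Atiyah--Singer equivariant Fredholm theory, and the numerical bulk-edge correspondence rests on the elementary identity $\findex_{(2)}\bbLambda U = \findex_{(2)}(\Lambda U\Lambda|_{\im\Lambda})$ together with stability under symmetric compact perturbations. The paper itself gives only a two-line sketch citing \cite{AtiyahSinger1969} for the first part and \cite[Section 3]{Graf_Shapiro_2018_1D_Chiral_BEC} for the second, so your more explicit treatment is a faithful elaboration of the same strategy. One minor caution: \cref{sec:Atiyah-Singer Z2 index theory} only establishes well-definedness of the $\ZZ_2$ index, not the full $\pi_0$ computation for each symmetry class of Fredholm operators; for that you genuinely need the results of \cite{AtiyahSinger1969} (and its real/quaternionic refinements), so be careful not to over-cite the appendix. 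Also, in passing from $\findex\bbLambda\polar(S)$ to $\findex(\Lambda S\Lambda|_{\im\Lambda})$ you are implicitly using that $\bbLambda|S|$ has index zero (being a compact perturbation of the invertible $|S|$ on $\im\Lambda$) together with the logarithmic law; this is routine but worth stating.
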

	\begin{proof}[Sketch of proof]
		As explained in the foregoing paragraphs, the classification result is covered by \cite{AtiyahSinger1969}. The numerical bulk-edge correspondence proof, at the spectral-gap level, is covered by the proof provided in \cite[Section 3]{Graf_Shapiro_2018_1D_Chiral_BEC}.
	\end{proof}

	\section{The mobility gap regime}\label{sec:mobility gap}
	As mentioned above, a more general mathematical criterion to guarantee zero electric conductance (and thus the insulator condition) is through quantum dynamics rather than via a spectral constraint. Drawing on Anderson localization, in \cite{EGS_2005} a deterministic condition was formulated for \emph{one} operator; we quote the equivalent condition given in \cite[Definition 2.5]{BSS23}:
	Let $B_1(\Delta)$ be the space of measurable functions $f:\RR\to\CC$ which are non-constant only within $\Delta$ and are bounded by $1$.
	\begin{defn}[mobility gap] A material $H=H^\ast\in\calL$ is mobility gapped at zero energy iff there exists some open interval $\Delta\ni0$ such that
		\begin{enumerate}
			\item There exists some $\mu>0$ such that for any $\ve>0$ there exists some $C_\ve<\infty$ such that \eql{\label{eq:mobility gap dynamical condition}\sup_{f\in B_1(\Delta)}\norm{f(H)_{xy}} \leq C\exp(-\mu\norm{x-y}+\ve\norm{x}) \qquad(x,y\in\ZZ^d)\,.} Hence $f(H)$ has exponentially decaying off-diagonal matrix elements whose rate of decay is however not uniform in the diagonal direction. Moreover, this statement is uniform in $f$.
			\item All eigenvalues of $H$ within $\Delta$ are uniformly finitely degenerate (the above condition implies $\sigma(H)\cap\Delta=\sigma_{\mathrm{pp}}(H)\cap\Delta$ via the RAGE theorem).
		\end{enumerate}    
	\end{defn}
	
	The type of decay condition appearing in \cref{eq:mobility gap dynamical condition} has been called \emph{weakly-local} in \cite{Shapiro2019,BSS23}. In one dimension it seems however that \cref{def:1d locality} is still weaker. 
	
	Furthermore, it is well-known from the theory of Anderson localization (see \cite{SimonWolff1986} e.g.) that any fixed deterministic energy value is almost-surely  not an eigenvalue of an Anderson localized random operator. Hence, in particular, even though in the mobility gap regime there is no reason to assume a spectral gap, or no accumulation of spectrum near zero, it wouldn't seem unreasonable to assume that zero is not an eigenvalue of $H$.
	
	Hence, if instead of taking the stronger \cref{eq:mobility gap dynamical condition} we merely setup the mobility gap condition as the minimal dynamical constraint to guarantee the existence of the index, we could come up with the following deterministic condition, which relies still on \cref{def:1d locality}: 
	\begin{defn}[tentative definition for mobility gap in $d=1$] A material $H=H^\ast\in\calL$ is mobility gapped at zero iff zero is not an eigenvalue of $H$ and if $\sgn(H)\in\calL$. We denote this space by $\calI_{\mathrm{mg,v1}}$. Its topology remains to be defined.
	\end{defn}
	
	This condition (up to strengthening the mode of locality) was the one given in \cite[Assumptions 1 and 2]{Graf_Shapiro_2018_1D_Chiral_BEC}. It is clear that such operators still have well-defined indices: the fact zero is not an eigenvalue of $H$ means that $\sgn(H)$ is actually unitary and not merely a partial isometry. But more is true: the entire proof of \cref{thm:main 1d insulators classification theorem} goes through if we skip the step connecting Hamiltonians with flat Hamiltonians! Indeed, all that is required is that operators be unitary or self-adjoint projections. 
	
	To connect Hamiltonians and flat Hamiltonians, we might employ the following ``abstract nonsense'' definition and argument:
	
	We shall make use of two different topologies on $\calI_{\mathrm{mg,v1}}$. First, let $\calT_{\iota:\calI_{\mathrm{mg,v1}}\to\calB}$ be the initial topology on $\calI_{\mathrm{mg,v1}}$ generated by the mapping $\iota:\calI_{\mathrm{mg,v1}}\to\calB$, inherited from the operator norm topology on $\calB$; this is by definition the subspace topology. Next, the functional calculus implies there is a map on operators $\sgn:\calI_{\mathrm{mg,v1}}\to\calB$ which maps $H\mapsto\sgn(H)$. Let $\calT_{\sgn:\calI_{\mathrm{mg,v1}}\to\calB}$ then be the initial topology on $\calI_{\mathrm{mg,v1}}$ which is generated by $\sgn:\calI_{\mathrm{mg,v1}}\to\calB$, where $\calB$ is understood with the operator norm topology.
	
	Importantly, any path $\calI_{\mathrm{mg,v1}}$ continuous w.r.t. $\calT_{\sgn:\calI_{\mathrm{mg,v1}}\to\calB}$ preserves an index, should one exist, as an index is always a function of the \emph{flat} Hamiltonian $\sgn(H)$: it is either $\findex_{2,\Lambda}\sgn(H)$ in the non-chiral case or, in the chiral case, it is $\findex_{(2),\Lambda}\left.P_{+} \sgn(H) P_{-}\right|_{\im P_-\to\im P_+}$ where $P_{\pm}$ are the SA projections onto the $\pm1$ eigenspaces of $\Pi$.
	
	\begin{lem}\label{lem:abstract nonsense strong deformation}
		The space $\calI^\flat_{\mathrm{mg,v1}}$ with the topology $\calT_{\iota:\calI^\flat_{\mathrm{mg,v1}}\to\calI_{\mathrm{mg,v1}}}$ is a strong deformation retract of the space $\calI_{\mathrm{mg,v1}}$ taken with the topology $\calT_{\sgn:\calI_{\mathrm{mg,v1}}\to\calB}$. 
	\end{lem}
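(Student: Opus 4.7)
The plan is to mimic the spectral gap proof in \cref{lem:flat Hamiltonians are strong deformation retract}: take the retraction $r:\calI_{\mathrm{mg,v1}}\to\calI^\flat_{\mathrm{mg,v1}}$ to be $r(H):=\sgn(H)$, and the deformation to be the straight line $F(H,t):=(1-t)H+t\sgn(H)$ for $t\in[0,1]$. Clearly $r|_{\calI^\flat_{\mathrm{mg,v1}}}=\Id$, $F(H,0)=H$, $F(H,1)=r(H)$, and $F(H,t)=H$ whenever $H$ is already flat. The three things that actually need checking are (i) that $F(H,t)\in\calI_{\mathrm{mg,v1}}$ for every $(H,t)$, (ii) that $F$ is continuous as a map into $\calI_{\mathrm{mg,v1}}$ equipped with the topology $\calT_{\sgn}$, and (iii) that $r$ is continuous between the stated topologies.

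For (i), I would exploit the fact that since $0\notin\sigma_{\mathrm{pp}}(H)$ we have a functional-calculus splitting $\Id=P_++P_-$ with $P_\pm:=\chi_{(0,\pm\infty)}(H)$. On $\im P_+$, the operator $(1-t)H+t\sgn(H)$ acts as $(1-t)H+t\cdot\Id$ which has spectrum contained in $[t,\infty)$ when $t>0$ and is otherwise the strictly positive part of $H$; similarly on $\im P_-$. Hence $0$ is not an eigenvalue of $F(H,t)$ and, crucially, one computes directly from the functional calculus that
\begin{equation*}
\sgn\bigl(F(H,t)\bigr)\;=\;P_+-P_-\;=\;\sgn(H)\qquad(t\in[0,1])\,.
\end{equation*}
Since $\sgn(H)\in\calL$ by hypothesis, $\sgn(F(H,t))\in\calL$ for every $t$, i.e.\ $F(H,t)\in\calI_{\mathrm{mg,v1}}$.

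The identity $\sgn\circ F=\sgn\circ\mathrm{proj}_1$ immediately implies (ii): by the universal property of the initial topology, continuity of $F:\calI_{\mathrm{mg,v1}}\times[0,1]\to(\calI_{\mathrm{mg,v1}},\calT_{\sgn})$ is equivalent to norm-continuity of $\sgn\circ F:\calI_{\mathrm{mg,v1}}\times[0,1]\to\calB$, and the latter is simply $(H,t)\mapsto\sgn(H)$, which factors through the continuous projection and the generator of the source topology. The same universal-property argument handles (iii): the composition $\sgn\circ r=\sgn\circ\sgn=\sgn$ is, tautologically, continuous from $(\calI_{\mathrm{mg,v1}},\calT_{\sgn})$ to $\calB$.

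Finally I would note that the subspace topology $\calT_{\iota:\calI^\flat_{\mathrm{mg,v1}}\to\calI_{\mathrm{mg,v1}}}$ coincides with the norm topology, since on flat Hamiltonians $\sgn\circ\iota$ is literally the inclusion into $\calB$. The only place where any genuine work is done is the collapsing identity $\sgn(F(H,t))=\sgn(H)$; everything else is formal manipulation of initial topologies. I do not expect real obstacles, but the subtle point worth emphasizing is that this argument would fail if one insisted on the operator-norm subspace topology on $\calI_{\mathrm{mg,v1}}$: the map $H\mapsto\sgn(H)$ is not norm-continuous in the mobility-gap regime (spectrum may accumulate at $0$), which is precisely why one must pass to the coarser $\calT_{\sgn}$ to obtain a meaningful homotopy theory here.
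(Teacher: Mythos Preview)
Your proposal is correct and follows essentially the same route as the paper: both use the straight-line homotopy $F(H,t)=(1-t)H+t\sgn(H)$ and reduce continuity of $F$ in the $\calT_{\sgn}$ topology to the single identity $\sgn(F(H,t))=\sgn(H)$. The paper phrases this via direct computation of preimages of sub-basic open sets while you invoke the universal property of the initial topology, but these are equivalent formulations of the same argument; your treatment is in fact slightly more explicit about well-definedness of $F$ and about the continuity of the retraction $r$.
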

	This statement remains true if we add the bulk-insulator constraint as well as any of the ten AZ symmetry constraints: $\calI_{\mathrm{mg,v1},B,\Sigma}^\flat$ is a strong deformation retraction of $\calI_{\mathrm{mg,v1},B,\Sigma}$ for any $\Sigma\in\mathrm{AZ}$. However, in order to avoid notational overhead we stick with the notation $\calI_{\mathrm{mg,v1}}$ where it is understood that we also take the bulk-insulator condition into the definition, as well as the appropriate symmetry.
	\begin{proof}
		Define $F:\calI_{\mathrm{mg,v1}}\times[0,1]\to\calI_{\mathrm{mg,v1}}$ via \eq{F(H,t) := (1-t)H+t\sgn(H)\qquad(H\in\calI_{\mathrm{mg,v1}},t\in[0,1])\,.} 
		We only show $F$ is continuous with respect to $\calT_{\sgn:\calI_{\mathrm{mg,v1}}\to\calB}$ (times the Euclidean topology on $[0,1]$), as the other two properties have already been shown in \cref{lem:flat Hamiltonians are strong deformation retraction}. The initial topology is generated by a sub-basis of inverse images of open sets on the co-domain. Hence, to show continuity, it suffices to start with such an open subset, and so let $U$ be open in $\calB$. Then we seek to show that $F^{-1}(\sgn^{-1}(U))$ is open. To that end, using the fact that for all $t\in[0,1]$, $\sgn(F(H,t))=\sgn(H)$, we have \eq{F^{-1}(\sgn^{-1}(U)) & =\Set{(H,t)|\sgn(F(H,t))\in U} \\ &= \Set{(H,t)|\sgn(H)\in U} \\ &= \sgn^{-1}(U)\times [0,1]\,.} The set on the last line is manifestly open.
	\end{proof}
	
	\begin{cor}[The one-dimensional column of the Kitaev table w.r.t. v1 of the mobility gap topology]\label{thm:main 1d insulators classification theorem for mgv1} At each fixed $N$, for any $\Sigma\in\mathrm{AZ}$, the path-connected components of $\calI_{\mathrm{mg,v1},N,B,\Sigma}$ considered with the initial topology $\calT_{\sgn:\calI_{\mathrm{mg,v1}}\to\calB}$, agree with the set appearing in the first column of \cref{table:Kitaev}.
	\end{cor}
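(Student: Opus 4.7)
The plan is to reduce this statement to \cref{thm:main 1d insulators classification theorem} by applying the strong deformation retraction of \cref{lem:abstract nonsense strong deformation} and observing that, at the level of flat Hamiltonians, the mobility gap v1 setting is indistinguishable from the spectral gap setting classified in \cref{sec:local unitaries classification,sec:local SA unitaries classification}. Concretely, the path-components functor $\pi_0$ is invariant under strong deformation retracts, so \cref{lem:abstract nonsense strong deformation} (in its symmetry-equivariant form, which goes through verbatim because $\sgn$ is an odd function and the AZ symmetry of $H$ carries over to $\sgn(H)$) yields the identification
\eq{\pi_0\bigl(\calI_{\mathrm{mg,v1},N,B,\Sigma},\calT_{\sgn}\bigr)\;\cong\;\pi_0\bigl(\calI^\flat_{\mathrm{mg,v1},N,B,\Sigma},\calT_{\iota\to\calI_{\mathrm{mg,v1}}}\bigr)\,.}

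The next step is to identify the right-hand space. By the definition of $\calI_{\mathrm{mg,v1}}$, an element $H$ of $\calI^\flat_{\mathrm{mg,v1},N,B,\Sigma}$ is a self-adjoint unitary (because $\sgn(H)=H$ and $0\notin\sigma_{\mathrm{pp}}(H)$ forces $H^2=\Id$) which is $\Lambda$-local (because $\sgn(H)\in\calL$ by assumption), $\Lambda$-non-trivial (from the bulk-insulator condition), and respects the AZ constraint $\Sigma$. Moreover, on the flat locus the map $\sgn:H\mapsto\sgn(H)$ is just the inclusion, so the initial topology $\calT_{\sgn}$ coincides with the subspace topology inherited from $\calB$. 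Therefore the space we are classifying is nothing but $\LamNTSAU_\FF$ (in the non-chiral classes) or, after passing to the off-diagonal sub-block through \cref{eq:correspondence between chiral flat hamiltonian and off-diagonal sub-block}, $\calU^\calL_\FF$ (in the chiral classes), with the correspondence between $\Sigma$ and $\FF$ already recorded in \cref{table:AZ non-chiral classes and symmetries of projections,table:AZ chiral classes and symmetries of unitaries}.

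Having made these identifications, the classification follows immediately from the theorems already proven. For the non-chiral classes A, AI, AII, C, D we invoke \cref{thm:classification of RCH LamNT SAUs,thm:classification of iR iH LamNT SAUs} to read off the entries $\Set{0},\Set{0},\Set{0},\Set{0},\ZZ_2$ of \cref{table:equivariant Lambda-non-trivial SAUs}, which match the $d=1$ column of \cref{table:Kitaev}. For the chiral classes AIII, BDI, CII, CI, DIII we invoke \cref{thm:classification of RCH local unitaries,thm:star-real local unitaries,thm:star-quaternionic local unitaries} to read off $\ZZ,\ZZ,2\ZZ,\Set{0},\ZZ_2$ from \cref{table:equivariant local unitaries}, again matching \cref{table:Kitaev}; the automatic $\Lambda$-non-triviality in the chiral case is guaranteed by \cref{lem:chiral insulators are automatically bulk non trivial}.

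The only genuine subtlety, and the part that requires care rather than routine repetition, is the verification that the equivariant version of \cref{lem:abstract nonsense strong deformation} truly preserves the mobility gap v1 hypothesis along the straight-line homotopy $F(H,t)=(1-t)H+t\sgn(H)$. Since $\sgn(F(H,t))=\sgn(H)$ for every $t\in[0,1]$, both the condition $\sgn(F(H,t))\in\calL$ and the $\Lambda$-non-triviality of its Fermi projection are preserved trivially. Injectivity on path-components then reduces to the classification results cited, and surjectivity is inherited from the concrete representatives (shifts, chiral Kitaev-chain type operators, and so on) already constructed in the proofs of \cref{thm:classification of RCH local unitaries,thm:star-quaternionic local unitaries,thm:classification of iR iH LamNT SAUs}, all of which lie a fortiori inside $\calI^\flat_{\mathrm{mg,v1}}$ because they are self-adjoint unitaries with $\sgn(H)=H\in\calL$.
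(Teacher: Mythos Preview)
Your proposal is correct and follows essentially the same route as the paper: invoke the strong deformation retraction of \cref{lem:abstract nonsense strong deformation}, observe that on the flat locus the map $\sgn$ is the inclusion so that the initial topology $\calT_{\sgn}$ collapses to the norm subspace topology, and then appeal to the classification of $\LamNTSAU_\FF$ and $\calU^\calL_\FF$ already carried out in \cref{sec:local unitaries classification,sec:local SA unitaries classification}. The paper's version is slightly more terse and makes the topology-identification step explicit via the transitivity of initial topologies, whereas you spell out the symmetry classes and representatives, but the argument is the same.
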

	\begin{proof}
		Having the deformation retract, we know that $\pi_0(\calI^\flat_{\mathrm{mg,v1}})$ is  the same as $\pi_0(\calI_{\mathrm{mg,v1}})$ where we take the subspace topology for the former and $\calT_{\sgn:\calI_{\mathrm{mg,v1}}\to\calB}$ for the latter. That subspace topology $\calT_{\iota:\calI^\flat_{\mathrm{mg,v1}}\to\calI_{\mathrm{mg,v1}}}$ equals the topology $\calT_{\sgn:\calI^\flat_{\mathrm{mg,v1}}\to\calB}$ on $\calI^\flat_{\mathrm{mg,v1}}$. Indeed, this follows by the transitive property of the initial topology \cite[p. 2]{Grothendieck1973}: if $S\subseteq X$ and $f:X\to Y$ then the subspace topology on $S$ from $X$ taken with the initial topology generated by $f$ equals the initial topology on $S$ generated by $f\circ \iota$ where $\iota:S\to X$ is the inclusion map. But now, $\calT_{\sgn:\calI^\flat_{\mathrm{mg,v1}}\to\calB}$ and $\calT_{\iota:\calI^\flat_{\mathrm{mg,v1}}\to\calB}$ coincide. This is a consequence of the fact that $\sgn:\calI_{\mathrm{mg,v1}}^\flat\to\calB$ reduces to the inclusion map $\iota:\calI^\flat_{\mathrm{mg,v1}}\to\calB$ (since $\sgn\circ\sgn=\sgn$), and the subspace topology is precisely generated by the inclusion map.
		
		The end conclusion is that we may calculate $\pi_0(\calI^\flat_{\mathrm{mg,v1}})$ w.r.t. the topology $\calT_{\iota:\calI^\flat_{\mathrm{mg,v1}}\to\calB}$, i.e. $\pi_0(\calI^\flat)$ with the subspace topology from norm topology. This, however, is precisely the calculation already done in the proof of \cref{thm:main 1d insulators classification theorem}.
	\end{proof}

	With this, it would appear that the mobility gap problem is solved in one dimension. We maintain this is, however, not the case. Indeed, a subtlety appears from the fact we allowed ourselves to shift Hamiltonians to always place the Fermi energy at zero, which has thus made the above analysis single out zero energy. This is of course invalid because if we were to ask that \emph{all} given fixed energies are almost-surely not an eigenvalue we would constrain our operators to have a spectral gap, which we are precisely trying to avoid. So by always placing $\mu=0$ we are \emph{not} allowed to ask that zero is not an eigenvalue, and so, following the theory of Anderson localization, we would make another attempt as
	
	\begin{defn}[another tentative definition for mobility gap in $d=1$] A material $H=H^\ast\in\calL$ is mobility gapped at zero iff $\ker H$ is finite dimensional and  $\sgn(H)\in\calL$. We denote the space of all such operators as $\calI_{\mathrm{mg,v2}}$; its topology remains to be defined.
	\end{defn}
	
	We note that in this case, $\sgn(H)$ is merely a partial isometry with finite kernel (and so it is Fredholm, $\sgn(H)$ having closed range) and that still $\bbLambda \sgn(H)$ is Fredholm.
	
	Unfortunately, with the initial topology $\calT_{\sgn:\calI_{\mathrm{mg,v2}}\to\calB}$ from norm topology on $\calB$, this definition is \emph{still} not good enough, as the following counterexample demonstrates a deviation from the Kitaev table. As a result, to rectify this situation, one should either define another topology on $\calI_{\mathrm{mg,v2}}$ (e.g. the subspace topology) or start with another space entirely. The first step should be to require an open interval around zero to have finite degeneracy (uniformly). Another possibility would be to place a dynamical constraint. We postpone such investigations to future work.
	
	\begin{example}
		There exist two operators $R,S\in\calI_{\mathrm{mg,v2},\mathrm{AIII}}$ which have the same $\Lambda$-index and yet there is no continuous path connecting them in the topology $\calT_{\sgn:\calI_{\mathrm{mg,v2}}\to\calB}$.
	\end{example}
	\begin{proof}
		Define $R$ to be the unitary right shift operator on $\ell^2(\ZZ)$ and $\Lambda$ as above projects to $\szpan(\Set{\delta_n}_{n\geq1})$. With this, we choose $S := \Lambda^\perp R \Lambda^\perp + \Lambda R \Lambda$. As such, it is clear that $\findex_\Lambda S = \findex_\Lambda R = -1$ since $\bbLambda R = \bbLambda S$. Moreover, it is clear that both have finite dimensional kernels and co-kernels, and $\polar(R)=R$ and $\polar(S)=S$ which are both $\Lambda$-local. This means that both $R$ and $S$ are in $\calI_{\mathrm{mg,v2},\mathrm{AIII}}$.
		
		However, we maintain that no path continuous in $\calT_{\sgn:\calI_{\mathrm{mg,v2}}\to\calB}$ can exist in between them. Indeed, one may follow \cref{lem:abstract nonsense strong deformation} to show that the space $\calI^\flat_{\mathrm{mg,v2}}$ with the topology $\calT_{\iota:\calI^\flat_{\mathrm{mg,v2}}\to\calI_{\mathrm{mg,v2}}}$ is a strong deformation retract of the space $\calI_{\mathrm{mg,v2}}$ taken with the topology $\calT_{\sgn:\calI_{\mathrm{mg,v2}}\to\calB}$. Indeed, all that is used there is the fact that $\sgn\circ\sgn=\sgn$ and not the unitarity of the operators. However, now we could argue that, just as we did in the proof of \cref{thm:main 1d insulators classification theorem for mgv1}, that for the space $\calI^\flat_{\mathrm{mg,v2}}$, the topologies $\calT_{\iota:\calI^\flat_{\mathrm{mg,v2}}\to\calI_{\mathrm{mg,v2}}}$, $\calT_{\sgn:\calI^\flat_{\mathrm{mg,v2}}\to\calB}$ and $\calT_{\iota:\calI^\flat_{\mathrm{mg,v2}}\to\calB}$ coincide. As such, if a path from $R$ to $S$ continuous in $\calT_{\sgn:\calI_{\mathrm{mg,v2}}\to\calB}$ were to exist, flattening that path, would imply the existence of a path from $\polar(R)=R$ to $\polar(S)=S$ within $\calI^\flat_{\mathrm{mg,v2}}$ continuous w.r.t. the topology $\calT_{\iota:\calI^\flat_{\mathrm{mg,v2}}\to\calB}$. Indeed, this uses the fact that $\sgn:\calI_{\mathrm{mg,v2}}\to\calI_{\mathrm{mg,v2}}^\flat$ is continuous with respect to the initial topology. This last path, continuous w.r.t. $\calT_{\iota:\calI^\flat_{\mathrm{mg,v2}}\to\calB}$, cannot exist.
		
		Indeed, the topology $\calT_{\iota:\calI^\flat_{\mathrm{mg,v2}}\to\calB}$ is just the subspace topology from operator norm topology, on the space of partial isometries with finite kernel and co-kernel, i.e., Fredholm partial isometries. For us, $R$ has no kernel and no co-kernel, with Fredholm index zero, and $S$ has a kernel and co-kernel of dimension $+1$, again with Fredholm index zero. However, since we are working in the space of partial isometries, it is well-known \cite[Problem 130]{halmos1982hilbert} that the dimensions of the kernel and of the co-kernel are continuous in the operator norm (rather than merely lower semi-continuous as in the case of Fredholm operators). As such, having a continuous path between these two operators, would violate the continuity of the dimension of the kernels for partial isometric Fredholms.
	\end{proof}
	
	\section{Classification of bulk spectrally-gapped insulators in odd $d>1$}\label{sec:higher odd d}
	
	Our analysis so far has focused on one-dimensional structures. Let us now turn our attention to higher dimensions. We seek an analogous notion of locality as presented in \cref{def:1d locality} which would apply in higher dimensions. In their textbook, Prodan and Schulz-Baldes \cite[Chapter 6]{PSB_2016} present a construction which they ascribe to \cite{Connes1994,Garcia2000} of locality in all dimensions which proceeds as follows.
	
	Let us define $k:=d/2$ in even dimensions and $k:=(d-1)/2$ in odd dimensions. In the spirit of \cref{rem:the role of N with the compact commutator condition}, let us (without loss) \emph{assume} that $N$ is divisible by $2^k$, so that it actually carries a representation of a Clifford algebra with generators $\gamma_1,\dots,\gamma_d$ (now considered as $N\times N$ matrices). The Dirac operator is then defined as \eql{ D := \sum_{i=1}^d X_i \otimes \gamma_i} and now, in higher dimensions, we choose the locality projection $\Lambda$ to be \eql{\label{eq:Dirac projection}\Lambda := \frac{1}{2}\left(\Id+\sgn(D)\right)\,.} This operator no longer acts trivially in the internal space $\CC^N$ factor as was the case in $d=1$. Hence $D$ and both $\Lambda$ intertwine space and the internal degrees of freedom in a non-trivial way; we note that if $d=1$ we get back our choice made in \cref{def:1d locality}. It should be remarked that in our notation $\sgn(D)$ is a partial isometry which may be extended to a unitary in an obvious way.

	\begin{defn}[locality in higher odd dimensions] We define an operator  $A\in\calB(\ell^2(\ZZ^d)\otimes\CC^N)$ (with $N$ divisible by $2^k$ without loss as above and $d$ odd) to be local iff it is $\Lambda$-local as in \cref{def:Lambda-local operators} with the particular choice of $\Lambda$ made in \cref{eq:Dirac projection}.
	\end{defn}
	Going back to \cref{def:2d locality}, we identify in $d=2$\eq{\sgn(D)=\begin{bmatrix}
			0 & \exp\left(-\ii\Phi\right) \\ \exp\left(\ii\Phi\right) & 0
	\end{bmatrix}} with $\Phi\equiv\arg(X_1+\ii X_2)$ the angle-position operator. This $d=2$ pattern is typical: in even dimensions $\sgn(D)$ breaks into off-diagonal form as above \cite[Chapter 6]{PSB_2016}. It is clear that ignoring this internal structure of $\sgn(D)$ we get trivial classification for projections in even dimensions in contradiction to expectations. Hence it is clear that in even dimensions one has to contend with a different notion of locality, one which entails operators which essentially commute with a fixed \emph{unitary} (the Dirac phase) rather than the Dirac projection. This leads to rather different classification scheme which we have little to say about. 
	
	On the other hand, in higher \emph{odd} dimensions we may proceed by adopting the definition of a bulk insulator as in \cref{def:bulk insulators}, i.e., bulk insulators are operators $H=H^\ast\in\calL$ which have a spectral gap about zero and for which the Fermi projection $P$ is not merely local but also $\Lambda$-non-trivial. \emph{We emphasize that now, however, it can no longer be reasonably argued that this $\Lambda$-non-trivial requirement would correspond to bulk systems, since now $\im(\Lambda)$ cannot be identified geometrically with an edge system.} Be that as it may, one may carry on and in fact obtain all odd-dimensional columns of the Kitaev table in this way, in precisely the same manner as we did in \cref{sec:bulk insulators classification}.
	
	We thus phrase, without proof, the following
	\begin{thm}
		At each fixed $N$, for $d\in2\NN+1$, for any $\Sigma\in\mathrm{AZ}$, the path-connected components of $\calI_{N,B,\Sigma}$ considered with the subspace topology associated with the operator norm topology, agree with the corresponding set appearing in the odd-dimensional columns of \cref{table:Kitaev}. 
		
		That is, now bulk insulators are defined as in the foregoing paragraph, using the particular choice of compact-commutator locality and bulk-insulator with the choice of $\Lambda$ as in \cref{eq:Dirac projection}.
	\end{thm}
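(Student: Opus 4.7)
The plan is to mirror the proof of \cref{thm:main 1d insulators classification theorem} from \cref{sec:bulk insulators classification} with minimal changes, substituting the higher-dimensional locality projection $\Lambda := \frac{1}{2}(\Id+\sgn(D))$ for the one-dimensional choice $\chi_\NN(X)$. The crucial observation is that all the abstract machinery of \cref{sec:local unitaries classification,sec:local SA unitaries classification} was developed for a generic non-trivial self-adjoint projection $\Lambda$, and is insensitive to its concrete realization; the only external hypothesis invoked is \cref{ass:real and quaternionic structures are hyperlocal} (hyper-locality of the real/quaternionic structures). Hence the logical skeleton transfers verbatim; what needs checking is essentially compatibility of the AZ symmetry data with the Dirac projection.

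First I would verify hyper-locality of the symmetries. Since $D=\sum_i X_i\otimes \gamma_i$, one has $[\Lambda,\Theta]=[\Lambda,\Xi]=[\Lambda,\Pi]=0$ provided the anti-unitary operators $C,J$ and the unitary chirality $\Pi$ commute with every Clifford generator $\gamma_i$ (or, for $\Xi$, with $D$ up to the standard anti-commutation with $H$). In the conventional choice $\Pi=\Id\otimes\sigma_3$ on a factored internal space, this means selecting a Clifford representation acting on the complementary factor, which is straightforward in odd $d$; compatibility with $\Theta,\Xi$ amounts to a standard Atiyah--Bott--Shapiro-type matching of Clifford data with real/quaternionic structures. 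Next, \cref{lem:flat Hamiltonians are strong deformation retraction} transfers unchanged since it uses only the spectral gap and the fact that continuous functional calculus on normals preserves $\calL$ (\cref{lem:continuous functional calculus of normals is local}); thus it suffices to classify $\calI^\flat_{N,B,\Sigma}$.

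For the non-chiral AZ classes, the identification $\calI^\flat_{N,B,\Sigma}\cong\LamNTSAU_\FF$ via \cref{table:AZ non-chiral classes and symmetries of projections} is purely algebraic and continues to hold; then \cref{thm:classification of RCH LamNT SAUs,thm:classification of iR iH LamNT SAUs} deliver the entries $0,0,0,\ZZ_2,0$ appearing in the odd columns of \cref{table:Kitaev}. For the chiral classes, since $[\Pi,\Lambda]=0$, the projection $\Lambda$ is block-diagonal in the $\pm1$ eigenspaces of $\Pi$, so a chiral Hamiltonian $H=\begin{bmatrix}0 & S^\ast\\ S & 0\end{bmatrix}$ is local iff $S$ is, and \cref{eq:correspondence between chiral flat hamiltonian and off-diagonal sub-block} again reduces the classification to $\polar(S)\in\calU_\FF^\calL$ on the positive-chirality subspace equipped with the restricted $\Lambda$. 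The analogue of \cref{lem:chiral insulators are automatically bulk non trivial} also goes through using the $\Set{\Pi,H}=0$ spectral symmetry. \cref{thm:classification of RCH local unitaries,thm:star-real local unitaries,thm:star-quaternionic local unitaries} then yield the remaining entries according to \cref{table:AZ chiral classes and symmetries of unitaries}.

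The main obstacle I expect is precisely the hyper-locality verification alluded to above: whereas in one dimension $\Lambda=\chi_\NN(X)$ acted trivially on $\CC^N$ so commutation with the internal symmetries was automatic, in odd $d>1$ the Dirac projection genuinely mixes spatial and internal degrees of freedom through the $\gamma_i$, and one must solve a combinatorial system of mutually compatible (anti-)commutation relations among $\Theta,\Xi,\Pi$ and $\gamma_1,\dots,\gamma_d$. A secondary subtlety is that the physical interpretation of $\Lambda$-non-triviality as a ``bulk'' condition, which in $d=1$ encoded the genuine absence of a domain wall, is no longer geometric in higher dimensions, and one must be content with using it as a purely topological nondegeneracy condition -- this is acknowledged by the authors themselves and warrants the caveat that these higher-dimensional classifications, while numerically reproducing \cref{table:Kitaev}, leave the physical significance of the chosen locality up for reinterpretation.
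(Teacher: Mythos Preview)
Your skeleton transfers the abstract machinery correctly, but there is a genuine gap in the last step: you claim that the non-chiral classes yield $0,0,0,\ZZ_2,0$ and the chiral classes yield the entries of \cref{table:AZ chiral classes and symmetries of unitaries}, i.e., exactly the $d=1$ column again. But the odd-dimensional columns of \cref{table:Kitaev} are \emph{different} from one another for the eight real classes: for instance in $d=3$ one has $\mathrm{DIII}\mapsto\ZZ$, $\mathrm{AII}\mapsto\ZZ_2$, $\mathrm{CI}\mapsto\ZZ$, none of which match $d=1$. Your argument as written would reproduce the $d=1$ column in every odd dimension, which is false.

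The paper itself states this theorem \emph{without proof} and identifies precisely this as the missing ingredient: ``explaining how the dimensions cause the symmetry classes to shift\ldots\ To explain this shift one has to allow $\Theta$ and $\Xi$ to act non-trivially on the Clifford space.'' The point is that your hyper-locality step is not an innocent compatibility check. In order for $\Theta,\Xi$ to commute with $\Lambda=\frac12(\Id+\sgn(D))$ they must intertwine with the Clifford generators $\gamma_1,\dots,\gamma_d$, and the sign of the square of the resulting anti-unitary on the relevant subspace is governed by the real/quaternionic structure of the Clifford module, which depends on $d\bmod 8$. Thus a physical class with, say, $\Theta^2=+\Id$ in $d=1$ may be forced into an effective $\FF=\HH$ or $\FF=\star\HH$ problem in $d=3$. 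You gesture at this (``Atiyah--Bott--Shapiro-type matching'') but then ignore its consequence; the matching is not a technicality to be disposed of but is exactly what produces the four-orbit shuffle $d\mapsto d+2$ among the real classes. Without carrying it out, the conclusion does not follow.
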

	To prove this theorem, given \cref{sec:bulk insulators classification} and the above paragraph, the missing part is explaining how the dimensions cause the symmetry classes to shift which is a shadow of the K-theoretic identity $K_i(S^2 \calA) = K_{i+2}(\calA)$ where $S$ is the suspension of a C-star algebra. The shift does not mix the chiral and non-chiral classes, and furthermore, classes A and AIII are fixed by the shift. So for either the chiral or non-chiral classes, there is a four orbit shuffle that happens as $d\mapsto d+2$. To explain this shift one has to allow $\Theta$ and $\Xi$ to act non-trivially on the Clifford space.
	
	We avoid doing so here because ultimately, we feel that the notion of locality and bulk-insulator derived from this choice of $\Lambda$ is physically contrived. Yes, one could take the point of view that locality and the gap condition may be any sufficiently strong criterion so that the indices are well-defined. But whereas in one-dimension this still made sense with respect to physical real space, in higher-dimensions, we simply cannot find a way to justify this particular choice of $\Lambda$ locality and $\Lambda$ non-triviality. We thus postpone the higher dimensional problem to future work.

	\bigskip
	\bigskip
	\noindent\textbf{Acknowledgments.} 
	We are  indebted to Gian Michele Graf and Michael Aizenman for stimulating discussions.
	\bigskip
	\appendix

	\section{The Atiyah-Singer $\ZZ_2$ index theory}\label{sec:Atiyah-Singer Z2 index theory}
	
	The material in this section was first presented by Atiyah and Singer in \cite{AtiyahSinger1969}. Different proofs appeared in \cite{Schulz-Baldes2015,FSSWY22} but for the sake of completeness we include a short presentation of the theory here, also since the context is somewhat more abstract than the $\Theta$-odd analysis which was presented in the appendix \cite{FSSWY22}.
	
	\begin{lem}[An explicit Diudonn\'e]\label{lem:strengthening of Diudonne}
		Let $T\in\mathcal{F}\left(\mathcal{H}\right)$. Then if $S\in B_{\norm{G}^{-1}}\left(T\right)$
		where $G$ is any parametrix of $T$ then $S\in\mathcal{F}\left(\mathcal{H}\right)$
		too, and 
		\eq{
			\dim\left(\ker\left(S\right)\right)=\dim\left(\ker\left(T\right)\right)-\dim\left(\im\left(\Schur\right)\right)
		}
		where $\Schur:\ker\left(T\right)\to\im\left(T\right)^{\perp}$
		is the Schur-complement of $S$ in the $T$-decomposition, i.e., 
		\eq{
			\Schur:=S_{CA}-S_{CB}\left(S_{DB}\right)^{-1}S_{DA}
		}
		with $A:=\ker\left(T\right)$, $B:=\ker\left(T\right)^{\perp}$, $C:=\im\left(T\right)^{\perp}$,
		$D:=\im\left(T\right)$.
	\end{lem}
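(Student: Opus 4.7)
The plan is to combine two ingredients: the openness of $\calF(\calH)$ in the operator norm (which yields $S\in\calF(\calH)$) and an explicit block-matrix factorization of $S$ with respect to the decomposition $\ker(T)\oplus\ker(T)^\perp\to\im(T)^\perp\oplus\im(T)$ that isolates the Schur complement $\Schur$. First I would verify $S\in\calF(\calH)$ by writing $GS = GT + G(S-T) = \Id + K + G(S-T)$ with $K\in\calK$; the bound $\norm{G(S-T)}<1$ makes $\Id + G(S-T)$ invertible, so $GS$ is a compact perturbation of an invertible operator, hence Fredholm, and the symmetric computation with $SG$ together with Atkinson's theorem gives $S\in\calF(\calH)$.

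Next I would show that the block $S_{DB}:\ker(T)^\perp\to\im(T)$ is invertible. Since $T_{DB}$ is an isomorphism between these spaces and $\norm{S_{DB}-T_{DB}}\leq\norm{S-T}<\norm{G}^{-1}$, picking $G$ to be the canonical parametrix (for which $\norm{G}=\norm{T_{DB}^{-1}}$) gives $\norm{T_{DB}^{-1}}\,\norm{S_{DB}-T_{DB}}<1$, so $S_{DB} = T_{DB}(\Id+T_{DB}^{-1}(S_{DB}-T_{DB}))$ is invertible by a Neumann series.

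With $S_{DB}$ invertible I would carry out elementary block row and column operations. Left-multiplication of $S$ by the invertible block operator $\begin{bmatrix}\Id & -S_{CB}S_{DB}^{-1}\\ 0 & \Id\end{bmatrix}$ followed by right-multiplication by $\begin{bmatrix}\Id & 0\\ -S_{DB}^{-1}S_{DA} & \Id\end{bmatrix}$ transforms $S$ into the block-diagonal operator $\Schur\oplus S_{DB}$. Left-multiplication by an invertible operator preserves the kernel and right-multiplication by an invertible operator preserves the kernel dimension, so $\dim\ker S = \dim\ker\Schur + \dim\ker S_{DB} = \dim\ker\Schur$. Since $\Schur$ is a linear map between the finite-dimensional spaces $\ker(T)$ and $\im(T)^\perp$, the rank--nullity theorem then yields $\dim\ker\Schur = \dim\ker T - \dim\im\Schur$, which is the desired identity.

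The main obstacle I anticipate is the subtle parametrix issue: the statement is phrased for \emph{any} parametrix $G$, but invertibility of $S_{DB}$ as above genuinely requires $\norm{S-T}<\norm{T_{DB}^{-1}}^{-1}$, which matches $\norm{G}^{-1}$ only when $G$ is the canonical parametrix. Either the intended reading is that one works with this canonical choice, or one should supplement the argument with the comparison $\norm{G}\geq\norm{T_{DB}^{-1}}$ valid for every parametrix---a point worth addressing explicitly in the write-up.
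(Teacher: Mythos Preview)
Your proposal is correct and follows essentially the same approach as the paper: block decomposition with respect to $\ker(T)\oplus\ker(T)^\perp\to\im(T)^\perp\oplus\im(T)$, invertibility of $S_{DB}$ from the smallness of $\norm{S-T}$, an LDU factorization (your elementary row and column operations are exactly the paper's invertible $J_1,J_2$), and rank--nullity on $\Schur$. Your explicit Atkinson argument for $S\in\calF(\calH)$ and your caveat about the ``any parametrix'' bound are in fact more careful than the paper, which simply asserts without justification that the bound $\norm{G}^{-1}$ suffices for the invertibility of $S_{DB}$.
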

	
	\begin{proof}
		Decomposing $\mathcal{H}= A\oplus B= C\oplus D$ we find $T:A\oplus B\to C\oplus D$
		is written in block-operator form as 
		\eq{
			T=\begin{bmatrix}0 & 0\\
				0 & T_{DB}
			\end{bmatrix}
		}
		with $T_{DB}:B\to D$ a vector space isomorphism, and we may also
		decompose $S$ as $S:A\oplus B\to C\oplus D$ in block operator form
		to get 
		\eq{
			S = \begin{bmatrix}S_{CA} & S_{CB}\\
				S_{DA} & S_{DB}
			\end{bmatrix}\,.
		}
		Now if $\norm{S-T}$ is sufficiently small then $\norm{S_{DB}-T_{DB}}$
		is sufficiently small so that $S_{DB}$ is also invertible (this may
		be verified to be true with the upper bound $\norm{G}^{-1}$), which guarantees that $\Schur$ exists. 
		
		Using an LDU decomposition we may write 
		\eq{
			S=J_{1}\left(\Schur\oplus S_{DB}\right)J_{2}
		}
		where $J_{1},J_{2}$ are two invertible operators, and as such 
		\eq{
			\dim\left(\ker\left(S\right)\right)  =  \dim\left(\ker\left(\Schur\right)\right)+\underbrace{\dim\left(\ker\left(S_{DB}\right)\right)}_{=0}\,.
		}
		Now apply rank-nullity on $\Schur:A\to C$ to get 
		\eq{
			\dim\left(\ker\left(\Schur\right)\right)+\dim\left(\im\left(\Schur\right)\right)  =  \dim\left(A\right)=\dim\left(\ker\left(T\right)\right)
		}
		which yields the result.
	\end{proof}

	\begin{thm}[Atiyah-Singer $\ZZ_2$ index]\label{thm:ASZ2 index}
		If $F\in\calF_{\star\HH}\left(\mathcal{H}\right)$ (in the sense of \cref{sec:locality and symmetry}), i.e., $J:\calH\to\calH$ is an anti-unitary that squares to $-\Id$ and we have
		\eq{
			FJ=J F^{\ast}
		}
		then 
		\eq{
			\findex_{2}\left(F\right)\equiv\dim\left(\ker\left(F\right)\right)\mod2\in\mathbb{Z}_{2}
		}
		is well-defined, in the sense that if $G\in\calF_{\star\HH}\left(\mathcal{H}\right)$
		and $\norm{F-G}$ is sufficiently small, then
		\eq{
			\findex_{2}\left(G\right) = \findex_{2}\left(F\right)\,.
		}
	\end{thm}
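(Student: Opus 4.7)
The plan is to combine the explicit Dieudonn\'e lemma with the classical fact that skew-symmetric bilinear forms on finite-dimensional complex vector spaces have even rank, transported through the anti-unitary $J$. Applying the Dieudonn\'e lemma to $F$ and any $G \in \calF_{\star\HH}$ with $\norm{G - F}$ sufficiently small yields
\[ \dim\ker G \;=\; \dim\ker F \;-\; \dim\im Z, \]
where $Z : A \to C$ is the Schur complement of $G$ in the block decomposition induced by $A := \ker F$, $B := A^\perp$, $C := \ker F^* = (\im F)^\perp$, $D := \im F$. Reducing modulo $2$, the problem therefore collapses to showing $\dim\im Z \in 2\NN$.

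Next I would exploit the $\star\HH$ structure. From $FJ = JF^*$ one has $F(Jw) = JF^*w$, so $J(\ker F^*) \subseteq \ker F$; combined with anti-unitarity of $J$ and $J^2 = -\Id$, this upgrades to anti-unitary bijections $J|_A : A \to C$ and $J|_C : C \to A$ satisfying $J|_C J|_A = -\Id_A$ and $J|_A J|_C = -\Id_C$. Via these I define the anti-linear endomorphism $\tilde Z := J|_C \circ Z : A \to A$. The main technical step, and the main obstacle of the proof, is to deduce from $GJ = JG^*$ together with the Schur complement formula $Z = G_{CA} - G_{CB} G_{DB}^{-1} G_{DA}$ that $\tilde Z$ is anti-linear skew-adjoint, i.e., $\langle \tilde Z u, v\rangle = -\overline{\langle u, \tilde Z v\rangle}$ for all $u, v \in A$. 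Unpacking via $(J|_C)^{-1} = -J|_A$, this reduces to the block identity $Z^* J|_A = J|_C Z$ as anti-linear maps $A \to A$, which emerges by projecting the relation $GJ = JG^*$ with $P_C$ on the left and $P_A$ on the right, together with the analogous $B \to D$ shadow $J|_B G_{DB}^{-1} = (G_{DB}^*)^{-1} J|_B$ that makes the quadratic Schur correction $G_{CB} G_{DB}^{-1} G_{DA}$ transform correctly under $J$. The delicate bookkeeping is caused by the fact that $\calH = A \oplus B = C \oplus D$ are two distinct orthogonal decompositions: cross-contributions of $u \in A$ to $D$ and of $v \in C$ to $B$ have to be chased and are exactly what the $J|_B$-conjugation identity kills.

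Finally, once $\tilde Z$ is anti-linear skew-adjoint on the finite-dimensional space $A$, the form $b(u,v) := \langle u, \tilde Z v\rangle$ is $\CC$-bilinear (anti-linearity of $\tilde Z$ cancels the conjugate-linearity of $\langle\cdot,\cdot\rangle$ in its second slot) and skew-symmetric (the identity $b(v,u) = -b(u,v)$ follows from $\tilde Z^* = -\tilde Z$ via $\overline{\langle Tu,v\rangle} = \langle v, Tu\rangle$). The rank of $b$ as a bilinear form equals $\dim\im \tilde Z = \dim\im Z$, and by the classical normal form for skew-symmetric bilinear forms on finite-dimensional complex vector spaces this rank is even. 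Plugging back into the Dieudonn\'e identity gives $\dim\ker G \equiv \dim\ker F \pmod 2$, so $\findex_2(G) = \findex_2(F)$; norm continuity of $\findex_2$ with values in the discrete group $\ZZ_2$ then follows at once.
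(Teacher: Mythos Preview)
Your proof is correct and shares the paper's overall architecture: apply the explicit Dieudonn\'e lemma to reduce to showing $\dim\im Z$ is even, and verify that the Schur complement $Z$ inherits the $\star\HH$ structure from $G$ (your identity $Z^\ast J|_A = J|_C Z$ is equivalent to the paper's $ZJ = JZ^\ast$, obtained by conjugating with $J$). The genuine difference lies in the final parity step. The paper argues directly on $\im Z$: it diagonalizes the self-adjoint operator $ZZ^\ast$ on $\im Z$ and shows that each eigenspace carries a Kramers-like pairing $\varphi \mapsto ZJ\varphi$, forcing even dimension. You instead transport $Z$ to an anti-linear skew-adjoint endomorphism $\tilde Z = J|_C Z$ of the finite-dimensional space $\ker F$, observe that $b(u,v) = \langle u, \tilde Z v\rangle$ is a skew-symmetric $\CC$-bilinear form of rank $\dim\im Z$, and invoke the classical fact that such forms have even rank. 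Your route is a bit more conceptual and avoids the eigenspace bookkeeping; the paper's route is more self-contained and makes the Kramers mechanism explicit. One small remark: your phrasing ``conjugate-linearity of $\langle\cdot,\cdot\rangle$ in its second slot'' uses the math convention while the paper uses the physics convention, so in the paper's notation $b$ comes out anti-bilinear rather than bilinear---but taking $\bar b$ fixes this and the rank argument is unaffected.
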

	
	\begin{proof}
		Using the same definitions as in the proof above, we have 
		\eq{
			\dim\left(\ker\left(G\right)\right)  =  \dim\left(\ker\left(F\right)\right)-\dim\left(\im\left(\Schur\right)\right)
		}
		with $\Schur:\ker\left(F\right)\to\im\left(F\right)^{\perp}$
		the Schur complement, given by 
		\eq{
			\Schur  =  G_{CA}-G_{CB}\left(G_{DB}\right)^{-1}G_{DA}\,.
		}
		Since $F$ is $\star$-quaternionic with respect to $J$, then \eq{J \ker F= (\im F)^\perp,\quad J(\ker F)^\perp = \im F\,.} 
		In particular, the expressions \eq{G_{CA}J=JG_{CA}^*,\quad G_{DB}J=JG_{DB}^*,\quad G_{CB}J=JG_{DA}^*,\quad G_{DA}J=JG_{CB}^*} make sense and follow directly from $G$ being $\star$-quaternionic. It follows that \eq{\Schur J=J \Schur^*}
		
		We now argue that $\im\Schur$ is even-dimensional. Let us view $\Schur:(\ker \Schur)^\perp \to \im \Schur$ as an invertible operator that is $\star$-quaternionic with respect to $J$. Since $\Schur\Schur^*:\im\Schur\to\im\Schur$ is self-adjoint, the space $\im\Schur$ decomposes into eigen-subspaces from $\Schur\Schur^*$. Let $E$ be one of the eigen-subspace and take $\vf\in E$ and write $\Schur\Schur^*\vf=\lambda\vf$. Clearly $\widetilde{\vf}\neq 0$ since $\Schur$ and $J$ are both linear invertible. Let $\widetilde{\vf}:=\Schur J\vf\in\im \Schur$. We have \eq{\ip{\vf}{\Schur J\vf}=\ip{\Schur ^*\vf}{J\vf}=\ip{J^2\vf}{J\Schur^*\vf}=-\ip{\vf}{\Schur J\vf}\,.} 
		Thus $\ip{\vf}{\widetilde{\vf}}=0$. Also \eql{\Schur\Schur^*\widetilde{\vf}=\Schur\Schur^*\Schur J\vf=\Schur J\Schur\Schur^*\vf=\Schur J\lambda\vf=\lambda \widetilde{\vf}\,.} 
		Thus $\widetilde{\vf}\in E$, and moreover, we have \eq{\Schur J\widetilde{\vf}=\Schur J\Schur J=J^2\Schur\Schur^*\vf=-\lambda\vf\,.} 
		Thus the span of $\Set{\vf,\widetilde{\vf}}$ is invariant under the action of $\Schur J$. 
		
		Pick $\psi$ in the orthogonal of the span of $\Set{\vf,\widetilde{\vf}}$ in $E$. We can form $\widetilde{\psi}:=\Schur J \psi$ similar as before, where we have $\ip{\psi}{\widetilde{\psi}}=0$ and $\widetilde{\psi}\in E$. In particular $\ip{\eta}{\widetilde{\psi}}=0$ for $\eta$ in the span of $\Set{\vf,\widetilde{\vf}}$ since \eq{\ip{\eta}{\widetilde{\psi}}=\ip{\eta}{\Schur J\psi}=\ip{\Schur^*\eta}{J\psi}=\ip{J^2\psi}{J\Schur^*\eta}=-\ip{\psi}{\Schur J\eta}=0\,.} 
		Thus the eigen-subspace $E$ is even-dimensional. This implies that $\im \Schur$ is even-dimensional.

	\end{proof}

	We may also recast the above theorem somewhat differently as follows
	\begin{thm}\label{thm:ASZ2 index v2}
		If $F\in\Schur\calF_{\ii\RR}\left(\mathcal{H}\right)$ as in \cref{sec:locality and symmetry}, i.e., $F$ is a self-adjoint Fredholm with $C$ is a real structure on $\calH$, such that \eq{\Set{F,C}=FC+CF=0} then \eq{\findex_{2}\left(F\right)\equiv\dim\left(\ker\left(F\right)\right)\mod2\in\mathbb{Z}_{2}} is well-defined, in the sense that if $G\in\Schur\calF_{\ii\RR}\left(\mathcal{H}\right)$
		and $\norm{F-G}$ is sufficiently
		small, then 
		\eq{\findex_2(G)=\findex_2(F)\,.}
	\end{thm}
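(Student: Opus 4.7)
The plan is to mirror the argument of \cref{thm:ASZ2 index}, with self-adjointness of $F$ playing the role previously played by the $\star$-quaternionic structure, and invoking the same Schur-complement calculus of \cref{lem:strengthening of Diudonne}. Fix $F$ as in the hypothesis, and let $G$ be sufficiently norm-close to $F$ in the sense of \cref{lem:strengthening of Diudonne}; then $G$ is Fredholm and
\eq{\dim\ker G = \dim\ker F - \dim\im Z\,,}
where $Z$ is the Schur complement of $G$ in the $F$-decomposition. Because $F=F^*$, one has $(\im F)^\perp = \ker F$, so $Z$ is an endomorphism of the finite-dimensional space $K := \ker F$. The problem thus reduces to showing that $\dim\im Z$ is even.

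The key observation is that $Z$ inherits both the self-adjointness and the anti-commutation with $C$ from $G$. Note first that $C$ is block-diagonal in the decomposition $\calH = K \oplus K^\perp$: the kernel is $C$-invariant since $F\psi=0$ forces $FC\psi = -CF\psi = 0$, and the orthogonal complement of a $C$-invariant subspace is $C$-invariant by anti-unitarity. Hence $\{G_{ij},C\}=0$ holds block-by-block in the Schur formula
\eq{Z = G_{KK} - G_{KK^\perp}\bigl(G_{K^\perp K^\perp}\bigr)^{-1}G_{K^\perp K}\,.}
Inverting $G_{K^\perp K^\perp}C = -CG_{K^\perp K^\perp}$ yields $\{(G_{K^\perp K^\perp})^{-1},C\}=0$; plugging this into the Schur formula and collecting signs gives $ZC = -CZ$. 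Self-adjointness of $Z$ is immediate from $G=G^*$, which forces $G_{KK^\perp}^* = G_{K^\perp K}$, so the Schur formula equals its own adjoint.

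The final step is the finite-dimensional linear-algebra fact that a self-adjoint operator $Z$ on a finite-dimensional Hilbert space anti-commuting with a real structure $C$ has an even-dimensional range. Diagonalize $Z$; for any eigenvalue $\lambda$, the identity $ZC\psi = -CZ\psi = -\lambda C\psi$ shows that $C$ restricts to an anti-linear bijection $E_\lambda \to E_{-\lambda}$ (self-inverse thanks to $C^2 = \Id$). Since anti-linear bijections preserve complex dimension, $\dim E_\lambda = \dim E_{-\lambda}$ for every $\lambda$, and hence
\eq{\dim\im Z = \sum_{\lambda>0}\bigl(\dim E_\lambda + \dim E_{-\lambda}\bigr) = 2\sum_{\lambda>0}\dim E_\lambda \in 2\NN\,.}
Combined with the Dieudonné dimension identity this gives $\dim\ker G \equiv \dim\ker F \pmod 2$, which is precisely the norm-stability of $\findex_2$ asserted by the theorem.

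The main (and essentially only) obstacle is the Schur-complement bookkeeping of the second paragraph, in particular verifying that the anti-commutation with $C$ passes cleanly through the inversion of the invertible block $G_{K^\perp K^\perp}$; everything else is forced either by self-adjointness or by the elementary spectral theory of finite-dimensional symmetric operators, so no new machinery beyond \cref{lem:strengthening of Diudonne} and the basic structural features of real structures is required.
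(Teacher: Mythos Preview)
Your argument is correct and follows the same overall architecture as the paper: reduce via \cref{lem:strengthening of Diudonne} to the Schur complement $Z$ on $\ker F$, verify that $Z$ is self-adjoint with $ZC=-CZ$, and show $\dim\im Z$ is even. The difference lies only in this last step. The paper argues by decomposing $\im Z$ into eigenspaces of $Z^2$ and, within each such eigenspace $E$, building orthogonal pairs $(\vf,\,ZC\vf)$ in the spirit of a Kramers-pair construction. Your route is more direct: you diagonalize $Z$ itself (which self-adjointness permits) and observe that $C$ gives an anti-linear bijection $E_\lambda\to E_{-\lambda}$, forcing $\dim E_\lambda=\dim E_{-\lambda}$ and hence an even total for the nonzero eigenspaces. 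Your version is shorter and uses nothing beyond the spectral theorem for finite self-adjoint matrices; the paper's version, while slightly heavier here, has the advantage of being structurally parallel to the $\star\HH$ proof of \cref{thm:ASZ2 index}, where $Z$ is not self-adjoint and one genuinely needs to work with $ZZ^\ast$.
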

	
	\begin{proof}
		
		Since $F$ is self-adjoint, then $\ker F=(\im F)^\perp=:A$ and $(\ker F)^\perp=\im F=:B$. Decompose $G$ in $A\oplus B$, we write \eq{G=\begin{bmatrix}G_{AA} & G_{AB} \\ G_{AB}^* & G_{BB}\end{bmatrix}} Since $F:(\ker F)^\perp\to\im F$ is invertible and $\norm{G-F}$ small, then $G_{BB}$ is invertible. Define the Schur complement $\Schur:A\to A$ as \eq{\Schur = G_{AA} - G_{AB} G_{BB}^{-1} G_{AB}^*} Since $G$ is self-adjoint, then $\Schur$ is, too. Since $GC=-CG$, then the subspaces $A,B$ are both invariant under the action of $C$. Thus the expressions \eq{G_{AA}C=-CG_{AA},\quad G_{BB}C=-C_{BB},\quad G_{AB}C=-CG_{AB}} make sense and hold. It follows that \eq{\Schur C=-C\Schur\,.} 
		Similar to \cref{lem:strengthening of Diudonne}, one also has \eq{\dim(\ker G)= \dim(\ker F)-\dim(\im\Schur)\,.}
		
		We argue that $\im \Schur$ is even-dimensional. Since $\Schur$ is self-adjoint, $(\ker \Schur)^\perp=\im \Schur=:V$ which is finite-dimensional. View $\Schur:V\to V$ as invertible operator. Since $\Schur^2$ is self-adjoint, the space $V$ admits an eigen-subspace decomposition with respect to $\Schur^2$. Let $E$ be one of the eigen-subspace and pick $\vf\in E$ where $\Schur^2\vf=\lambda \vf$. Let $\widetilde{\vf}:=\Schur C\vf$. Note $C\vf\in V$ since $\Schur$ is $C$-real. Thus $\widetilde{\vf}$ is well-defined. Now \eq{\ip{\vf}{\widetilde{\vf}}=\ip{\vf}{\Schur C\vf}=\ip{\Schur \vf}{C\vf}=\ip{\vf}{C\Schur\vf}=-\ip{\vf}{\Schur C\vf}=-\ip{\vf}{\widetilde{\vf}}} and hence $\ip{\vf}{\widetilde{\vf}}=0$. Also \eq{\Schur^2 \widetilde{\vf}=\Schur^2(\Schur C\vf)=\Schur C\Schur^2\vf=\lambda \Schur C\vf=\lambda\widetilde{\vf}\,.} 
		Thus $\widetilde{\vf}\in E$. Moreover, we have \eq{\Schur C\widetilde{\vf}=\Schur C\Schur C\vf=C^2\Schur^2\vf=\lambda\vf} and hence the span of $\Set{\vf,\widetilde{\vf}}$ is invariant under the action of $\Schur C$.
		
		Pick $\psi$ in the orthogonal complement of the span of $\Set{\vf,\widetilde{\vf}}$ in $E$. 
		Similarly construct $\widetilde{\psi}=\Schur C\psi$ such that $\ip{\psi}{\widetilde{\psi}}=0$ and $\widetilde{\psi}\in E$. In particular $\ip{\eta}{\widetilde{\psi}}=0$ for $\eta$ in the span of $\Set{\vf,\widetilde{\vf}}$ since \eq{\ip{\eta}{\widetilde{\psi}}=\ip{\eta}{\Schur C\psi}=\ip{\Schur\eta}{C\psi}=\ip{C^2\psi}{C\Schur\eta}=\ip{\psi}{-\Schur C\eta}=0\,.} 
		Thus the eigen-subspace $E$ is even-dimensional. This implies that $\im \Schur$ is even-dimensional.
	\end{proof}
	
	\begin{cor}\label{cor:compact perturbation preserve Z2 index} The Atiyah-Singer $\ZZ_2$ index is stable under symmetric compact perturbations:
		\begin{enumerate}
			\item If $F\in \calF_{\star\HH}$ and $K\in\calK_{\star\HH}$, then $\findex_2 (F+K)= \findex_2 F $.
			\item If $F\in\calF_{\ii\RR}^{\mathrm{sa}}$ and $K\in\calK_{\ii\RR}^{\mathrm{sa}}$, then $\findex_2(F+K)=\findex_2 F$.
		\end{enumerate}
	\end{cor}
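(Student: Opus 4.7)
The plan is to prove both statements simultaneously via a straight-line homotopy argument, leveraging the local constancy of $\findex_2$ already established in \cref{thm:ASZ2 index,thm:ASZ2 index v2}. First I would introduce the affine path $[0,1] \ni t \mapsto F_t := F + tK$ and verify that it remains inside the appropriate symmetric Fredholm class at every time. Fredholmness of $F_t$ is immediate since $F_t - F = tK$ is compact and the Fredholms are invariant under compact perturbations. The symmetry conditions $AJ = JA^\ast$ (respectively $AC = -CA$ together with $A = A^\ast$) are linear in $A$ with real coefficients, so they pass through the sum $F + tK$ for any real $t$; e.g.\ in the $\star\HH$ case, $F_t J = FJ + tKJ = JF^\ast + tJK^\ast = J(F+tK)^\ast = J F_t^\ast$. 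Finally the map $t \mapsto F_t$ is norm-continuous since $\norm{F_t - F_s} = |t-s|\norm{K}$.

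With the path in hand, I would apply the local-constancy clause of \cref{thm:ASZ2 index} (respectively \cref{thm:ASZ2 index v2}) pointwise along $[0,1]$: for each $t_0 \in [0,1]$, there is some open neighborhood $U_{t_0} \subseteq [0,1]$ of $t_0$ on which $\norm{F_t - F_{t_0}}$ is smaller than the relevant parametrix-norm threshold, and hence $\findex_2(F_t) = \findex_2(F_{t_0})$ throughout $U_{t_0}$. Thus $t \mapsto \findex_2(F_t)$ is a locally constant $\ZZ_2$-valued map on the connected space $[0,1]$, hence constant. Evaluating at the endpoints gives $\findex_2(F + K) = \findex_2(F_1) = \findex_2(F_0) = \findex_2(F)$, which is the desired statement in both cases.

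The only genuine content is verifying that the straight-line path stays within the symmetric Fredholm class, which reduces to the linearity of the symmetry constraints in real scalars; the rest is a routine connectedness argument from local constancy. Accordingly, I do not anticipate any serious obstacle---this corollary is essentially a packaging of the theorems it cites.
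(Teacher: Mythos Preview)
Your proposal is correct and is exactly the approach the paper takes: the paper's proof reads in its entirety ``Consider a straight-line homotopy from $F$ to $F+K$ and use \cref{thm:ASZ2 index} and \cref{thm:ASZ2 index v2}.'' You have simply filled in the routine verifications that the path stays in the symmetric Fredholm class and that local constancy on a connected interval yields global constancy.
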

	\begin{proof}
		Consider a straight-line homotopy from $F$ to $F+K$ and use \cref{thm:ASZ2 index} and \cref{thm:ASZ2 index v2}.
	\end{proof}
	
	\section{A child's garden of homotopies}
	In this section we employ the same notational conventions as in \cref{sec:locality and symmetry}. We are concerned with homotopies of unitaries and self-adjoint projections \emph{without} locality constraints. 
	\subsection{Equivariant homotopies of unitaries}
	The following theorem was presented in \cite{Kuiper1965}. The proof which was outlined in \cref{eq:Kuiper's path} applies only to the case $\FF=\CC$, the other two cases may be found in Kuiper's original paper.
	\begin{thm}[Kuiper]\label{thm:Kuiper} For any $\FF\in\Set{\CC,\RR,\HH}$ and any invertible operator $A\in\calG_\FF$, there is a continuous path from $A$ to $\Id$ which passes within $\calG_\FF$. If $A$ is unitary the path passes through unitaries.
	\end{thm}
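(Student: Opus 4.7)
The plan is a two-stage reduction: first reduce from invertibles to unitaries via polar decomposition, then build the path through unitaries using the bounded measurable functional calculus, with an extra equivariant swindle needed for $\FF=\RR,\HH$.

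For the reduction, given $A \in \calG_\FF$, the polar decomposition yields $A = \polar(A)\,|A|$ with $\polar(A) \in \calU_\FF$ and $|A| \in \calB_\FF$ positive invertible, the equivariance of both factors following from \cref{lem:polar part preserve symmetry}. The straight-line path $t \mapsto \polar(A)\bigl((1-t)|A| + t\Id\bigr)$ remains within $\calG_\FF$ since $(1-t)|A| + t\Id \geq \min(\|A^{-1}\|^{-1},1)\,\Id$ for all $t \in [0,1]$, and terminates at $\polar(A) \in \calU_\FF$ when $t=1$. It therefore suffices to connect any $U \in \calU_\FF$ to $\Id$ within $\calU_\FF$.

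For $\FF = \CC$, I would choose any measurable branch $g\colon \bS^1 \to \RR$ with $\ee^{\ii g(\lambda)} = \lambda$ and define $H := g(U)$ via the bounded measurable functional calculus, so that $H$ is bounded self-adjoint with $U = \ee^{\ii H}$. The path $t \mapsto \ee^{\ii(1-t)H}$ is then norm continuous (using $\|\ee^{\ii s H} - \ee^{\ii t H}\| \leq |s-t|\,\|H\|$) and joins $U$ to $\Id$ through $\calU$. This is precisely Kuiper's path \cref{eq:Kuiper's path}.

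For $\FF = \RR,\HH$ with anti-unitary $F \in \Set{C,J}$, the constraint $UF = FU$ forces the spectral measure of $U$ to obey $F E(B) F^{-1} = E(\bar B)$, so choosing $g$ antisymmetric under conjugation ($g(\bar\lambda) = -g(\lambda)$, e.g.\ the principal branch) yields $H F = -F H$ and hence makes every $\ee^{\ii(1-t)H}$ commute with $F$. The main obstacle is the fixed point $\lambda = -1$ of complex conjugation on $\bS^1$: antisymmetry there would force $g(-1) = 0$, which is incompatible with $\ee^{\ii g(-1)} = -1$. I would resolve this by splitting off $V := \ker(U + \Id)$ (an $F$-invariant subspace), running the preceding argument on $V^\perp$, and equivariantly deforming $-\Id_V$ to $\Id_V$ on $V$ by constructing an equivariant complex structure: a bounded $T$ with $T^\ast = -T$, $T^2 = -\Id$, and $[T,F] = 0$ on an $F$-invariant infinite-dimensional subspace containing $V$ (built by pairing an $F$-compatible ONB, and borrowing infinite-dimensional room from $V^\perp$ if $V$ itself is finite-dimensional). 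The half-rotation $t \mapsto \cos(\pi t)\,\Id + \sin(\pi t)\,T$ then connects $\Id$ to $-\Id$ within $\calU_\FF$, and composing all the above yields the desired path. This infinite-dimensional swindle is the technical heart of Kuiper's original argument.
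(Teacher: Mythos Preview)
The paper does not supply its own proof of \cref{thm:Kuiper}; it cites Kuiper's original article for $\FF=\RR,\HH$ and points to the logarithm path \cref{eq:Kuiper's path} for $\FF=\CC$. So there is nothing to compare against directly, and your task is really to give a self-contained argument.

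Your reduction to unitaries via the polar decomposition and the $\FF=\CC$ case are fine. For $\FF=\RR,\HH$ the antisymmetric-logarithm idea is also correct: once $V=\ker(U+\Id)$ is split off, the spectral measure of $U|_{V^\perp}$ assigns no mass to $\{-1\}$, so the value of $g$ there is irrelevant and you obtain $H$ with $HF=-FH$, whence $\ee^{\ii(1-t)H}\in\calU_\FF(V^\perp)$.

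The gap is in the handling of $-\Id_V$ when $\FF=\RR$ and $\dim V$ is odd. You build a complex structure $T$ on an infinite-dimensional $W\supset V$ and invoke the half-rotation $\cos(\pi t)\,\Id+\sin(\pi t)\,T$, but that path connects $\Id_W$ to $-\Id_W$, whereas what you are left with after the logarithm step is $(-\Id_V)\oplus\Id_{V^\perp}$, which is neither. Since $\det(-\Id_V)=-1$ in the odd case, no path within $\calU_\RR(V)$ can work, and a single rotation on a larger $W$ does not bridge the mismatch either. The easy repair is a two-step use of your own tool: first rotate on $V^\perp$ (infinite-dimensional, so a $C$-commuting complex structure exists there) to pass from $(-\Id_V)\oplus\Id_{V^\perp}$ to $-\Id_\calH$, then rotate on all of $\calH$ to reach $\Id$. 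As written, your ``composing all the above'' does not close up. Incidentally, Kuiper's own proof establishes \emph{contractibility} via a rather different Eilenberg swindle; what you have sketched (once repaired) is a simpler route to mere path-connectedness, which is all the paper actually uses.
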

	
	New (to us) is the following $\star$-variant of it:
	\begin{thm}\label{star-RH Kuiper}
		Let $\FF\in\Set{\star\RR,\star\HH}$. Then \eq{\pi_0(\calU_\FF)\simeq \Set{0}}
	\end{thm}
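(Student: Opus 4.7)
The plan is to construct, for any $U \in \calU_{\star\FF}$ with $\FF \in \Set{\star\RR, \star\HH}$ and $F \in \Set{C, J}$ accordingly, an explicit Kuiper-type path $[0,1] \ni t \mapsto e^{\ii t H} \in \calU_{\star\FF}$ from $\Id$ to $U$. Here $H := -\ii \log U$ will be defined via the \emph{bounded measurable} functional calculus using the principal argument $\theta: \bS^1 \to (-\pi, \pi]$, so that $e^{\ii H} = U$ and $H$ is self-adjoint with $\norm{H}\leq \pi$. The existence of such a path follows once we know that $H$ commutes with $F$, and this is essentially the only content of the argument. In contrast to the local analogues \cref{thm:star-real local unitaries,thm:star-quaternionic local unitaries}, no factorization is needed here: we need not preserve any locality along the path, and the measurable calculus simply produces an $H \in \calB_\FF$ which need not lie in $\calL$.

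The crucial step is to verify $F$-invariance of the spectral measure of $U$: writing $E_U$ for the projection-valued spectral measure, I would show $F E_U(\Omega) F^{-1} = E_U(\Omega)$ for every Borel $\Omega \subseteq \bS^1$. This is where the defining relation $UF = FU^\ast$ enters. If $Uv = \lambda v$, then, by antilinearity of $F$ together with $U^\ast v = \bar\lambda v$ (unitarity of $U$),
\begin{equation*}
U F v \;=\; F U^\ast v \;=\; F \bar\lambda v \;=\; \lambda \, F v,
\end{equation*}
so the $\lambda$-eigenspace of $U$ is $F$-invariant. A standard approximation argument via simple functions, combined with the spectral theorem, then extends this to all Borel spectral projections.

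Given this, the operator $H = \int_{\bS^1} \theta(\lambda) \, dE_U(\lambda)$ satisfies $F H F^{-1} = H$: antilinearity of $F$ conjugates the \emph{real} scalar $\theta$ trivially, while the $F$-invariance of $E_U$ handles the projections, so $H \in \calB_\FF$. Finally, each $e^{\ii tH}$ lies in $\calU_{\star\FF}$: using $F \ii = -\ii F$ together with $FH = HF$, a term-by-term expansion of the exponential series gives $F e^{-\ii tH} = e^{\ii t H} F$, equivalently $e^{\ii tH} F = F (e^{\ii tH})^\ast$, which is the $\star\FF$ relation. The resulting path is norm continuous, equals $\Id$ at $t=0$ and $U$ at $t = 1$, proving that $\pi_0(\calU_{\star\FF}) \cong \Set{0}$.

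The only subtle point is verifying the $F$-invariance of $E_U$, where one must track how the antilinearity of $F$ interacts with the functional calculus of $U$; after that the remaining manipulations are routine Kuiper-style bookkeeping. It is worth noting that this clean argument is precisely what breaks down in the local setting, since the measurable logarithm of a local unitary with $\sigma(U)=\bS^1$ generally leaves $\calL$; that obstruction is exactly what makes room for the $\ZZ_2$ invariant captured by \cref{thm:star-quaternionic local unitaries}.
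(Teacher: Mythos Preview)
Your proof is correct. Both your argument and the paper's hinge on the same key observation: since $UF=FU^\ast$ and $F$ is antilinear, one has $F^{-1}f(U)F=\bar f(U)$ for every bounded Borel $f$ on $\bS^1$, so real-valued $f$ produce operators commuting with $F$. You apply this to the principal argument $\theta$ to obtain a self-adjoint $H=\theta(U)$ with $[H,F]=0$, and then the explicit path $t\mapsto e^{\ii tH}$ stays in $\calU_{\star\FF}$ by the antilinearity computation you sketch.

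The paper takes a slightly different route after this shared lemma: it applies the observation to the Borel square root $h(z)=\exp(\ii\arg(z)/2)$, obtaining $h(U)\in\calU_{\star\FF}$, and then writes $U=h(U)\,F\,h(U)^\ast F^{-1}$. It then invokes the ordinary (non-equivariant) Kuiper theorem to deform $h(U)\rightsquigarrow\Id$ within $\calU$, and observes that $U_t:=V_t F V_t^\ast F^{-1}$ remains $\star\FF$ for any path $V_t$ of unitaries. Your approach is more self-contained in that it produces the homotopy explicitly and does not call Kuiper as a black box; the paper's factorization $U=VFV^\ast F^{-1}$ makes the ``conjugation'' structure of $\calU_{\star\FF}$ more visible and foreshadows the conjugation-by-unitary tricks used in the local theorems. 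Either way the substance is the same measurable-functional-calculus step, so the difference is largely cosmetic. One small remark: your eigenvector computation only covers point spectrum; the clean justification of $F E_U(\Omega)F^{-1}=E_U(\Omega)$ is exactly the $F^{-1}f(U)F=\bar f(U)$ identity for $f=\chi_\Omega$, which you should state rather than defer to an ``approximation argument.''
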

	\begin{proof}
		Let $F\in \Set{C,J}$. For bi-variate polynomials $p(z,\bar{z})=\alpha z^n \bar{z}^m$, we have \eq{p(U,U^\ast)F=\alpha U^n (U^\ast)^m F=\bar{\alpha} (U^\ast)^n U^m = F(p(U,U^\ast))^\ast\,.} Thus for a continuous function $f\in C(\sigma(U))$, one has $f(U)\in \calB_\FF$. 
		
		Consider the square root function \eq{h(z)=\exp(\ii\arg(z)/2)\qquad(z\in\CC\setminus\Set{0})} where we take $\arg(z)\in[0,2\pi)$ for concreteness. The function $h$ is bounded measurable on $\sigma(U)$, and clearly there exists a sequence of continuous functions $f_n$ on $\sigma(U)$ that converges point-wise to $h$, and $\|f_n\|_\infty$ is bounded. By the spectral theorem \cite[Theorem VII.2(d)]{Reed_Simon_FA_0125850506}, $f_n(U)$ converges strongly to $h(U)$. Thus $h(U)\in \calB_\FF$. In particular, since $\bar{h}(z)h(z)=1$, then $h(U)\in \calU_\FF$. Write $h(U)=Fh(U)^* F^*$, then \eq{U=h(U)^2=h(U)Fh(U)^* F^*\,.} Use \cref{thm:Kuiper} to construct a continuous path of unitaries $[0,1]\ni t\mapsto V_t$ connecting $h(U)$ to $\Id$, we let $U_t=V_tFV_t^*F^*$. Then \eq{U_t F=V_tFV_t^*=F(FV_t F^* V_t^*)=FU_t^*} Thus $U_t\in\calU_\FF$ and connects $U$ and $\Id$.
	\end{proof}
	
	We will make use of the fact that the polar part preserves symmetry constraints:
	\begin{lem}\label{lem:polar part preserve symmetry}
		Let $\FF\in\Set{\RR,\HH,\star\RR,\star\HH}$. If $A\in\calB_\FF$, then $\polar(A)\in\calB_\FF$.
	\end{lem}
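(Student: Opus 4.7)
The plan is to bypass direct manipulation of the partial isometry $\polar(A)$ by working instead with the regularized approximation $U_\epsilon := A(|A|^2 + \epsilon\Id)^{-1/2}$, whose strong operator limit as $\epsilon\downarrow 0$ is $\polar(A)$ (extended by zero on $\ker A$, consistent with the paper's convention $\ker\polar(A) = \ker A$). Any algebraic relation between $F\in\Set{C,J}$ and $U_\epsilon$ then survives the strong limit, since left and right multiplication by the fixed bounded operator $F$ are strongly continuous. This reduces the equivariance of $\polar(A)$ to an equivariance statement for the continuous functional calculus of $|A|^2$.

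The first step is to translate each symmetry hypothesis into a relation between $F$ and the positive operators $|A|^2$, $|A^\ast|^2$. Using the anti-linear adjoint rule $(AF)^\ast = F^\ast A^\ast$ together with $C^\ast = C$ and $J^\ast = -J$, one finds in the cases $\FF=\RR,\HH$ that $A^\ast F = F A^\ast$ and hence $|A|^2 F = F|A|^2$, while in the cases $\FF=\star\RR,\star\HH$ one finds $A^\ast F = FA$ and hence $|A|^2 F = A^\ast F A^\ast = F A A^\ast = F|A^\ast|^2$. The second step is to lift these intertwinings to $(\cdot + \epsilon\Id)^{-1/2}$. For the plain symmetries this is immediate from the resolvent relation $(|A|^2 + \epsilon\Id)^{-1} F = F (|A|^2 + \epsilon\Id)^{-1}$ combined with the integral representation $x^{-1/2} = \pi^{-1}\int_0^\infty (x+t)^{-1} t^{-1/2}\,\dif t$; the same representation delivers $(|A|^2 + \epsilon\Id)^{-1/2} F = F(|A^\ast|^2 + \epsilon\Id)^{-1/2}$ in the $\star$ cases.

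Substituting into $U_\epsilon$ and taking strong limits concludes the argument. For $\FF=\RR,\HH$ one directly gets $U_\epsilon F = F U_\epsilon$, whence $\polar(A)F = F\polar(A)$. For $\FF=\star\RR,\star\HH$ one obtains
\eq{
U_\epsilon F \;=\; A F (|A^\ast|^2 + \epsilon\Id)^{-1/2} \;=\; F A^\ast (|A^\ast|^2 + \epsilon\Id)^{-1/2}\,,
}
whose strong limit equals $F\polar(A^\ast) = F\polar(A)^\ast$, giving $\polar(A)F = F\polar(A)^\ast$. The main subtlety I anticipate is the second step: correctly transporting the inverse-square-root functional calculus across the anti-unitary $F$, especially in the $\star$ case where the intertwining swaps $|A|^2$ with the distinct positive operator $|A^\ast|^2$. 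The integral representation handles both cases uniformly, and the remaining strong-continuity argument is then automatic.
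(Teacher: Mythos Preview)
Your proof is correct and follows essentially the same strategy as the paper's: approximate $\polar(A)$ by $A$ times a real-valued function of $|A|$ (you use $g_\epsilon(\lambda)=(\lambda^2+\epsilon)^{-1/2}$, the paper uses a cutoff $f_n(\lambda)=\min(1/\lambda,n)$), show that the real-valued functional calculus intertwines correctly with the anti-unitary $F$ (you via the resolvent integral for $x^{-1/2}$, the paper by simply invoking that $f_n$ is real-valued), and pass to the strong limit. The only substantive difference is that you spell out the intertwining $(|A|^2+\epsilon)^{-1/2}F=F(|A^\ast|^2+\epsilon)^{-1/2}$ in the $\star$ case via the integral representation, whereas the paper just asserts $f_n(|A|)F=Ff_n(|A^\ast|)$ directly from $|A|^2F=F|A^\ast|^2$; both are valid and the underlying mechanism is the same.
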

	\begin{proof}
		Let $F\in \Set{C,J}$ according to $\FF$.
		
		First consider the cases $\FF\in\Set{\RR,\HH}$. Since $\calB_\FF$ is closed under the adjoint operation, $|A|^2\in\calB_\FF$ too, and hence the polar part, by writing it as the strong limit of functions which approximate $ A |A|^{-1}$. Indeed, $\polar(A)=\slim Af_n(|A|)$ where $f_n(\lambda)=1/\lambda$ if $\lambda\geq 1/n$ and $f_n(\lambda)=1/n$ if $\lambda\leq 1/n$. In particular, $f_n:\RR\to\RR$ is $\RR$-valued and hence $f_n(|A|)F=F f_n(|A|)$. Then $\polar(A)F=\slim Af_n(|A|)F=F \slim Af_n(|A|)=F \polar(A)$.
		
		Next, for the case $\FF\in\Set{\star\RR,\star\HH}$, we have \eq{A^*AF=A^*F A^*=F AA^*.} Then $f_n(|A|)F=F f_n(|A^*|)$ and \eq{\polar(A)F=\slim Af_n(|A|)F=F \slim A^*f_n(|A^*|)=F \polar(A^*)=F(\polar(A))^*.}
	\end{proof}

	\begin{lem}\label{lem:extension of zero index Fredholm RCH partial isometries to unitaries} For $\FF=\CC,\RR,\HH$, let $Z\in\calF_\FF$ be essentially unitary with zero Fredholm index. Then there is a unitary operator $Y\in\calU_\FF$ such that $Y-Z\in\calK$.
	\end{lem}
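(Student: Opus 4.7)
\smallskip

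\noindent\textit{Proof proposal.} The plan is to use the polar decomposition together with a finite-rank equivariant correction on the kernels. Write $Z = \polar(Z)|Z|$ with $|Z| = (Z^\ast Z)^{1/2}$. Since $Z$ is essentially unitary, $\Id - |Z|^2 \in \calK$, and applying the continuous functional calculus to $\lambda \mapsto 1-\sqrt{\lambda}$ on $\sigma(|Z|^2) \subseteq [0,\infty)$ gives $\Id - |Z| \in \calK$, hence
\eq{Z - \polar(Z) = \polar(Z)(|Z| - \Id) \in \calK\,.}
Because $Z$ is Fredholm, $\im Z = \im \polar(Z)$ is closed, $\ker \polar(Z) = \ker Z$, and $(\im \polar(Z))^\perp = \ker Z^\ast$, both finite-dimensional and of equal dimension (as $\findex Z = 0$). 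Also $\polar(Z) \in \calB_\FF$ by \cref{lem:polar part preserve symmetry}.

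The proposal is then to define $Y := \polar(Z) + W$ where $W : \ker Z \to \ker Z^\ast$ is a unitary (extended by zero on $(\ker Z)^\perp$, landing in the orthogonal of $\im Z$). By construction $Y$ is unitary on $\calH$, and
\eq{Y - Z = (\polar(Z) - Z) + W \in \calK + \text{(finite rank)} \subseteq \calK\,.}
The whole task thus reduces to producing $W$ so that $Y \in \calB_\FF$. Since $F \in \Set{\Id, C, J}$ commutes with $Z$, it leaves $\ker Z$ and $\ker Z^\ast$ invariant; we therefore only need $W$ to intertwine the restrictions $F|_{\ker Z}$ and $F|_{\ker Z^\ast}$.

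For $\FF=\CC$ there is nothing to arrange. For $\FF=\RR$, invoke \cref{lem:basis in RCH symmetry} on each of $\ker Z$ and $\ker Z^\ast$ (both finite-dimensional $C$-real) to obtain $C$-fixed ONBs of equal cardinality, and let $W$ send one basis to the other. The main obstacle is $\FF=\HH$: here we must first match \emph{quaternionic} dimensions, not merely complex ones. The fact $\findex Z = 0$ only yields equality of complex dimensions of $\ker Z$ and $\ker Z^\ast$, so we observe that $J$ restricts to a quaternionic structure on each of these spaces (since $J$ commutes with $Z$ and $Z^\ast$), forcing both to be even-dimensional via \cref{lem:basis in RCH symmetry} and hence of equal quaternionic dimension. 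We then choose Kramers-pair ONBs $\Set{\vf_i, J\vf_i}$ for $\ker Z$ and $\Set{\ti\vf_i, J\ti\vf_i}$ for $\ker Z^\ast$, and set $W\vf_i := \ti\vf_i$, $W(J\vf_i) := J\ti\vf_i$. A direct check gives $WJ = JW$, completing the construction and the proof.
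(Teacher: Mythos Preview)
Your proof is correct and follows essentially the same route as the paper's: reduce to $\polar(Z)$ via $\Id-|Z|\in\calK$, then extend the partial isometry by an equivariant finite-rank unitary $W:\ker Z\to(\im Z)^\perp$ built from the symmetry-adapted bases of \cref{lem:basis in RCH symmetry}. The only cosmetic difference is your choice of $W$ in the $\HH$ case (you send Kramers pairs to Kramers pairs directly, $\vf_i\mapsto\ti\vf_i$, $J\vf_i\mapsto J\ti\vf_i$), which is in fact slightly simpler than the paper's twisted choice $\vf_i\mapsto -J\ti\vf_i$, $J\vf_i\mapsto\ti\vf_i$; both visibly commute with $J$.
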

	\begin{proof}
		In what follows, we let $F=\Id,C,J$ according to the value of $\FF$. Using the fact that $T=T^\ast\in\calB$ has $T\in\calK$ iff $\sigma_{\mathrm{ess}}(T)=\Set{0}$, since $\Id-|Z|^2$ is compact, its essential spectrum equals $\Set{1}$ which implies $\Id-|Z|$ is compact as well. Let $\polar(Z)$ denote the polar part of $Z$. Then by the above, 
		\eql{ \label{eq:Z - pol(Z) is compact}
			Z-\polar(Z)=\polar(Z)|Z|-\polar(Z)=\polar(Z)(|Z|-\Id)\in\calK\,.
		} 
		Now, since $\findex Z=0$, $\ker Z$ and $(\im Z)^\perp$ are finite-dimensional and of the same dimension, we let $M:\ker Z \to \im(Z)^\perp$ be any unitary map between two finite vector spaces of the same dimension and define $Y:=\polar(Z)\oplus M$ which is now unitary and $Y-Z$ is compact using \cref{eq:Z - pol(Z) is compact} and the fact $M$ is finite rank. This settles the case $\FF=\CC$.
		
		Next, if $F\neq\Id$, we have $\polar(Z)F=F\polar(Z)$ too using \cref{lem:polar part preserve symmetry} and $ZF=FZ$ implies \eq{F\ker Z=\ker Z,\quad F(\im Z)^\perp=(\im Z)^\perp\,.} 
		
		Now the analysis divides according to the value of $\FF$. When $\FF=\RR$, we have from \cref{lem:basis in RCH symmetry} right below bases $\Set{\varphi_i}_{i=1}^m$ and $\Set{\psi_i}_{i=1}^m$ for $\ker Z$ and $(\im Z)^\perp$, respectively, such that $\varphi_i,\psi_i$ are fixed by $C$. Let $M:\ker Z\to (\im Z)^\perp$ be the unitary operator mapping $\varphi_i\mapsto \psi_i$ for  $i=1,\dots,m$. Then \eq{CM\varphi_i=C\psi_i=\psi_i=M\varphi_i=MC\varphi_i\,.} Thus the unitary direct sum $Y:=\polar(Z)\oplus M$ commutes with $C$. 
		
		When $\FF=\HH$, applying \cref{lem:basis in RCH symmetry} again, we obtain bases of Kramers pairs $\Set{\varphi_i,\varphi_{i+m}}_{i=1}^m$ and $\Set{\psi_i,\psi_{i+m}}_{i=1}^m$ for $\ker Z$ and $(\im Z)^\perp$, respectively where $m$ is half the dimension of the kernel. Let $M:\ker Z\to(\im Z)^\perp$ be defined as \eq{\varphi_i\mapsto -\psi_{i+m},\quad \varphi_{i+m}\mapsto \psi_i\,.} 
		Then \eq{JM\varphi_i=-J\psi_{i+m}=\psi_{i}=M\varphi_{i+m}=MJ\varphi_i} and similarly for $JM\varphi_{i+m}=MJ\varphi_{i+m}$. 
		
		We conclude that in all three cases, one extends $\polar(Z)$ to a unitary operator $Y:=\polar(Z)\oplus M$ with $[F,Y]=0$ so $Y\in\calU_\FF$ and moreover, $Y-Z\in \calK$.
	\end{proof}
	Above we have used the following equivariant basis assertion:        
	
	\begin{lem}\label{lem:basis in RCH symmetry}
		Let $V$ be a Hilbert space with dimension $m$, possibly infinite.
		\begin{enumerate}
			\item Suppose there is an anti-unitary $C:V\to V$ with $C^2=\Id$. Then $V$ has an ONB $\Set{\vf_i}_{i=1}^m$ such that $C\vf_i=\vf_i$. If $m=2l$ for some $l\in\NN\cup\Set{\infty}$, then $V$ has an ONB $\Set{\vf_i,\psi_i}_{i=1}^l$ such that $C\vf_i=\psi_i$.
			\item If there is an anti-unitary $J:V\to V$ with $J^2=-\Id$, $V$ has an orthonormal basis consisting of \emph{Kramers pairs}, i.e., there is an ONB $\Set{\vf_i,\psi_i}_{i=1}^m$ such that $J\vf_i=\psi_{i}$. In particular, if $\dim V<\infty$, it follows that $\dim V=2m.$
		\end{enumerate}
	\end{lem}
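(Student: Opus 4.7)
The plan is to handle each part by a Gram–Schmidt style inductive construction, exploiting that in both cases the symmetry operator acts nicely on orthogonal complements of already-chosen basis elements.

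For part (1), the first task is to produce a single nonzero $C$-fixed vector out of any nonzero $v \in V$. The key observation is that one may take $v + Cv$, and if this happens to vanish (meaning $Cv = -v$), then $C(iv) = -iCv = iv$, so $iv$ is $C$-fixed. Normalizing yields a unit vector $\vf$ with $C\vf = \vf$. The second, routine, observation is that $\vf^\perp$ is $C$-invariant: if $\langle \vf, w\rangle = 0$ then $\langle C\vf, Cw\rangle = \overline{\langle \vf, w\rangle} = 0$, i.e. $\langle \vf, Cw\rangle = 0$. The restriction of $C$ to $\vf^\perp$ is still an anti-unitary squaring to $\Id$, so iterate (using separability of $V$ to terminate the countable induction, or Zorn's lemma if desired). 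This produces an ONB $\Set{\eta_i}$ with $C\eta_i = \eta_i$. For the even-dimensional pairing claim, pair up consecutive basis vectors via
\eq{\vf_i := \tfrac{1}{\sqrt 2}\bigl(\eta_{2i-1} + \ii\eta_{2i}\bigr)\,,\qquad \psi_i := \tfrac{1}{\sqrt 2}\bigl(\eta_{2i-1} - \ii\eta_{2i}\bigr)\,,} and note that $C\vf_i = \psi_i$ since $C$ conjugates the scalar $\ii$.

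For part (2), the analogous starting observation is that for any nonzero $v$, the pair $\Set{v, Jv}$ is automatically orthogonal. Indeed, anti-unitarity together with $J^2 = -\Id$ yields $\langle Jv, J(Jv)\rangle = \overline{\langle v, Jv\rangle}$, which rewrites as $-\langle Jv, v\rangle = \overline{\langle v, Jv\rangle}$, forcing $\langle v, Jv\rangle = 0$. Normalize to obtain $\vf_1 := v/\norm{v}$ and set $\psi_1 := J\vf_1$; then $J\psi_1 = J^2 \vf_1 = -\vf_1$, so $\Set{\vf_1,\psi_1}$ is orthonormal and spans a $J$-invariant two-dimensional subspace. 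A short check shows that $\szpan\Set{\vf_1,\psi_1}^\perp$ is again $J$-invariant, and the restriction of $J$ there still squares to $-\Id$, so the induction proceeds. In the finite-dimensional case, each step consumes exactly two dimensions, forcing $\dim V \in 2\NN$.

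The argument is genuinely routine; the only mild subtlety is the inductive step in the infinite-dimensional setting, which is handled by the separability standing assumption on all Hilbert spaces in this paper.
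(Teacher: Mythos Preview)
Your proof is correct. Part (2) and the even-dimension pairing construction in Part (1) are essentially identical to the paper's. For the first claim in Part (1) you take a slightly different route: you build the $C$-fixed ONB one vector at a time via Gram--Schmidt (using the trick that either $v+Cv$ or $\ii v$ is $C$-fixed), whereas the paper, following \cite{garcia2006complex}, identifies the real subspace $W=\Set{\varphi+C\varphi:\varphi\in V}$ all at once and observes that an $\RR$-orthonormal basis of $W$ is automatically a $\CC$-orthonormal basis of $V$. Both arguments are standard and equally short; the paper's version is perhaps more conceptual (it exhibits the real form of $V$ directly), while yours is more hands-on and avoids checking that $W$ is a real Hilbert space with the restricted inner product.
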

	\begin{proof}
		The first part is \cite[Lemma 1]{garcia2006complex} which we reproduce here for completeness. Consider the \emph{subset} $W=\Set{\varphi+C\varphi| \varphi\in V}$ consisting of elements from $V$. The elements in $W$ are fixed by $C$, and $W$ is an $\RR$-vector space. To verify this, let $\alpha\in\RR$ and $\psi=\varphi+C\varphi\in W$, then $\alpha\psi=\alpha\varphi+C(\alpha\varphi)\in W$. Let $\Set{\varphi_i}_{i=1}^{q}$ be an orthonormal basis for $W$ considered as an $\RR$-vector space. Here $q$ can be finite or infinite. Now, for any $\psi\in V$, let $\eta :=-\ii\psi$, and we have the identity 
		\eq{
			\psi=\frac{1}{2}((\psi+C\psi)+\ii(\eta + C\eta))\,.
		} 
		By expanding now $\psi+C \psi$ and $\eta+C\eta$ in the $\RR$-basis of $W$, we conclude that any element $\psi\in V$ may be written as a $\CC$-linear combination of $\Set{\varphi_i}_{i=1}^q$. In fact the set $\Set{\varphi_i}_{i=1}^q$ is orthonormal within $V$, since $W$ inherits the same inner product structure with which $\Set{\varphi_i}_{i=1}^q$ is orthonormal. Thus $\Set{\varphi_i}_{i=1}^q$ is an ONB for $V$ and $q=m$.
		
		If $m=2l$ for some $l\in\NN\cup\Set{\infty}$, define \eq{\eta_{j}^\pm:=\frac{\vf_{2j-1}\pm\ii \vf_{2j}}{\norm{\vf_{2j-1}\pm\ii \vf_{2j}}}\qquad(j\in{1,\dots,l})\,.} 
		Thus $\Set{\eta^\pm_j}_{j=1}^l$ is an ONB for $V$ with $C\eta_j^\pm=\eta_j^\mp$.
		
		For the second part, let $\varphi_1$ denote a unit-length vector from $V$. Let $\psi_1:=J\varphi_1$. By anti-unitarity, $\norm{\psi_{1}}=1$ and by $J^2=-\Id$,\eq{\ip{\varphi_1}{\psi_1}=\ip{\varphi_1}{J\vf_1}=\ip{J^2\vf_1}{J\vf_1}=-\ip{\vf_1}{\psi_1}} which implies $\ip{\varphi_1}{\psi_1}=0$. In particular, one has $J\psi_1=-\vf_1$ and the span of $\Set{\vf_1,\psi_1}$ is invariant under $J$. Pick another $\varphi_2$ in the orthogonal complement of the span of $\Set{\vf_1,\psi_1}$ and let $\psi_2=J\vf_2$. One readily verifies that $\psi_2$ is also orthogonal to the span of $\Set{\vf_1,\psi_1}$. Continue until $V$ is spanned. This construction works for the case when $m<\infty$. When $m=\infty$, we perform a so-called \emph{Zornication}.
	\end{proof}

	\begin{lem}\label{lem:extension of star fredholm partial isometries}
		For $\FF=\star\RR,\star\HH$, let $Z\in\calF_\FF$ be essentially unitary with zero Fredholm index, if applicable. Then there is a unitary $Y\in\calU_\FF$ such that $Z-Y$ is compact.
	\end{lem}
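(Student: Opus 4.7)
The proof will parallel that of \cref{lem:extension of zero index Fredholm RCH partial isometries to unitaries}. First, since $Z$ is essentially unitary, $\Id-|Z|^2 \in \calK$ and hence $\Id - |Z| \in \calK$ (because the essential spectrum of $|Z|$ equals $\Set{1}$), so $Z - \polar(Z) = \polar(Z)(|Z|-\Id) \in \calK$. By \cref{lem:polar part preserve symmetry} the polar part lies in $\calB_\FF$, i.e., $\polar(Z)F = F\polar(Z)^\ast$. It therefore suffices to extend the partial isometry $\polar(Z)$ to a unitary $Y = \polar(Z) + M$ where $M$ is a finite-rank partial isometry with $\ker M = (\ker Z)^\perp$ and $\im M = \ker Z^\ast = (\im Z)^\perp$, and such that $M \in \calB_\FF$. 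Unitarity of $Y$ is automatic from the orthogonality of the kernels and images of $\polar(Z)$ and $M$, and a short calculation shows that $YF = FY^\ast$ is equivalent to $MF = FM^\ast$.

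The new ingredient compared with the $\RR,\HH$ case is that $ZF = FZ^\ast$ implies $F$ \emph{swaps} $\ker Z$ and $\ker Z^\ast$ rather than preserving each separately. Under this anti-unitary bijection, the hypothesis $\findex Z = 0$ is automatic for $\FF=\star\HH$ (which is presumably what the phrase ``if applicable'' acknowledges); in that case, however, the parity $\dim\ker Z \mod 2$ is an obstruction, so the construction below requires $\dim\ker Z$ to be even. Writing $P = \pi_{\ker Z}$ and $Q = \pi_{\ker Z^\ast}$, the identity $FP = QF$ reduces the condition $MF=FM^\ast$ to the finite-dimensional statement $M'F = F(M')^\ast$ as maps $\ker Z^\ast \to \ker Z^\ast$, where $M' : \ker Z \to \ker Z^\ast$ is the underlying complex-linear unitary part of $M$.

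For $\FF = \star\RR$, pick any ONB $\Set{\eta_i}_{i=1}^n$ of $\ker Z$ and set $\xi_i := C\eta_i$; by anti-unitarity of $C$ together with $C^2 = \Id$, the $\xi_i$ form an ONB of $\ker Z^\ast$ satisfying $C\xi_i = \eta_i$. Define $M'\eta_i := \xi_i$; the check $M'C\xi_i = \xi_i = C(M')^\ast\xi_i$ is then immediate. For $\FF = \star\HH$, write $\dim\ker Z = 2l$, choose an ONB $\Set{v_i,v'_i}_{i=1}^l$ of $\ker Z$, and set $w_i := Jv'_i$, $w'_i := -Jv_i$ in $\ker Z^\ast$. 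The relation $J^2=-\Id$ ensures that $\Set{w_i,w'_i}_{i=1}^l$ is an ONB of $\ker Z^\ast$. Define $M'v_i := w_i$ and $M'v'_i := w'_i$; direct computation then yields $M'Jw_i = Jv_i = J(M')^\ast w_i$ and similarly $M'Jw'_i = w_i = J(M')^\ast w'_i$.

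The main obstacle is the $\star\HH$ case: the seemingly natural attempt $M'v_i := Jv_i$ \emph{fails}, since it produces $M'J = -J(M')^\ast$ instead of the required $M'J = J(M')^\ast$. The Kramers-pair-like twist with the sign on $w'_i$ is precisely what absorbs the $-\Id$ coming from $J^2$, and simultaneously explains why $\dim \ker Z$ must be even—an odd-dimensional kernel admits no such pairing and the extension is impossible, consistent with the non-vanishing of the Atiyah--Singer $\ZZ_2$ index in that situation.
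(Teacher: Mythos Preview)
Your proof is correct and follows essentially the same approach as the paper: reduce to extending $\polar(Z)$ by a finite-rank unitary $M:\ker Z\to(\im Z)^\perp$, use that $F$ swaps $\ker Z$ with $\ker Z^\ast$, and in the $\star\HH$ case build $M$ via a Kramers-type pairing on an even-dimensional kernel. One small interpretive remark: the paper's phrase ``with zero Fredholm index, if applicable'' actually refers to the $\ZZ_2$ index $\findex_2 Z=\dim\ker Z\bmod 2$ vanishing (a genuine hypothesis for $\star\HH$, vacuous for $\star\RR$), not to the ordinary Fredholm index being automatic---but since you correctly impose $\dim\ker Z\in 2\NN$ in the $\star\HH$ case, your argument is unaffected.
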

	\begin{proof}
		We have again $\polar(Z)\in \calB_\FF$ by \cref{lem:polar part preserve symmetry}. To extend $\polar(Z)$ to a unitary operator in $\calU_\FF$, the analysis divides according to the value of $\FF$. 
		
		For $\FF=\star\RR$, let $\ker Z$ be spanned by an orthonormal basis $\Set{\vf_i}_{i=1}^m$. We note that $Z\in\calB_\FF$ implies \eq{C\ker Z=\ker(Z^\ast)=(\im Z)^\perp\,.} 
		Thus $(\im Z)^\perp$ is spanned by the orthonormal basis $\Set{C\varphi_i}_{i=1}^m$. Let $M:\ker Z\to(\im Z)^\perp$ be the unitary which maps $\varphi_i$ to $C\varphi_i$. Then \eq{CM\varphi_i=CC\varphi_i=\varphi_i=M^*C\varphi} and we may extend $\polar(Z)$ by $M$ then.
		
		Next, consider the case $\FF=\star\HH$. Since the index is zero, $\dim\ker Z\in 2\NN$, so let $\Set{\varphi_i,\varphi_{i+m}}_{i=1}^m$ denote an orthonormal basis. In this case, the fact $J\ker Z=(\im Z)^\perp$ still holds. Thus, $(\im Z)^\perp$ is spanned by the orthonormal basis $\Set{J\varphi_i,J\varphi_{i+m}}_{i=1}^m$. Define the unitary map $M:\ker Z\to (\im Z)^\perp$ via\eq{\vf_i\mapsto -J\vf_{i+m},\quad \vf_{i+m}\mapsto J\vf_i\,.} 
		Then \eq{JM\vf_i=-J^2\vf_{i+m}=\vf_{i+m}=M^*J\vf_i} and similarly $JM\vf_{i+m}=M^*J\vf_{i+m}$. So again we define $Y:=\polar(Z)\oplus M$ and $Y\in\calU_\FF$.
	\end{proof}

		\subsection{Equivariant homotopies of self-adjoint unitaries}
		\begin{lem}\label{lem:classification of nontrivial projections}
			Let $\FF\in\Set{\CC,\RR,\HH,\ii\RR,\ii\HH}$. The non-trivial SAUs in $\calS_\FF$ are nullhomotopic.
		\end{lem}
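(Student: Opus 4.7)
Any non-trivial SAU $U\in\calS_\FF$ decomposes as $U=P_+-P_-$, where $P_\pm:=\chi_{\Set{\pm 1}}(U)$ are SA projections onto infinite-dimensional orthogonal subspaces summing to $\calH$. The plan is, in each symmetry class, to fix a convenient reference non-trivial SAU $V_\FF\in\calS_\FF$ once and for all, and then, for any given $U$, to produce a conjugation homotopy $t\mapsto W_t^\ast V_\FF W_t$ running from $V_\FF$ to $U$. Here $t\mapsto W_t$ is a path inside a suitably chosen equivariant unitary group, deforming an intertwiner $W_0$ (satisfying $W_0^\ast UW_0=V_\FF$) to $\Id$. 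The non-triviality condition is preserved automatically along the path, since all spectral subspaces of $W_t^\ast V_\FF W_t$ are unitary images of those of $V_\FF$.

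For $\FF\in\Set{\CC,\RR,\HH}$, the (anti-)unitary $F\in\Set{\Id,C,J}$ commutes with $U$ and hence preserves each of $\im P_\pm$. \cref{lem:basis in RCH symmetry} furnishes equivariant ONBs of these two infinite-dimensional subspaces. Pairing them with equivariant ONBs of the corresponding eigenspaces of a reference $V_\FF:=P_L-P_R$ (with $P_L,P_R\in\calB_\FF$ two fixed non-trivial orthogonal complementary projections) produces a unitary $W_0\in\calU_\FF$ intertwining $U$ and $V_\FF$. \cref{thm:Kuiper} then supplies a path $W_0\rightsquigarrow\Id$ inside $\calU_\FF$, and the conjugated path stays in $\calS_\FF$.

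For the more delicate classes $\FF\in\Set{\ii\RR,\ii\HH}$, the defining relation $UF=-FU$ means that $F$ swaps $\im P_+$ and $\im P_-$ rather than preserving them. To set up a reference, pick any infinite-dimensional subspace $L\subseteq\calH$ with $L\perp FL$ and $L\oplus FL=\calH$ (such an $L$ is readily constructed from the equivariant ONBs of \cref{lem:basis in RCH symmetry}); then $V_\FF:=P_L-P_{FL}$ is a concrete reference non-trivial SAU satisfying $V_\FF F+FV_\FF=0$. Since $F$ restricts to a unitary bijection $\im P_+\to\im P_-$, any unitary $W_0|_L:L\to\im P_+$ extends uniquely to a unitary $W_0$ on $\calH$ by declaring $W_0(F\varphi):=FW_0(\varphi)$ for $\varphi\in L$; this extension automatically satisfies $W_0F=FW_0$, placing it in $\calU_\RR$ when $F=C$ and in $\calU_\HH$ when $F=J$. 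A direct block computation on $L\oplus FL$ gives $W_0^\ast UW_0=V_\FF$. Applying \cref{thm:Kuiper} inside $\calU_\RR$ (resp. $\calU_\HH$) deforms $W_0\rightsquigarrow\Id$, and conjugation by elements commuting with $F$ preserves anti-commutation with $F$, so $t\mapsto W_t^\ast V_\FF W_t$ indeed stays in $\calS_\FF$.

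The main conceptual point to pin down is the correct symmetry class of the intertwiner in each case: for $\FF\in\Set{\CC,\RR,\HH}$ it lives in $\calU_\FF$, whereas for $\FF\in\Set{\ii\RR,\ii\HH}$ it must \emph{commute} with $F$ rather than obey the $\ii$-version of the relation, i.e., it sits in $\calU_\RR$ or $\calU_\HH$. Once this is identified, $\Lambda$ plays no role whatsoever, non-triviality is preserved for free, and everything reduces to the classical equivariant Kuiper theorem applied to $W_0$.
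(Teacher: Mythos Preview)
Your proposal is correct and follows essentially the same route as the paper: construct an intertwiner via the equivariant bases of \cref{lem:basis in RCH symmetry} and contract it to $\Id$ using \cref{thm:Kuiper}, with the key point---which you state explicitly and which the paper uses as well---that for $\FF\in\Set{\ii\RR,\ii\HH}$ the intertwiner must \emph{commute} with $F$ and hence lives in $\calU_\RR$ resp.\ $\calU_\HH$. The only cosmetic slip is the placement of the adjoint in your conjugation path: with $W_0^\ast UW_0=V_\FF$ the homotopy should read $t\mapsto W_t V_\FF W_t^\ast$ (going from $U$ at $t=0$ to $V_\FF$ at $t=1$) rather than $W_t^\ast V_\FF W_t$.
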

		\begin{proof}
			Let $U,\ti U\in\calS_\FF$ be non-trivial. In the case when $\FF=\CC$, we can simply choose a unitary operator $W$ that maps the $\pm 1$ eigenspaces of $U$ to the respective $\pm 1$ eigenspaces of $\ti U$, since these eigenspaces are infinite dimensional by the non-triviality assumption. It is clear that \eql{\label{eq:conjugate SAU non-local case}U=W^*\ti{U}W\,.}
			We apply \cref{thm:Kuiper} to deform $W_t\rightsquigarrow \Id$ withing $\calU_\CC$ and obtain the desired path $W_t^*\ti{U} W_t\rightsquigarrow U$ within $\calS_\CC.$ 
			
			We consider $\FF=\RR$. Since $UC=CU$, it follows that \eq{C\ker(U\pm \Id) = \ker(U\pm \Id)\,.} 
			Thus, we apply \cref{lem:basis in RCH symmetry} to obtain an ONB $\Set{\vf_i^\pm}_{i=1}^\infty$ for the $\pm 1$ eigenspace of $U$ such that $C\vf_i^\pm=\vf_i^\pm$. Similarly, let $\Set{\ti{\vf}_i^\pm}_{i=1}^\infty$ be an ONB fixed by $C$ for the $\pm 1$ eigenspaces of $\ti U$. Let $W$ be the unitary operator that maps $\vf_i^\pm\mapsto \ti{\vf}_i^\pm$. It follows that \cref{eq:conjugate SAU non-local case} holds and, moreover, $WC=CW$ holds. Indeed, we have \eq{WC\vf_i^\pm=W\vf_i^\pm=\ti{\vf}_i^\pm = C\ti{\vf}_i^\pm=CW\vf_i^\pm\,.} 
			Thus, we can deform $W_t\rightsquigarrow \Id$ within $\calU_\RR$ using \cref{thm:Kuiper}, and this gives the desired path $W_t^*\ti{U} W_t\rightsquigarrow U$ within $\calS_\RR$. 
			The $\FF=\HH$ case is similar to $\FF=\RR$. Using the fact that $J\ker(U\pm \Id)=\ker(U\pm \Id)$ from $UJ=JU$ and \cref{lem:basis in RCH symmetry}, we find an ONB $\Set{\vf^\pm_i,\psi^\pm_i}_{i=1}^\infty$ for the $\pm 1$ eigenspaces of $U$ with $J\vf_i^\pm=\psi_i^\pm$, and an ONB $\Set{\ti{\vf}^\pm_i,\ti{\psi}^\pm_i}_{i=1}^\infty$ for $\ti{U}$ similarly constructed. We let $W$ maps $\vf^\pm_i,\psi^\pm_i$ to $\ti{\vf}^\pm_i,\ti{\psi}^\pm_i$ respectively. Then \cref{eq:conjugate SAU non-local case} holds, and so does $WJ=JW$. Indeed, we have \eq{WJ\vf^\pm_i=W\psi^\pm_i=\ti{\psi}^\pm_i=J\ti{\vf}^\pm_i=JW\vf_i^\pm \\ WJ\psi^\pm_i=-W\vf^\pm_i=-\ti{\vf}^\pm_i=J\ti{\psi}^\pm_i=JW\psi_i^\pm\,.} Similar to the previous case, it follows that there exists a path $U\rightsquigarrow \ti{U}$ within $\calS_\HH$ using the conjugate operator $W$.
			
			We turn to consider $\FF=\ii\RR,\ii\HH$. Let $F$ denote $C$ or $J$. Since $UF=-FU$, it follows that \eq{F \ker (U\pm \Id)=\ker (U\mp \Id)\,.} 
			Let $\Set{\vf_i^+}_{i=1}^\infty$ be an ONB for the $+1$ eigenspace of $U$. Using the above relation, then $\Set{\vf_i^-:=F\vf_i^+}_{i=1}^\infty$ is an ONB for the $-1$ eigenspace of $U$. Let $\Set{\ti{\vf}_i^+}_{i=1}^\infty$ and $\Set{\ti{\vf}_i^-:=F\ti{\vf}_i^+}_{i=1}^\infty$ be a similar construction of ONB for the $\pm 1$ eigenspaces of $\ti{U}$. Define $W$ that maps $\vf_i^\pm \mapsto \ti{\vf}_i^\pm$. Then \cref{eq:conjugate SAU non-local case} holds and, moreover, $WF=FW$ holds. Indeed, we have \eq{W F \vf_i^+=W  \vf_i^-=\ti{\vf}_i^- =F\ti{\vf}_i^+=FW\vf_i^+ \\ W F \vf_i^-=WFF\vf_i^+=\pm W\vf_i^+=\pm \ti{\vf}_i^+=F\ti{\vf}_i^-=FW\vf_i^- } where in the last line the $+1,-1$ prefactors correspond to $F=C,J$ respectively. 
			We then use \cref{thm:Kuiper} to deform $W_t\rightsquigarrow \Id$ within $\calU_\FF$, and the path $U_t:=W_t^*\ti{U} W_t\rightsquigarrow U$ will be within $\calS_\FF$. Indeed, $U_t$ is a SAU and $U_tF=W_t^*\ti{U} W_t F=-FW_t^*\ti{U} W_t=-FU_t$.
			
		\end{proof}
		
		\begin{lem}\label{lem:essential projection is compact away from genuine projection}
			Let $P$ be essentially a projection in the sense that $P^2-P\in\calK$ and $P^*-P\in\calK$. Then there exists a self-adjoint projection $Q$ such that $P-Q\in\calK$. If $P$ commutes with a given anti-unitary $F$, then so does $Q$.
		\end{lem}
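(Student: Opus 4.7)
The plan is to first replace $P$ by its self-adjoint part and then define $Q$ via Borel functional calculus as the spectral projection above $1/2$. Set $A:=\tfrac{1}{2}(P+P^*)$. This is self-adjoint, and since $P^*-P\in\calK$ we have $A-P\in\calK$. A short computation using $A^2-P^2=(A-P)A+P(A-P)\in\calK$ and $P^2-P\in\calK$ yields $A^2-A\in\calK$, so $A$ is essentially a (self-adjoint) projection. Consequently $\sigma_{\mathrm{ess}}(A)\subseteq\Set{0,1}$: indeed any $\lambda\in\sigma_{\mathrm{ess}}(A)$ satisfies $\lambda^2-\lambda\in\sigma_{\mathrm{ess}}(A^2-A)=\Set{0}$. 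Thus $\sigma(A)\setminus\Set{0,1}$ consists of eigenvalues of finite multiplicity which can only accumulate at $\Set{0,1}$.

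Define $Q:=\chi_{[1/2,\infty)}(A)$ via the bounded Borel functional calculus. This is manifestly a self-adjoint projection. To see $Q-A\in\calK$, note that $Q-A=f(A)$ for the real Borel function $f(\lambda):=\chi_{[1/2,\infty)}(\lambda)-\lambda$, which satisfies $f(0)=f(1)=0$. Given $\varepsilon>0$, the spectral projection $E_{[\varepsilon,1-\varepsilon]}(A)$ is finite-rank (a finite sum of finite-multiplicity eigenspaces), and $f$ restricted to $\sigma(A)\cap\big([0,\varepsilon)\cup(1-\varepsilon,1]\big)$ is bounded by $\varepsilon$; thus $f(A)$ is within $\varepsilon$ of a finite-rank operator, hence compact. (Any spectrum outside $[0,1]$ is, by the same finite-multiplicity argument, a finite-rank contribution.) Therefore $P-Q=(P-A)+(A-Q)\in\calK$, proving the first assertion.

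For the equivariance statement, assume $PF=FP$ with $F$ anti-unitary. Taking adjoints of $PF=FP$ (using $(AB)^*=B^*A^*$ also in the anti-linear setting) gives $F^*P^*=P^*F^*$, and since in our applications $F\in\Set{C,J}$ so that $F^*=\pm F$, we conclude $P^*F=FP^*$. Hence $A=\tfrac12(P+P^*)$ also commutes with $F$. Finally, for any self-adjoint operator $A$ commuting with an anti-unitary $F$ and any \emph{real-valued} bounded Borel function $g$, one has $g(A)F=Fg(A)$: this follows by approximation since it holds for real polynomials ($FA^k=A^kF$ and real scalars pass through the anti-linear $F$), and extends to the bounded Borel calculus by the usual strong-limit arguments. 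Applying this with $g=\chi_{[1/2,\infty)}$ yields $QF=FQ$, completing the proof.
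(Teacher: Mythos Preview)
Your proof is correct and follows essentially the same strategy as the paper: reduce to the self-adjoint part, observe that its essential spectrum is contained in $\Set{0,1}$, and take a spectral projection above a threshold. The one noteworthy difference is that the paper picks the threshold $\lambda_0\in(0,1)\setminus\sigma(P)$ (which exists since the spectrum accumulates only at $\Set{0,1}$), so that $\chi_{(\lambda_0,\infty)}$ is \emph{continuous} on $\sigma(P)$; this lets them argue the equivariance directly via uniform approximation by real polynomials. Your choice of the fixed threshold $1/2$ may land in $\sigma(A)$, forcing you into the Borel calculus and a strong-limit argument for the commutation with $F$---perfectly valid, but slightly less clean. A minor remark: your step from $F^*P^*=P^*F^*$ to $P^*F=FP^*$ invokes $F^*=\pm F$, which is more than the lemma assumes; in fact for any anti-unitary $F$ one has $F^*=F^{-1}$, and $PF=FP$ immediately gives $F^{-1}PF=P$, whence taking adjoints yields $F^{-1}P^*F=P^*$, so no restriction on $F^2$ is needed.
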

		\begin{proof}
			Let $\widetilde{P}=\frac{1}{2}(P+P^*)=P+\frac{1}{2}(P^*-P)$, then $\widetilde{P}$ is self-adjoint and $\widetilde{P}^2-\widetilde{P}\in\calK$. Therefore, WLOG we assume $P$ is self-adjoint. Since $P^2-P$ is self-adjoint and compact, its spectrum can only accumulate at $0$. Thus the spectrum of $P$ can only accumulate at $0$ and $+1$. Pick any $\lambda_0\in (0,1)\setminus \sigma(P)$. Consider the self-adjoint projection $Q=\chi_{(\lambda_0,\infty)}(P)$. Now \eq{\sigma(P-Q)=\Set{\lambda-\chi_{(\lambda_0,\infty)}(\lambda) | \lambda\in\sigma(P)}\,.} 
			Thus the spectrum of $P-Q$ can only accumulates at $0$, and hence $P-Q\in\calK$.
			
			Finally, since we pick $\lambda_0$ outside the spectrum $\chi_{(\lambda_0,\infty)}$ is a continuous function which may be approximated uniformly (as $P$ is bounded) by a sequence of polynomials with real coefficients, and hence, we may guarantee that $Q$ commutes with the anti-unitary $F$ as well. 
		\end{proof}

		\section{Stummel idempotents}
		To study the space of operators of the form $A+BR$ in \cref{lem:A+BR homotopies}, we will construct a suitable grading for these operators, using so-called \emph{Stummel idempotents}. To motivate the construction, we recall the more familiar concept of the Riesz projections--actually they are only self-adjoint if the associated operator is, otherwise they are merely idempotents which is how we shall refer to them henceforth. They concern the decomposition of operators corresponding to disjoint parts of the spectrum. Let $A$ be a bounded operator whose spectrum is the disjoint union of two closed subsets $E,F$ of $\sigma(A)$. Let us specify even more, so we consider $E\subset \DD$ and $F\subset \CC\setminus\overline{\DD}$ and $\lambda\Id-A$ is invertible for $\lambda \in \bS^1$, i.e., there are two parts in the spectrum of $A$ that are separated by the unit circle. We recall the statement of the theorem concerning Riesz idempotents  (see e.g. \cite[Chapter I]{gohberg1988block}):
		\begin{thm}[Riesz]
			The operator \eq{P=\frac{1}{2\pi \ii} \oint_{\bS^1} (\lambda\Id-A)^{-1}\dif{\lambda}} is an idempotent such that $A$ decomposes as \eq{A=\begin{bmatrix}A_{11} & 0 \\ 0 & A_{22}\end{bmatrix}:\ker P\oplus \im P \to \ker P \oplus \im P\,.} In particular, the following operators are invertible \eq{\lambda \Id-A_{11}& :\ker P\to \ker P \qquad (\lambda\in \overline{\DD}) \\ \lambda\Id-A_{22}& :\im P\to \im P \qquad (\lambda\in \CC\setminus \DD) \,.}
		\end{thm}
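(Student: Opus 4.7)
The plan is to follow the classical template for Riesz-type spectral projections, which requires three things: (i) verifying $P^2 = P$, (ii) establishing $[A,P] = 0$ so that the block decomposition is meaningful, and (iii) identifying the spectra of the two diagonal blocks with the two spectral pieces $E$ and $F$.

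For step (i), the plan is to replace the single contour $\bS^1$ by two nearby contours $\Gamma_1, \Gamma_2$ (for example, concentric circles of radii $1 \mp \epsilon$ for small enough $\epsilon$ that both still separate $E$ from $F$), with $\Gamma_2$ enclosing $\Gamma_1$. I would then write
\eq{P^2 \;=\; \frac{1}{(2\pi \ii)^2}\oint_{\Gamma_2}\oint_{\Gamma_1} R(\lambda)\,R(\mu)\,\dif\lambda\,\dif\mu}
with $R(z) \equiv (z\Id - A)^{-1}$, apply the resolvent identity $R(\lambda)R(\mu) = (R(\lambda) - R(\mu))/(\mu - \lambda)$, and split the double integral via Fubini. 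The $R(\lambda)$ term will integrate in $\mu$ over $\Gamma_2$ to give $R(\lambda)$ (since $\lambda$ lies inside $\Gamma_2$), recovering $P$; the $R(\mu)$ term integrates in $\lambda$ over $\Gamma_1$ to $0$ (since $\mu$ lies outside $\Gamma_1$). Step (ii) is immediate from $[A,R(\lambda)] = 0$ for every $\lambda \in \bS^1$, which on integration yields $[A,P]=0$; this implies that both $\ker P$ and $\im P$ are $A$-invariant and hence $A$ takes the asserted block-diagonal form (note $P$ is generally not self-adjoint, but $\ker P$ and $\im P$ are still complementary invariant subspaces because $P$ is idempotent).

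For step (iii), the cleanest route is via the holomorphic functional calculus: $P = \chi_E(A)$ where $\chi_E$ is the locally holomorphic function equal to $1$ on a neighborhood of $E$ and $0$ on a neighborhood of $F$ (using that $\bS^1 \cap \sigma(A) = \varnothing$). The spectral mapping theorem then gives $\sigma(A_{22}) = \sigma(A|_{\im P}) = E \subset \DD$ and $\sigma(A_{11}) = F \subset \CC \setminus \overline{\DD}$, which yields the two invertibility statements. As a more hands-on alternative, one may directly construct the resolvents on each block: for $\lambda \in \overline{\DD}$ the operator
\eq{\frac{1}{2\pi\ii}\oint_{\bS^1}\frac{R(z)}{z-\lambda}\,\dif z}
restricted to $\ker P$ serves as the inverse of $\lambda\Id - A_{11}$, and a symmetric formula handles $A_{22}$ for $\lambda \in \CC\setminus\DD$.

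No serious obstacle is anticipated. The only technical care is bookkeeping in step (i): choosing the two contours correctly so that the Fubini interchange and the two residue computations give the claimed values. Everything else is formal once commutation with $A$ and idempotency are in hand.
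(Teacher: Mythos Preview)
Your proposal is correct and follows the classical template. Note, however, that the paper does not actually prove this theorem: it is stated as background with a reference to \cite[Chapter I]{gohberg1988block}, serving as motivation for the Stummel theorem that follows. That said, the paper's proof of the Stummel generalization (with $G$ in place of $\Id$) uses exactly the same machinery you propose---two nested contours combined with the (generalized) resolvent identity to establish idempotency, and an explicit Cauchy-type integral $T_\lambda = \tfrac{1}{2\pi\ii}\oint_{\bS^1}(\lambda-\mu)^{-1}S_\mu^{-1}\dif\mu$ for the inverse on each block---so your argument is precisely what that proof specializes to when $G=\Id$. One minor difference: the paper places both auxiliary contours $\Gamma_1,\Gamma_2$ outside $\bS^1$ (with $\Gamma_2$ surrounding $\Gamma_1$ surrounding $\bS^1$), whereas you take one on each side; both choices work here since the resolvent is holomorphic throughout the annulus separating $E$ and $F$.
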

		
		For the Riesz idempotents, we are concerned with operators of the form $A+\lambda\Id$ (note we switched from considering $\lambda\Id-A$ to $A+\lambda\Id$ for notational purposes of the later discussion). The idea can be generalized to operators of the form $A+\lambda G$, where $A,G$ are two bounded operators. Instead of considering the spectrum of $A$, we will talk about the invertibility of $A+\lambda G$ for $\lambda$ in some subset of the complex plane. For operators of the form $A+\lambda G$, we have the following 
		
		\begin{thm}(Stummel)\label{thm:stummel}
			Let $A,G\in\calB$ be given such that  $S_\lambda:=A+\lambda G$ is invertible for all $\lambda\in \bS^1$. Then the operators \eql{P=\frac{1}{2\pi \ii}\oint_{\bS^1}S_\lambda^{-1}G\dif{\lambda},\quad Q=\frac{1}{2\pi \ii}\oint_{\bS^1}GS_\lambda^{-1}\dif{\lambda} \label{eq:stummel projections}} are idempotents such that with respect to the grading \eql{\ker P\oplus \im P\to \ker Q\oplus \im Q \label{eq:stummel grading}} the operators $A,G$ decompose diagonally as \eql{\label{eq:stummel decomposition}A=\begin{bmatrix}A_{11} & 0 \\ 0 & A_{22}\end{bmatrix},\quad G=\begin{bmatrix}G_{11} & 0 \\ 0 & G_{22}\end{bmatrix} \,.} Moreover, the following operators are invertible \eql{A_{11}+\lambda G_{11}: &\ker P\to \ker Q \qquad(\lambda\in\overline{\DD}) \label{eq:stummel invertible in disk} \\ A_{22} + \lambda G_{22}:&\im P\to \im Q \qquad (\lambda\in \left(\CC\cup\Set{\infty}\right)\setminus \DD) \label{eq:stummel invertible outside disk}} where by $\lambda=\infty$, we mean the operator $G_{22}:\im P\to \im Q$.
		\end{thm}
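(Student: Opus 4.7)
The plan is to establish the three assertions in sequence: first that $P$ and $Q$ are idempotents, then the intertwining relations that force the block decomposition of \cref{eq:stummel decomposition} in the grading \cref{eq:stummel grading}, and finally the invertibility statements \cref{eq:stummel invertible in disk,eq:stummel invertible outside disk} via explicit contour-integral inverses. Throughout I would exploit the fact that invertibility of $S_\lambda$ on the compact set $\bS^1$ automatically extends, by continuity of inversion on the open set of invertibles, to an open annulus $\{r_1 < |\lambda| < r_2\}$ with $r_1 < 1 < r_2$; on this annulus $\lambda \mapsto S_\lambda^{-1}$ is analytic and Cauchy's theorem permits free deformation of contours.

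For $P^2 = P$ the key input is the pencil resolvent identity $S_\mu^{-1} - S_\lambda^{-1} = (\lambda - \mu)\, S_\mu^{-1} G\, S_\lambda^{-1}$, obtained from $S_\lambda - S_\mu = (\lambda - \mu) G$ sandwiched between the two inverses. I would write $P^2$ as a double contour integral with one contour $\Gamma_{\rm out}$ of radius slightly larger than $1$ and the other $\Gamma_{\rm in}$ slightly smaller, and substitute the identity. The double integral then splits into two single integrals: the one in which $\mu$ sits inside $\Gamma_{\rm out}$ produces a residue equal to $P$, while the other vanishes since $\lambda$ lies outside $\Gamma_{\rm in}$. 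An analogous computation handles $Q^2 = Q$.

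For the intertwining, I would show $AP = QA$ and $GP = QG$. The second is visible by inspection from the two defining integrals. For the first I would rewrite $A = S_\lambda - \lambda G$ so that $A S_\lambda^{-1} = \Id - \lambda G S_\lambda^{-1}$ and $S_\lambda^{-1} A = \Id - \lambda S_\lambda^{-1} G$; the constant-in-$\lambda$ terms integrate to zero around a closed contour, and both $AP$ and $QA$ collapse to the common value $-\frac{1}{2\pi\ii}\oint_{\bS^1} \lambda\, G S_\lambda^{-1} G\, d\lambda$. These intertwining identities immediately give $A(\ker P) \subseteq \ker Q$ and $A(\im P) \subseteq \im Q$, and likewise for $G$, yielding the block-diagonal form. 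A derived consequence I will need below is $(A+\lambda_0 G) P = Q(A+\lambda_0 G)$ for every $\lambda_0$.

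For invertibility of the diagonal blocks I would fix $\lambda_0$ in the annulus of invertibility and introduce $T(\lambda_0) := \frac{1}{2\pi\ii}\oint_{\Gamma} \frac{S_\mu^{-1}}{\mu - \lambda_0}\, d\mu$ for a suitable closed contour $\Gamma$. Using $(A+\lambda_0 G)S_\mu^{-1} = \Id - (\mu - \lambda_0) G S_\mu^{-1}$ and its mirror, one gets $(A+\lambda_0 G)T(\lambda_0) = c(\lambda_0)\Id - Q$ and $T(\lambda_0)(A+\lambda_0 G) = c(\lambda_0)\Id - P$, where $c(\lambda_0)\in\{0,1\}$ is the winding number of $\Gamma$ around $\lambda_0$. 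For $\lambda_0 \in \overline{\DD}$ take $\Gamma = \Gamma_{\rm out}$ (so $c(\lambda_0) = 1$) and verify that $(\Id - P)T(\lambda_0)|_{\ker Q}$ is a two-sided inverse of $A_{11}+\lambda_0 G_{11}$; for $\lambda_0$ outside $\DD$ take $\Gamma = \Gamma_{\rm in}$ (so $c(\lambda_0) = 0$), and then $-P\,T(\lambda_0)|_{\im Q}$ inverts $A_{22}+\lambda_0 G_{22}$. Finally, for the ``point at infinity'' case I would use $U := \frac{1}{2\pi\ii}\oint_{\Gamma} S_\mu^{-1}\, d\mu$, for which $GU = Q$ and $UG = P$ are immediate from the definitions, yielding $PU|_{\im Q}$ as the inverse of $G_{22}:\im P \to \im Q$. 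The main delicacy will be in this last step: verifying that each candidate inverse actually maps into the prescribed kernel/range subspace (rather than just into the whole space), which is where the derived intertwining $(A+\lambda_0 G)P = Q(A+\lambda_0 G)$ is essential; once this is in hand the boundary case $|\lambda_0|=1$ is absorbed automatically by choosing the contour slightly inside or outside $\bS^1$.
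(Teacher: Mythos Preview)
Your proposal is correct and follows essentially the same approach as the paper: both proofs rest on the pencil resolvent identity $S_\lambda^{-1}-S_\mu^{-1}=(\mu-\lambda)S_\lambda^{-1}GS_\mu^{-1}$, a double-contour Cauchy argument for idempotency, and the same contour-integral parametrix $T_{\lambda_0}=\frac{1}{2\pi\ii}\oint(\mu-\lambda_0)^{-1}S_\mu^{-1}\,d\mu$ for the block inverses (your $U$ is the paper's auxiliary operator $K$). The only cosmetic differences are that the paper packages idempotency through the intermediate identity $KGK=K$ (with both contours outside $\bS^1$) and proves $AP=QA$ via $AS_\lambda^{-1}G=GS_\lambda^{-1}A$, whereas you compute $P^2$ directly and reduce both $AP$ and $QA$ to the common value $-\frac{1}{2\pi\ii}\oint\lambda\,GS_\lambda^{-1}G\,d\lambda$; also, the paper separately establishes $T_\lambda Q=PT_\lambda$ to make $T_\lambda$ block-diagonal, while you bypass this by simply reading off the $(1,1)$ or $(2,2)$ block of $T(\lambda_0)$ against the already-diagonal $A+\lambda_0 G$.
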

		
		We refer the reader to \cite{stummel1971diskrete} or \cite[Chapter IV]{gohberg2013classes} for more details and context on the Stummel idempotents. Below we merely reproduce the proof of the convenience of the reader.
		\begin{proof}[Proof of \cref{thm:stummel}]
			The proof is based on the generalized resolvent identity \eql{S_\lambda^{-1}-S_{\mu}^{-1}=(\mu-\lambda)S_{\lambda}^{-1}GS_\mu^{-1}\,. \label{eq:generalized resolvent identity}} 
			
			Define an auxiliary operator \eql{K=\frac{1}{2\pi \ii}\oint_{\lambda\in\bS^1}S_\lambda^{-1}\dif{\lambda}\,. \label{eq:stummel auxilary operator}} We first show that 
			\eql{\label{eq:KGK is K}KGK=K\,.} There are contours $\Gamma_1$ and $\Gamma_2$ such that $\Gamma_2$ surrounds $\Gamma_1$, $\Gamma_1$ surrounds $\bS^1$, and $S_\lambda$ is invertible for $\lambda$ on these contours. Using the generalized resolvent identity \cref{eq:generalized resolvent identity}, we have \eq{KGK & =\frac{1}{(2\pi\ii)^2} \oint_{\Gamma_1}\oint_{\Gamma_2} S_\lambda^{-1}GS_\mu^{-1}\dif{\mu}\dif{\lambda} \\ &= \frac{1}{(2\pi\ii)^2} \oint_{\Gamma_1}\oint_{\Gamma_2} (\mu-\lambda)^{-1}(S_\lambda^{-1}-S_{\mu}^{-1}) \dif{\mu}\dif{\lambda} \\ &= \frac{1}{2\pi\ii} \oint_{\Gamma_1} \left[\frac{1}{2\pi\ii}\oint_{\Gamma_2} (\mu-\lambda)^{-1} \dif{\mu}\right] S_\lambda^{-1} \dif{\lambda} - \frac{1}{2\pi\ii} \oint_{\Gamma_2} \left[\frac{1}{2\pi\ii} \oint_{\Gamma_1} (\mu-\lambda)^{-1}\dif{\lambda}\right] S_{\mu}^{-1} \dif{\mu}\,.} Now, for the contour integrals inside the square brackets, the first one is equal to $1$ and the second one vanishes, due to the ways we construct the contours, and hence the result. Using \cref{eq:KGK is K}, we have \eq{P^2&=(KG)^2=(KGK)G=KG=P \\ Q^2&=(GK)^2=G(KGK)=GK=Q \,.} Thus $P$ and $Q$ are idempotents.
			
			Note the partitions \cref{eq:stummel decomposition} are equivalent to the expressions \eq{AP=QA,\quad GP=QG\,.} Here $GP=QG$ readily follows from the construction. For the other one, we need the identity \eql{AS_\lambda^{-1}G=GS_{\lambda}^{-1}A\,. \label{eq:ASinvG is GSinvA}} Indeed, we have \eq{AS_{\lambda}^{-1}G=(A+\lambda G)S_{\lambda}^{-1}G - \lambda G S_\lambda^{-1}G = G-GS_\lambda^{-1}(A+\lambda G) + GS_\lambda^{-1}A =  GS_\lambda^{-1}A \,.} Using \cref{eq:ASinvG is GSinvA}, it follows that \eq{AP=\frac{1}{2\pi \ii}\oint_{\bS^1}AS_\lambda^{-1}G\dif{\lambda} = \frac{1}{2\pi \ii}\oint_{\bS^1}GS_\lambda^{-1}A\dif{\lambda}=QA\,.}
			
			To show the invertibility for \cref{eq:stummel invertible in disk} and \cref{eq:stummel invertible outside disk}, we construct the inverse operators explicitly. The inverse is $S_\lambda^{-1}$ for $\lambda\in\bS^1$, and outside the unit circle, naturally we consider \eql{T_\lambda = \frac{1}{2\pi \ii}\oint_{\bS^1} (\lambda-\mu)^{-1}S_\mu^{-1}\dif{\mu} \qquad (\lambda\notin \bS^1)\,. \label{eq:the inverse T lambda}} First of all, $T_\lambda$ decomposes as \eq{T_\lambda = \begin{bmatrix} (T_\lambda)_{11} & 0 \\ 0 & (T_\lambda)_{22}\end{bmatrix}: \ker Q\oplus \im Q \to \ker P\oplus \im P} so that $T_\lambda$ can serve as the inverse for $S_\lambda$ according to the grading \cref{eq:stummel grading}. To verify the decomposition, it is equivalent to show \eq{T_\lambda Q=PT_\lambda \qquad (\lambda\notin \bS^1)\,.} Indeed, using the generalized resolvent identity \cref{eq:generalized resolvent identity}, we have \eq{T_\lambda Q &= \left[\frac{1}{2\pi \ii}\oint_{\bS^1}(\lambda-\mu)^{-1} S_\mu^{-1} \dif{\mu}\right]\left[ \frac{1}{2\pi \ii} \oint_{\bS^1} GS_\zeta^{-1}\dif{\zeta}\right] \\ &=\frac{1}{(2\pi \ii)^2}\oint_{\bS^1}\oint_{\bS^1} (\lambda-\mu)^{-1}S_\mu^{-1}  GS_\zeta^{-1}\dif{\mu}\dif{\zeta} \\ &= \frac{1}{(2\pi \ii)^2}\oint_{\bS^1}\oint_{\bS^1} (\lambda-\mu)^{-1}(\zeta-\mu)^{-1}(S_\mu^{-1} -S_\zeta^{-1})\dif{\mu}\dif{\zeta} \\ &= \frac{1}{(2\pi \ii)^2}\oint_{\bS^1}\oint_{\bS^1} (\lambda-\mu)^{-1}(\mu-\zeta)^{-1}( S_\zeta^{-1}-S_\mu^{-1})\dif{\mu}\dif{\zeta} \\ &=\frac{1}{(2\pi \ii)^2}\oint_{\bS^1}\oint_{\bS^1} (\lambda-\mu)^{-1} S_\zeta^{-1}GS_\mu^{-1} \dif{\mu}\dif{\zeta} \\ &= \left[\frac{1}{2\pi \ii}\oint_{\bS^1} S_\zeta^{-1} G\dif{\zeta}\right]\left[ \frac{1}{2\pi \ii} \oint_{\bS^1} (\lambda-\mu)^{-1} S_\mu^{-1} \dif{\mu}\right] = PT_\lambda \,.} Finally, using \eq{S_\lambda=(\lambda-\mu+\mu)G+A=(\lambda-\mu)G+S_\mu} we compute \eq{T_\lambda S_\lambda &= \frac{1}{2\pi \ii}\oint_{\bS^1}(\lambda-\mu)^{-1} S_\mu^{-1} \left[(\lambda-\mu)G+S_\mu\right] \dif{\mu} \\ &= \frac{1}{2\pi \ii}\oint_{\bS^1} S_\mu^{-1}G\dif{\mu} - \frac{1}{2\pi \ii}\oint_{\bS^1} (\mu-\lambda)^{-1}\dif{\mu} = \begin{cases}P & \lambda\in \DD \\ -P^\perp & \lambda \in\CC\setminus \DD\end{cases}\,.} 
			Similarly, we have \eq{S_\lambda T_\lambda = \begin{cases}Q & \lambda\in \DD \\ -Q^\perp & \lambda \in\CC\setminus \DD\end{cases}} The invertibility of $G_{22}:\im P\to \im Q$ heuristically follows by taking $\lambda=\infty$ in \cref{eq:stummel invertible outside disk}. In fact, its inverse is exactly the operator \cref{eq:stummel auxilary operator}. This follows from the identities $KQ=PK$, $KG=P$ and $GK=Q$. Here \eq{KQ=K(GK)=(GK)K=Q\,.}
		\end{proof}

		\begingroup
		\let\itshape\upshape
		\printbibliography
		\endgroup
	\end{document}